\tikzset{phase/.style = {draw,fill,shape=circle,minimum size=5pt,inner sep=0pt},crossx/.style={path picture={ 
\draw[thick,black,inner sep=0pt]
(path picture bounding box.south east) -- (path picture bounding box.north west) (path picture bounding box.south west) -- (path picture bounding box.north east);
}}, cross/.style={path picture={ 
\draw[thick,black](path picture bounding box.north) -- (path picture bounding box.south) (path picture bounding box.west) -- (path picture bounding box.east);
}}, not/.style={draw,circle,cross,minimum width=0.3 cm}}
\newtheorem{theorem}{Theorem}
\newtheorem{lemma}[theorem]{Lemma}
\newtheorem{proposition}[theorem]{Proposition}
\newtheorem{definition}[theorem]{Definition}
\newtheorem{remark}[theorem]{Remark}
\providecommand{\keywords}[1]{\textbf{\textit{Keywords:}} #1}
\long\def\symbolfootnote[#1]#2{\begingroup\def\thefootnote{\fnsymbol{footnote}}\footnote[#1]{#2}\endgroup}
\begin{document}

\title{\vspace{-1.3cm}Polar Codes for Quantum Reading}
\author[1,2]{Francisco Revson F. Pereira\footnote{revson.ee@gmail.com}}
\author[1,2]{Stefano Mancini}
\affil[1]{School of Science and Technology, University of Camerino, I-62032 Camerino, Italy}
\affil[2]{INFN, Sezione di Perugia, I-06123 Perugia, Italy\vspace{-1cm}}

\date{}

%
\maketitle              

\begin{abstract}

Quantum reading provides a general framework where to formulate the statistical discrimination of quantum channels. 
Several paths have been taken for such a problem. However, there is much to be done
in the avenue of optimizing channel discrimination using classical codes.
At least two open questions can be pointed to: how to construct low complexity encoding schemes that are interesting for channel 
discrimination and, more importantly, how to develop capacity-achieving protocols. 
The aim of this paper is to present a solution to these questions using polar codes. 
Firstly, we characterize the rate and reliability of the channels under polar encoding. 
We also show that the error probability of the scheme proposed decays exponentially with respect to the code length. 
Lastly, an analysis of the optimal quantum states to be used as probes is given.

\keywords{Quantum Reading, Polar Codes, Capacity-Achieving Protocols.}\newline
\end{abstract}

\section{Introduction}
\label{sec:Introduction}
\noindent

Quantum hypothesis testing aims to identify strategies in order to statistically discriminate quantum states or processes. The former is called quantum state 
discrimination and has been largely analyzed in the literature (starting from Refs. \cite{Helstrom1969,Holevo1973}). The latter, under the name of quantum 
channel discrimination, has been recently addressed
\cite{Childs2000,Acin2001,Gilchrist2005,Sacchi2005,Sacchi2005a,Wang2006,Duan2009,Hayashi2009,Harrow2010}.
In its basic formulation, one has to identify a quantum channel selected from a set accordingly to a probability measure. This should be done by using 
suitable input state and output measurement. As such, it is a double-optimization problem and hence results in a daunting task. Performance are usually 
quantified in terms of minimum error probability and recently bounds on it were found for general strategies 
\cite{Pirandola2019,Katariya2020}. Although theoretically subtle, quantum channel discrimination is interesting for practical applications. 
For instance, it is at the basis of the decoding procedure of two-way quantum cryptography \cite{Pirandola2008} where the secret 
information is encoded in a Gaussian ensemble of phase-space displacements. It also appears in the quantum illumination of targets 
\cite{Lloyd2008,Tan2008}, where the sensing of a remote low-reflective object in a bright thermal environment corresponds to the binary 
discrimination between a lossy channel (presence of target) and a depolarizing channel (absence of target). Following this line, quantum 
channel discrimination can be reformulated in the framework of quantum reading~\cite{Pirandola2011}.
There, the data storage corresponds to a process of channel encoding, where information is recorded into a cell by storing a quantum channel 
picked from a given ensemble. Then readout corresponds to the process of channel decoding, which is equivalent to discriminating 
between the various channels in the ensemble. In such a setting, using a quantum resource, such as entanglement, was shown 
to overcome any classical strategy based on mixtures of coherent states, hence the name of ``quantum'' reading. 

Efficient paths to quantum reading can be also envisaged by the use of coding techniques \cite{Pirandola2011a,Das2019}. There are at least
two possible approaches: classical coding on the quantum memory cell labels and quantum coding 
on the probe states used in the readout. This paper takes the former one. Similar to channel coding, where redundancy is added to protect the information 
to be transmitted, we use classical codes to encode the information to be recorded in the memory cell. The goal is to decrease the error probability in discriminating 
the channels in the ensemble. 
There are several classical codes that one could use for quantum reading, depending on the goal in mind. In this work, we are interested in 
information-theoretically provable codes that could be used to attain high information rates and low error probability. Therefore, suitable codes can be 
derived from the family of polar codes.

Creating capacity-achieving codes has been a challenge since the development of Classical Information Theory. Even more difficult 
was to show there exist theoretically-proven such codes. Fortunately, developing the idea of information combining, Arikan was able to 
show that for binary memoryless symmetric (BMS) channels such codes exist~\cite{arikan09}. They are called polar codes. 
The major achievement of Arikan's paper is to show a clever way to combine information, which leads to new 
synthetic channels manifesting a polarization phenomenon.
They are commonly divided into two groups named ``good'' and ``bad'' channel\footnote{The reason for such names is due to the capacity of 
these new channels been close to its maximum value and close to zero, respectively.}. Furthermore, the fractions of good and bad channels are related to the 
capacity of the original channel in consideration. Encoding and decoding schemes take advantage of these properties to attain low complexity. 
It is worth mention that polar code construction highly depends on the channel into consideration.

Several papers extend or apply the idea of polarization in a diversity of channels and areas~\cite{sasoglu09,arikan10,sahebi11,honda13}.
In particular, polar codes have also been constructed to classical-quantum channels~\cite{Wilde2013}, and quantum channels~\cite{Wilde2013a,Renes2015}. 
However, this is not the case when one considers quantum reading, which is the goal of this paper.
The distribution of the channels to be discriminated can be symmetric or asymmetric. Asymmetric quantum reading can be seen as a generalization of the 
former quantum reading and this formulation can be justified by the measurement strategy implemented in the decoding process or energy constrain 
imposed over the channels. Since these two conditions are plausible hypothesis to be considered in classical digital memories, which are the main goal of channel model used in this paper, we will formulate our results for asymmetric quantum reading and, where it makes necessary, an adaptation for symmetric quantum reading is given. 
Therefore, our polar code will be design to, primarily, asymmetric quantum channels. 
In the classical paradigm, this has been done by Sutter, et al. \cite{Sutter2012}, and Honda and Yamamoto \cite{honda13}. Some of the ideas used in Ref.~\cite{honda13} are applied in this paper. However, as we will see, the transition from the classical to quantum paradigm is subtle and major changes are needed to be done to derive our construction. 
This is particularly true for the measurement process implemented in the decoding part. 
An adequately definition for rate and reliability is also needed. 

The analysis of quantum memory cell in this paper is twofold. Initially, polar coding is applied on the labels of the quantum memory cell in order to decrease 
the error probability in distinguishing them. In this direction, we show that the polarization phenomenon can be characterized in quantum memory cell by 
studying its composing parts; namely, channel combining and channel splitting. The former constitutes a systematic
approach to combine source bits so the polarization phenomenon emerges. An encoding matrix, sometimes also called combining function, describing the process is given in this paper.
The latter is an information-theoretically analysis of the synthesized channels 
created in the channel combining part. Initially, the first level encoding process combining two quantum channels is analyzed. We show that the rate and reliability 
of the synthesized channels polarize. This is later used in the asymptotic analysis. A connection between rate and reliability is also presented, showing that rate is inversely 
proportional to reliability. Next, our first major result is given. It characterizes the asymptotic distribution of the synthesized channel with respect to the length code. 
This result is based on proving the existence, shown in this paper, of a symmetric quantum channel that can be used to obtain the statistics of the (asymmetric) quantum channel of interest.
Lastly, in this first part, we draw the encoding and decoding scheme. The strategy used for encoding the frozen bits does not follow the scheme used in previous works on
polar codes for classical-quantum channels. Additionally, examining the error probability obtained after the decoding scheme, 
we see that it decays exponentially with respect to the length code. The second part addresses the optimization problem of probe states. As a first-order approximation of classical digital memories, 
we consider amplitude damping channels as our channel model. The fundamental result of this part 
is showing that the optimal probe states are pure states. Even though we have taken into consideration just the first level polar encoding, it seems satisfactory supposing this 
can be extended to any $N$-level polar encoding.

This paper is organized as follows. In Section~\ref{sec:Preliminaries} we present previous results on polar codes for classical-quantum channels, 
and some important definition used to characterize quantum memory cells. This section constitutes the fundamentals that our work in based on. 
Next, the main results are shown in Section~\ref{sec:cCodingScheme}. Encoding and decoding schemes are described and analyzed in detail, showing 
that the use of polar coding is also interesting for quantum memory cells. The following subject treated in this work is the optimization of 
states used to probe the channels. This is drawn in Section~\ref{sec:inputDependenceAnalysis}. 
Lastly, we draw our conclusions and some final remarks in Section~\ref{sec:Conclusion}.

\subsection{Notation}

We denote classical random variables as $X$, $Y$, $U$, whose realizations are elements of the 
finite sets $\mathcal{X}$, $\mathcal{Y}$, $\mathcal{U}$, respectively. The probability distributions are respectively represented by 
$p_X(x)$, $p_Y(y)$, and $p_U(u)$ for the random variables $X$, $Y$, $U$. In particular, $X$ is always assumed to be discrete and, in some parts of the text, to be a Bernoulli 
random variable with values in $\{0, 1\}$ and $P_X(0) = p$ by using the notation $X\sim \text{Ber}(p)$. 
For such random variables, its binary Shannon entropy is defined as $H(X) = h(p):=-p\log(p)-(1-p)\log(1-p)$.
The use of subscript and superscript on a letter indicates a sequence starting with the element denoted by the subscript and ending with the element denoted by the superscript, 
e.g. $X_{i}^{j}$ (with $j> i$) is the sequence of random variables $X_{i},X_{i+1},\ldots,X_{j}$. $X^{N}$ simply stands for the sequence $X_{1},X_{2},\ldots,X_{N}$.
A classical memoryless channel is written as 
$W\colon\mathcal{X}\rightarrow\mathcal{Y}$ from the input alphabet $\mathcal{X}$ to the output alphabet $\mathcal{Y}$. We write $W^N$ to 
denote $N$ uses of the channel $W$; i.e., $W^N\colon\mathcal{X}^N\rightarrow\mathcal{Y}^N$ with $W^N(y^N|x^N) = \prod_{i=1}^N W(y_i|x_i)$ 
since $W$ is memoryless.
Quantum systems $A$, $B$, and $C$ correspond to Hilbert spaces 
$\mathcal{H}_A$, $\mathcal{H}_B$, and $\mathcal{H}_C$. The notation $A^N:= A_1A_2\cdots A_N$ denotes a joint system consisting of $N$ subsystems, each of which is isomorphic to $\mathcal{H}_A$. Let $\mathcal{L}(\mathcal{H}_A)$ denote the algebra of bounded linear operators acting on a 
Hilbert space $\mathcal{H}_A$. The subset $\mathcal{L}_+(\mathcal{H}_A)$ of $\mathcal{L}(\mathcal{H}_A)$ denotes the set of all positive semi-definite operators. A special and important class of operators in $\mathcal{L}_+(\mathcal{H}_A)$ is the one containing density operators 
$\mathcal{D}(\mathcal{H}_A)$. A density operator $\rho_A\in\mathcal{D}(\mathcal{H}_A)$ is a positive semi-definite operator with unit trace, 
$\text{Tr}\{\rho_A\} = 1$, and represents the state of a quantum system $A$. Lastly, a quantum channel $\mathcal{W}$ is a linear completely positive trace-preserving map from 
$\mathcal{L}(\mathcal{H}_A)$ to $\mathcal{L}(\mathcal{H}_B)$.


\section{Preliminaries}
\label{sec:Preliminaries}
This section is devoted to introducing the main concepts of polar codes and quantum reading. 
We begin with a brief overview of binary polar codes and their desirable attributes. They can be used to 
achieve the capacity of discrete memoryless channels (DMC); there are encoding and decoding efficient schemes; and it is possible to reach 
error probabilities that decay exponentially in the square root of the blocklength. 
In the following, quantum reading problematic is introduced. The general concept is given, followed by the channel model 
adopted for analyzing probe states. In the end, we present the quantities assumed as rate and reliability.

\subsection{Classical-Quantum Polar Codes}
\label{sec:polarCodes}

Let $W\colon \mathcal{X}\rightarrow\mathcal{D}(\mathcal{H}_B)$ be a binary-input memoryless classical-quantum channel (cq channel) with input and 
output alphabets given by $\mathcal{X}$ and $\mathcal{D}(\mathcal{H}_B)$, respectively. Suppose the random variable $X$ is $\text{Ber}(\frac{1}{2})$. 
Then the transition probabilities can be derived from the joint input-output state 
\begin{equation}
\rho^{XB} = \frac{1}{2}\ketbra{0}{0}^X\otimes\rho_0^B + \frac{1}{2}\ketbra{1}{1}^X\otimes\rho_1^B,
\end{equation}where $\rho_{x}^B = W(x)$, for $x=0,1$. Notice that we are using superscripts to emphasize the system to which each part belongs. 
Whenever this is clear from context, they will be omitted. One important characterization can be given to this cq channel:
since the possible outcomes are uniformly distributed over all possible labels $x\in\mathcal{X}$, 
we say that the cq channel $W$ is symmetric. Similarly, a cq channel is called asymmetric when this does not happen. Though in this subsection we are going 
to deal with a symmetric channel, the rest of the text adopts asymmetric channel in order to provide a more general formulation. As a last comment to be made, 
we adopt as input alphabet $\mathcal{X} = \mathbb{Z}_2$, and arbitrary finite-dimensional output 
alphabet $\mathcal{D}(\mathcal{H}_B)$ and transition probability. This choice of input alphabet allows us to operate with their elements; in particular, 
we can use XOR operations, or sum$\mod 2$ operations, with the elements of the input alphabet. An appropriate combination of these operations 
leads to channel polarization.

Channel polarization consists mainly of two parts. The first one is named channel combining, which describes a method of combining inputs of 
$N$ cq channels. The second is channel splitting. This part is an information-theoretical analysis of new inputs and outputs that the channel combining 
produces. These new inputs and outputs generate synthesized channels. With a careful examination the synthesized channels, it is possible to show 
that, for an arbitrarily large number of them, they fit into two sets called good and bad channels. The statistical behavior of them gives the 
desirable attributes of polar codes~\cite{arikan09,arikantelatar09,honda13,Wilde2013,Wilde2013a,Renes2015}: they achieve the capacity when used for transmitting information 
over a cq channel; they can be encoded efficiently (with a complexity that is essentially linear in the blocklength); the error probability of 
the decoder decays exponentially in the square root of the blocklength. A descriptive explanation of channel polarization, synthesized channels, 
and some attributes of polar codes are given below.

Suppose there are $N$ copies of a cq channel $W$, which we denote by $W^N$, and $N$ realizations $u^N$ of a random variable $U$
representing the source. 
The general formulation of polar codes consists of applying a composition function to the input $u^N$, traditionally represented by a matrix 
$G_N$, and using the output of the composition function, defined as $x^N$, as the actual inputs for the channels $W^N$. This general procedure is 
called channel combining, and a scheme is shown in Fig.~\ref{fig:general_polar_coding}. Observe that this is the same scheme used in any linear channel coding; 
the type of the composition function is what determines the coding scheme to generate a polar code. 
For a particular example of coding scheme and composition function, with $N = 2$, see Fig.~\ref{fig:polar_coding}.

\begin{figure}[h!]
\begin{center}
\includegraphics[width=0.4\linewidth]{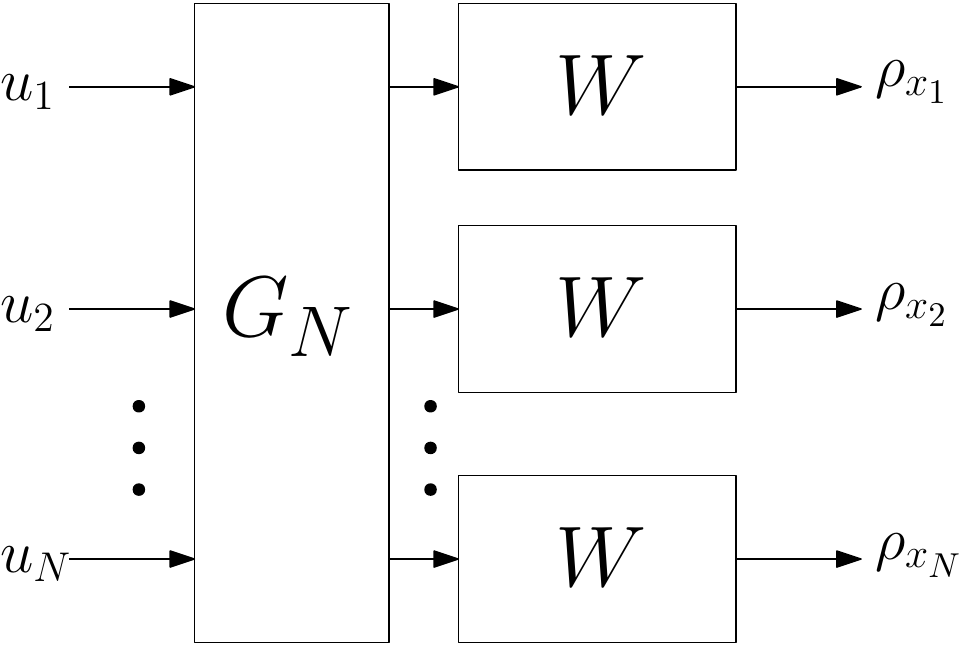}
\end{center}
\caption{General polar encoding scheme for a classical-quantum channel. The matrix $G_N$ represents the composition function applied on the input $u^N$ 
and resulting in the vector $x^N = u^N G_N$. The density operator $\rho_{x^N} = \rho_{x_1}\otimes\cdots\otimes\rho_{x_N}$ is the output $W^N(x^N)$.}%
\label{fig:general_polar_coding}%
\end{figure}

\begin{figure}[h!]
\begin{center}
\includegraphics[width=0.4\linewidth]{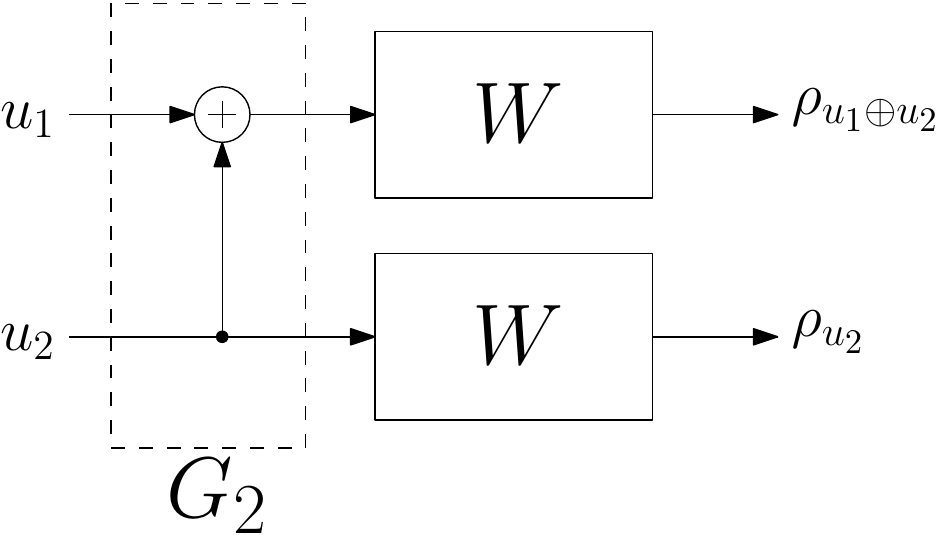}
\end{center}
\caption{Polar encoding scheme for $N = 2$. The choice of the composition function characterizes the encoding scheme to be polar.}%
\label{fig:polar_coding}%
\end{figure}

Now, we can introduce channel splitting and the polarization phenomenon that emerges from it. For $i\in\{1,\ldots,N\}$, 
we define the $i$-th synthesized channel $W_N^{(i)}$ with input alphabet $\mathcal{U}$ and output $\mathcal{D}(\mathcal{H}_{U_1^{i-1}}\mathcal{H}_{B^N})$ as

\begin{eqnarray}
W_N^{(i)}(u_i) &=& \sum_{u_1^{i-1}}\frac{1}{2^{i-1}}\ketbra{u_1^{i-1}}{u_1^{i-1}}^{U_1^{i-1}}\otimes\overline{\rho}_{u_1^{i-1}}^{B^N},\\
\overline{\rho}_{u_1^{i-1}}^{B^N} &=& \sum_{u_{i+1}^N}\frac{1}{2^{N-i}}\rho_{u^N G_N}^{B^N},
\end{eqnarray}where $N$ is a power of two and $G_N$ is the composition function. This formulation comes from analyzing a successive decoder acting on the output channel 
with the help of a genie. For more details, have a look at the original formulation by Arikan~\cite{arikan09} or its extension to classical-quantum channels in Ref.~\cite{Wilde2013}.

There are two important quantities used to quantify the polarization phenomenon and error probability decay in polar coding. These quantities are called rate and reliability. 
Their definition depends on the channel under consideration. For a classical-quantum channel $W$, rate is defined as the mutual information $I(X;B)$ and we denote it as $I(W)$. 
Reliability is adopted as the fidelity between the possible channel outputs, i.e., reliability is given by $F(\rho_0,\rho_1) = ||\sqrt{\rho_0}\sqrt{\rho_1}||_1$, where $||A||_1$
is the Schatten 1-norm of a operator $A\in\mathcal{L}(\mathcal{H}^B)$. It is possible to show that these quantities are inversely proportional to each other in the sense that when 
one has a value close to its maximum, the other has a value close to zero~\cite{Wilde2013}. Some interesting results are obtained studying the rate and reliability over channels 
produced in the channel splitting part. In particular, for $N$ sufficiently large, it is shown that the channels $W_N^{(i)}$ are divided in two sets: 
one set with $I(W_{N}^{(i)})$ close to unit, these channels are called ``good''; and a set with $I(W_{N}^{(i)})$ close to zero, called ``bad'' channels. 
We denote by $\mathcal{A}$ the set with indexes labeling good channels.
Analyzing how the fraction of good and bad channels grow when $N$ goes to infinity, it is shown the polarization phenomenon and a capacity-achieving coding strategy for polar codes. 
See the following proposition for a formal formulation of the exposed ideas.

\begin{proposition}\cite[Thm. 2, Prop. 4]{Wilde2013}
Let $W$ be a classical-quantum channel. Then the following is true:
\begin{enumerate}
	\item The channels $\{W_N^{(i)}\}$ polarize in the sense that, for any $\delta\in(0,1)$, as $N$ goes to infinity through powers 
	of two, the fraction of indexes $i\in\{1,2,\ldots,N\}$ for which $I(W_{N}^{(i)})\in(1-\delta,1]$ goes to $I(W)$ and the fraction for 
	which $I(W_N^{(i)})\in[0,\delta)$ goes to $1-I(W)$;
	\item For any choice of parameters $(N,K,\mathcal{A})$ for a classical-quantum polar code, the probability of error is bounded above by
	\begin{equation}
	P_e(N,K,\mathcal{A})\leq 2\sum_{i\in\mathcal{A}}\sqrt{F(W^{(i),0},W^{(i),1})}.
	\end{equation}In particular, for any fixed $R = K/N< I(W)$ and $\beta<1/2$, block error probability for polar coding under sequential  
	decoding satisfies
	\begin{equation}
	P_e(N,K)= o(2^{-N^\beta}),
	\end{equation}where $o(\cdot)$ is the little-O notation from complexity theory.
\end{enumerate}
\label{Proposition:Preliminaries}
\end{proposition}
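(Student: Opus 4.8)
The plan is to reproduce, in the classical-quantum setting, Arikan's original polarization argument, treating the two parts of the statement separately and relying on the rate and reliability quantities $I(W)$ and $F(\rho_0,\rho_1)$ introduced above. For the first part I would encode the recursive construction as a stochastic process on a binary tree. Let $(B_1,B_2,\ldots)$ be i.i.d.\ uniform bits that select, at each level of the recursion, which of the two synthesized channels to follow, and set $I_n := I(W_{2^n}^{(B_1\cdots B_n)})$. The crucial structural fact is the conservation law $I(W_N^{(2i-1)}) + I(W_N^{(2i)}) = 2\,I(W_{N/2}^{(i)})$, which follows from the chain rule for the quantum (Holevo) mutual information applied to the single-step combining map. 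This identity shows that $\{I_n\}$ is a bounded martingale, so by the martingale convergence theorem $I_n \to I_\infty$ almost surely and in $L^1$, and in particular $\mathbb{E}[I_\infty] = I_0 = I(W)$.

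Next I would argue that the limit is almost surely extremal. Since $I_n \to I_\infty$, the increment $|I_{n+1}-I_n|\to 0$ almost surely. The single polarization step strictly enhances the information in one branch and strictly degrades it in the other unless the channel is already essentially perfect or useless; to make this quantitative I would invoke the quantum rate--reliability inequalities relating $I(W)$ to the fidelity $F(\rho_0,\rho_1)$ (the cq analogues of Arikan's $I$--$Z$ bounds). These force $|I_{n+1}-I_n|\to 0$ to drive the reliability to $0$ or $1$, hence $I_\infty\in\{0,1\}$ almost surely. Combining this with $\mathbb{E}[I_\infty]=I(W)$ yields $\Pr[I_\infty=1]=I(W)$ and $\Pr[I_\infty=0]=1-I(W)$, and translating these probabilities back into fractions of indices $i\in\{1,\ldots,N\}$ produces exactly the stated limits.

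For the second part I would analyze the quantum successive-cancellation decoder that estimates each $U_i$ (for $i\in\mathcal{A}$) from the genie-aided output, treating the frozen positions as known. A union bound over the information set reduces the block error to a sum of per-channel binary discrimination errors, and each such error is controlled by the fidelity of the two candidate output states through the Hayashi--Nagaoka operator inequality; this gives $P_e \le 2\sum_{i\in\mathcal{A}}\sqrt{F(W^{(i),0},W^{(i),1})}$. To upgrade this to $P_e = o(2^{-N^\beta})$ for any $\beta<1/2$, I would use the rate-of-polarization (speed) result: the reliability parameter essentially squares under the good-direction transform, so a large-deviations estimate on the binary process governing its evolution shows that, for any fixed $R<I(W)$, one can choose $\mathcal{A}$ with $K = RN$ consisting of indices whose fidelity lies below $2^{-N^\beta}$, whence the bounding sum is $o(2^{-N^\beta})$.

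The hard part will be the passage from the classical to the quantum setting at two points. First, one must establish the quantum rate--reliability inequalities with the fidelity playing the role of the Bhattacharyya parameter, since these are what pin the martingale limit to $\{0,1\}$ and what link the decoder error to $F$. Second, the error analysis of the successive quantum measurements is delicate: non-commutativity of the measurement operators prevents a direct classical-style union bound and forces the use of the Hayashi--Nagaoka inequality, together with an argument ensuring that the disturbance accumulated by measuring positions $1,\ldots,i-1$ does not spoil the per-step fidelity estimates. I expect the bulk of the technical effort to sit in verifying these two quantum-specific ingredients, after which the martingale and large-deviations machinery carries over essentially verbatim from the classical proof.
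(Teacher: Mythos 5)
This proposition is not proved in the paper at all: it is imported verbatim from Wilde and Guha \cite{Wilde2013}, so your attempt can only be compared against the argument in that cited source (and against the closely related machinery this paper deploys later, in Theorem~\ref{Theorem:ErrorProbDecay}). For part 1 your route is essentially the correct and standard one: the bounded martingale $I_n$ driven by uniform branching bits, convergence via the martingale convergence theorem, extremality of $I_\infty$ forced by cq rate--reliability inequalities with the fidelity playing the role of the Bhattacharyya parameter (together with the one-step rules $F(W^+)=F(W)^2$ and $F(W^-)\le 2F(W)-F(W)^2$), and the Arikan--Telatar large-deviation analysis for the speed of polarization. This matches the cited proof; note only that the exact conservation law $I(W^-)+I(W^+)=2I(W)$ you invoke holds in the uniform-input (symmetric) setting of this proposition, which is consistent with its hypotheses.

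Part 2, however, contains a genuine gap at exactly the point you flag and then defer. A classical-style union bound over the $|\mathcal{A}|$ successive binary measurements is not available, and the Hayashi--Nagaoka operator inequality is the wrong tool: it controls the error of a \emph{single} square-root measurement in one-shot decoding and says nothing about the cumulative disturbance inflicted by measuring positions $1,\ldots,i-1$ before position $i$. The ingredient that makes the sequential analysis rigorous in \cite{Wilde2013} is Sen's non-commutative union bound for projective measurements,
\begin{equation}
\mathrm{Tr}\{\rho\}-\mathrm{Tr}\{\Pi_N\cdots\Pi_1\,\rho\,\Pi_1\cdots\Pi_N\}\;\le\;2\sqrt{\sum_{i=1}^{N}\mathrm{Tr}\{(I-\Pi_i)\rho\}},
\end{equation}
combined with a per-step bound of the form $\mathrm{Tr}\{(I-\Pi_i)\rho\}\lesssim\sqrt{F(W_N^{(i),0},W_N^{(i),1})}$ for the pretty-good-measurement projectors of each synthesized channel; it is this bound that produces the factor $2$ and the fidelity square roots in the stated estimate. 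Indeed, the present paper reuses precisely this mechanism in its own Theorem~\ref{Theorem:ErrorProbDecay}, replacing Sen's bound by the refined quantum union bound of Lemma~\ref{Lemma:QuantumUnionBound} (with coefficients $(1+c)$, $(2+c+c^{-1})$, $(2+c^{-1})$) together with the Barnum--Knill-type measurement of Proposition~\ref{Proposition:PrettyGoodMeas}. Once such a union bound is in place, your concluding step --- choosing $\mathcal{A}$ of size $RN$ with all fidelities below $2^{-N^\beta}$, which is possible by the part-1 speed-of-polarization result --- correctly yields $P_e(N,K)=o(2^{-N^\beta})$.
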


We make use of Proposition~\ref{Proposition:Preliminaries} in the following section to show that polar codes can be applied 
in the task of quantum memory cell discrimination.

\subsection{Quantum Memory Cell}\
\label{sec:quantum_memory}

We now formulate the general description of quantum memory cell used in this paper. A \textit{quantum memory cell} is defined as the set 
$\{\mathcal{W}^x\}_{x\in\mathcal{X}}$ of quantum channels. For a fixed $x$, we have

\begin{eqnarray}
\mathcal{W}^x\colon \mathcal{D}(\mathcal{H}_{B'})&\rightarrow \mathcal{D}(\mathcal{H}_B)\\
\rho&\mapsto \mathcal{W}^x(\rho),
\end{eqnarray}where $\mathcal{D}(\mathcal{H}_{B'}),\mathcal{D}(\mathcal{H}_B)$ are the sets of input and output 
density states of the quantum channel $\mathcal{W}^x$. We call $x\in\mathcal{X}$ the \textit{quantum memory cell index}. 
Sometimes we denote $\mathcal{W}^x$ as $\mathcal{W}_{{B'}\rightarrow B}^x$ to highlight the input and output systems of $\mathcal{W}^x$.
An important hypothesis is given in here. We are supposing that the distribution of the random variable $X$ describing the label of the quantum channels is 
non-uniform. This can be justified by the measurement strategy implemented in the decoding process or energy constrain imposing this distribution. Since 
these two reasoning are plausible in this paper, we are going to adopt, in most parts of this paper, that we are dealing with asymmetric quantum reading. 
The definition of asymmetric quantum reading goes similarly to the definition of asymmetric cq channels. 
A quantum memory cell $\{\mathcal{W}^x\}_{x\in\mathcal{X}}$ is called asymmetric if the possible outcomes are non-uniformly distributed over all possible labels $x\in\mathcal{X}$.
Lastly, notice that minor changes over our results lead to the applicability of them to symmetric quantum reading. 

\begin{remark}
Before introducing the channel model used for numerical analysis in this paper, an important point-of-view over quantum memory cell need to be given. A common approach to 
quantum memory cell is to study how to discriminate their elements using different probe states but having the labels fixed. However, in this paper, we are also interested in improving  
quantum memory cell discrimination by working on the channel labels. This optimization is performed by fixing the probe states used. Along these lines, quantum memory cell can be seen 
as a set of classical-quantum channels. Thus, a capacity-achieving approach using polar codes is possible when one deals with quantum memory cell elements having indexes obeying some rule. 
The rule used in this paper is given by the composition function of polar codes.
\end{remark}

For this paper, we will adopt the \textit{amplitude damping} (AD) channel as our channel model. This is because AD channel is the finite-dimensional 
first-order approximation of bosonic attenuator channel, which is the standard model for digital memory cell. They can be described in the following 
manner. Let $\rho\in \mathcal{D}(\mathcal{H})$ be a single-qubit density state, then the AD channel can be described by the following Kraus expression:
\begin{equation}
\mathcal{W}^x(\rho) = A_0\rho A_0^\dagger + A_1\rho A_1^\dagger,
\label{Eq:ADchannel}
\end{equation}where 

\begin{equation}
A_0 = 
\begin{pmatrix}
1 & 0\\
0 & \sqrt{1-x}
\end{pmatrix},\quad
A_1 = 
\begin{pmatrix}
0 & \sqrt{x}\\
0 & 0
\end{pmatrix}.
\end{equation}Observe that the same notation for general quantum memory cell and AD channel has been used. However, through the paper will be 
clear when we are talking about one or the other.

The goal of the following sections is to show that using polar codes it is possible to attain optimal rate with low reliability. 
This is possible by using synthesized channels formulation and connecting its properties with the original channel under analysis. 
A method for approaching this is firstly introducing the joint input-output density state and characterizing the channel via this 
density state. Thus, lets $X$ be a random variable with probability law $p_X$. 
We can write the joint density state describing the systems $X$ and $B$ as

\begin{equation}
\rho^{XB} = p_X(0)\ketbra{0}{0}^X\otimes\mathcal{W}^0(\rho) + p_X(1)\ketbra{1}{1}^X\otimes\mathcal{W}^1(\rho).
\end{equation}As outlined previously, there are two parameters at the center of polarization phenomenon: rate and reliability. Rate is defined in this work 
as the quantum mutual information between the source $X$ and the output system $B$:

\begin{definition}
\label{def:rate}
Let $X\sim \text{Ber}(p)$ and $\mathcal{W}_{B'\rightarrow B}^x(\rho)$ be a quantum memory cell, where $x\in\mathcal{X}$.
The \textit{rate} of $\mathcal{W}$ is defined as $I(\mathcal{W})_\rho := I(X;B)_\rho$. 
A direct computation of this quantum mutual information shows that 
\begin{equation}
I(\mathcal{W})_\rho = H\Big{(}p\mathcal{W}^0(\rho) + (1-p)\mathcal{W}^1(\rho)\Big{)} - pH(\mathcal{W}^0(\rho)) - (1-p)H(\mathcal{W}^0(\rho)),
\end{equation}where $H(\sigma)$ is the von Neumann entropy for a density operator $\sigma\in\mathcal{D}(\mathcal{H})$.
\end{definition}

For reliability of the quantum memory cell $\mathcal{W}$, it is used the fidelity between the possible output states:

\begin{definition}
\label{def:reliability}
Let $X\sim \text{Ber}(p)$ and $\mathcal{W}_{B'\rightarrow B}^x(\rho)$ be a quantum memory cell, where $x\in\mathcal{X}$.
The \textit{reliability} of the quantum memory cell $\mathcal{W}$ is defined as
\begin{equation}
Z(\mathcal{W})_\rho := 2\sqrt{p(1-p)}F(\mathcal{W}^0(\rho),\mathcal{W}^1(\rho)) = 2\sqrt{p(1-p)}||\sqrt{\mathcal{W}^0(\rho)}\sqrt{\mathcal{W}^1(\rho)}||_1,
\end{equation}where $F(\rho,\sigma) = ||\sqrt{\rho}\sqrt{\sigma}||_1$ is 
the fidelity, and $||A||_1$ is the Schatten 1-norm of a operator $A\in\mathcal{L}(\mathcal{H})$.
\end{definition}

The following section describes the polar coding scheme proposed in this paper and applies the rate and reliability defined above to quantify 
the goodness of the codes created.

\section{Polar Coding Scheme}
\label{sec:cCodingScheme}
In this section, we consider the polarization phenomenon induced by a combining function acting on the quantum memory cells indexes. This approach
is similar to the classical-quantum polarization explained in the previous section. However, significant refinements in the arguments and proofs are needed
for the results to be valid in our situation.

\subsection{Channel Polarization}
	\subsubsection{Channel Combining}
	\label{subsec:channel_combining}

As mentioned before, for a fixed quantum probe state, the elements $\mathcal{W}^x$, for $x\in\mathbb{Z}_2$, of a quantum memory cell can be seen as a classical-quantum channel.
Then, without loss of generality, we will treat them in this form through this section. For an illustrative description of
our coding scheme in conjunction with our hypothesis, see Fig.~\ref{fig:quantum_polar_encoding_scheme_general}. Let $\mathcal{W}$ be a
classical-quantum channel from which we derive an $N$-fold classical-quantum channel $\mathcal{W}_N$ recursively, where $N = 2^n$ with $n\in\mathbb{N}_0$.
The zeroth level of recursion gives solely the channel $\mathcal{W}_1(u) = \mathcal{W}^u(\rho)$, for all
$u\in\mathbb{Z}_2$. The first level is a composition of two zeroth level channels; i.e.,
the classical-quantum channel $\mathcal{W}_2$ is given by

\begin{eqnarray}
\mathcal{W}_2(u_1,u_2) &=& \mathcal{W}_1(u_1\oplus u_2)\otimes \mathcal{W}_1(u_2)\\
&=& \mathcal{W}^{u_1\oplus u_2}(\rho)\otimes \mathcal{W}^{u_2}(\rho).
\end{eqnarray}This is shown in Fig.~\ref{fig:classical-quantum_polar_encoding}.

\begin{figure}[h!]
\begin{center}
\includegraphics[width=0.5\linewidth]{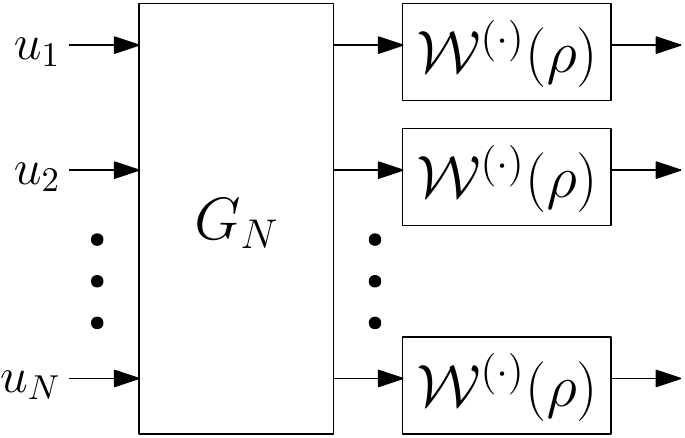}
\end{center}
\caption{Classical polar coding scheme.}%
\label{fig:quantum_polar_encoding_scheme_general}%
\end{figure}

\begin{figure}[h!]
\begin{center}
\includegraphics[width=0.5\linewidth]{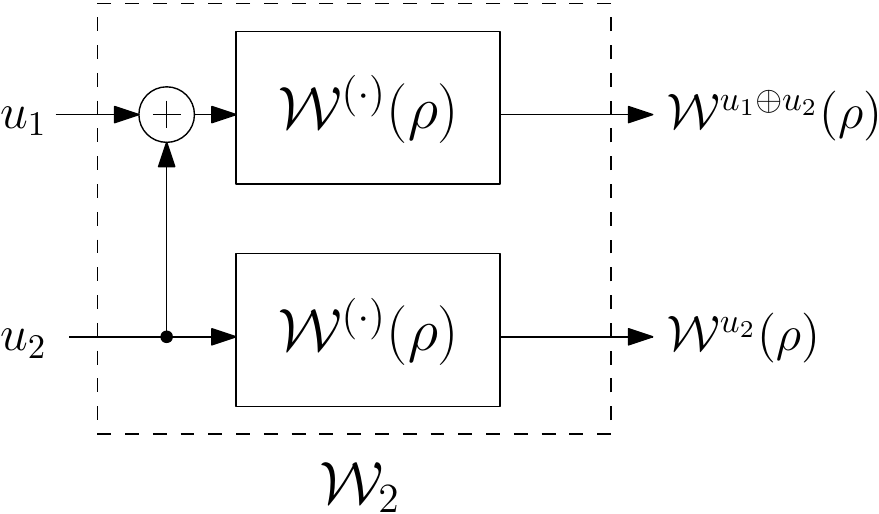}
\end{center}
\caption{Fundamental polar encoding block}%
\label{fig:classical-quantum_polar_encoding}%
\end{figure}

The second level follows from two copies of first level channels. The rule is
\begin{eqnarray}
\mathcal{W}_4(u_1,u_2,u_3,u_4) &=& \mathcal{W}_2(u_1\oplus u_2,u_3\oplus u_4)\otimes \mathcal{W}_2(u_2,u_4)\\
&=& \mathcal{W}^{u_1\oplus u_2\oplus u_3 \oplus u_4}(\rho)\otimes \mathcal{W}^{u_3\oplus u_4}(\rho) \otimes\mathcal{W}^{u_2\oplus u_4}(\rho)\otimes \mathcal{W}^{u_4}(\rho).
\end{eqnarray}We depict this scheme in Fig.~\ref{fig:classical-quantum_polar_encoding_4th-level}.

\begin{figure}[h!]
\begin{center}
\includegraphics[width=0.5\linewidth]{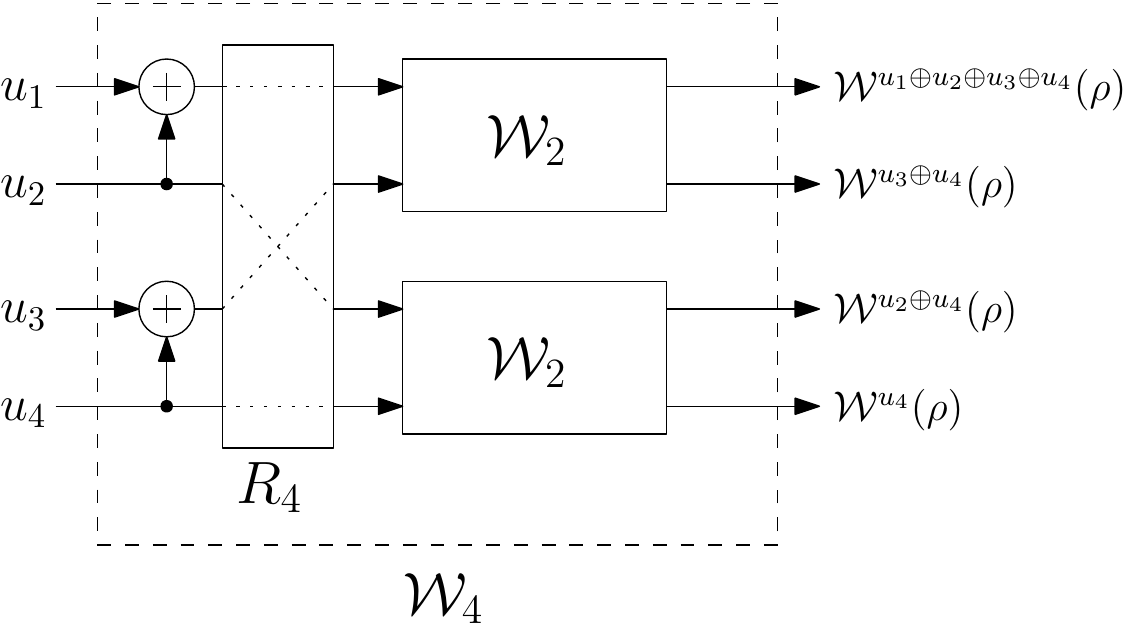}
\end{center}
\caption{Second level of polar coding scheme.}%
\label{fig:classical-quantum_polar_encoding_4th-level}%
\end{figure}

Following the same procedure, we can derive the $n$-level channels. As described in \cite{arikan09}, the matrix $G_N$, which connects
the source output $u^N$ to the channel input $x^N$ by $x^N = u^N G_N$, can be expressed as

\begin{equation}
G_N = R_N F^{\otimes n},
\end{equation}with $F^{\otimes n}$ being the $n$-fold Kronecker product of the matrix

\begin{equation}
F =
	\begin{pmatrix}
 		1 & 0 \\
 		1 & 1
	\end{pmatrix},
\end{equation}and $R_N$ is a permutation matrix known as \textit{bit-reversal} \cite{arikan09}. In particular, the second level combining function is

\begin{equation}
G_4 =
\begin{pmatrix}
 		1 & 0 & 0 & 0 \\
 		1 & 0 & 1 & 0 \\
 		1 & 1 & 0 & 0 \\
 		1 & 1 & 1 & 1
	\end{pmatrix}.
\end{equation}The next step is to characterize the synthesized channels produced by the action of $G_N$. This is the goal of the next subsection.

	\subsubsection{Channel Splitting}
	\label{subsec:channel_splitting}

First of all, consider two realizations of the channel $\mathcal{W}$. A new input-output relation can be generated through the channel combining procedure described before.
These channels are called synthesized since they are not real channels but new point-of-views obtained from the relations created.
We denote this transformation by $(\mathcal{W},\mathcal{W})\rightarrow(\mathcal{W}^-,\mathcal{W}^+)$,
where $\mathcal{W}^-$ and $\mathcal{W}^+$ are the synthesized channels. Fixing each input and examining the corresponding outputs, we come to the definition of
the synthesized channels below
\begin{subequations}
\label{Eq:synthesizedChannels}
\begin{eqnarray}
\mathcal{W}^-&\colon& u_1\in\mathbb{Z}_2\mapsto\mathcal{W}^{-,u_1}(\rho)\in\mathcal{D}(\mathcal{H}_{B_1}\otimes\mathcal{H}_{B_2}),\\
\mathcal{W}^+&\colon& u_2\in\mathbb{Z}_2\mapsto\mathcal{W}^{+,u_2}(\rho)\in\mathcal{D}(\mathcal{H}_{U_1}\otimes\mathcal{H}_{B_1}\otimes\mathcal{H}_{B_2}),
\end{eqnarray}with
\begin{equation}
\mathcal{W}^{-,u_1}(\rho) = \sum_{u_2\in\mathbb{Z}_2}p_U(u_2)\mathcal{W}^{u_1\oplus u_2}(\rho)\otimes\mathcal{W}^{u_2}(\rho)
\end{equation}and
\begin{equation}
\mathcal{W}^{+,u_2}(\rho) = \sum_{u_1\in\mathbb{Z}_2}p_U(u_1)\ketbra{u_1}{u_1}\otimes\mathcal{W}^{u_1\oplus u_2}(\rho)\otimes\mathcal{W}^{u_2}(\rho).
\end{equation}
\end{subequations}Before presenting channel splitting description for $N$ copies of the channel $\mathcal{W}$, some properties of the 2-fold case need to be given.

\begin{proposition}
Consider the transformation $(\mathcal{W},\mathcal{W})\to(\mathcal{W}^-,\mathcal{W}^+)$ for some channels satisfying \ref{Eq:synthesizedChannels}. Then the following rule
holds for the rates:
\begin{eqnarray}
I(\mathcal{W}^-)_\rho + I(\mathcal{W}^+)_\rho &\leq& 2I(\mathcal{W})_\rho\\
I(\mathcal{W}^+)_\rho &\geq& I(\mathcal{W})_\rho.
\end{eqnarray}
\label{Prop:PolarizationInf}
\end{proposition}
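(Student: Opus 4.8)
The plan is to realise both synthesised rates as genuine quantum mutual informations on a single joint state and then exploit the chain rule together with the invertibility of the polar transform $G_2$. Concretely, I would take $U_1,U_2$ to be independent $\mathrm{Ber}(p)$ source bits, set $X_1=U_1\oplus U_2$ and $X_2=U_2$, and form the classical–quantum state on $U_1U_2B_1B_2$ whose $B_1B_2$ register carries $\mathcal{W}^{X_1}(\rho)\otimes\mathcal{W}^{X_2}(\rho)$. By the definitions in \ref{Eq:synthesizedChannels} and Definition~\ref{def:rate} this gives $I(\mathcal{W}^-)_\rho=I(U_1;B_1B_2)$ and $I(\mathcal{W}^+)_\rho=I(U_2;U_1B_1B_2)$, so the whole statement becomes a pair of inequalities about this one state.

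For the first inequality I would collapse the sum into a single mutual information. Since $U_1$ and $U_2$ are independent, $I(U_2;U_1)=0$, so the chain rule yields $I(U_2;U_1B_1B_2)=I(U_2;B_1B_2\mid U_1)$ and hence
\[
I(\mathcal{W}^-)_\rho+I(\mathcal{W}^+)_\rho=I(U_1;B_1B_2)+I(U_2;B_1B_2\mid U_1)=I(U_1U_2;B_1B_2).
\]
Because $(u_1,u_2)\mapsto(x_1,x_2)$ is a bijection on $\mathbb{Z}_2^2$, this equals $I(X_1X_2;B_1B_2)=H(B_1B_2)-H(B_1\mid X_1)-H(B_2\mid X_2)$, where I used that the output factorises conditioned on $(X_1,X_2)$. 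Subadditivity of the von Neumann entropy, $H(B_1B_2)\le H(B_1)+H(B_2)$, then bounds the sum by $I(X_1;B_1)+I(X_2;B_2)$.

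The second inequality is cleaner and I would treat it by monotonicity: discarding the subsystems $U_1B_1$ can only decrease the quantum mutual information, so $I(\mathcal{W}^+)_\rho=I(U_2;U_1B_1B_2)\ge I(U_2;B_2)$, and since $U_2\sim\mathrm{Ber}(p)$ feeds $B_2=\mathcal{W}^{U_2}(\rho)$ directly we have $I(U_2;B_2)=I(\mathcal{W})_\rho$, which is exactly the claim.

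The step I expect to be the real obstacle is closing the first inequality against the target $2I(\mathcal{W})_\rho$. The subadditivity bound delivers $I(X_1;B_1)+I(X_2;B_2)$, and while $X_2=U_2$ is $\mathrm{Ber}(p)$ so that $I(X_2;B_2)=I(\mathcal{W})_\rho$, the minus branch carries $X_1=U_1\oplus U_2\sim\mathrm{Ber}\!\left(p^2+(1-p)^2\right)$, whose induced input law differs from the design distribution $\mathrm{Ber}(p)$; consequently $I(X_1;B_1)$ is not literally $I(\mathcal{W})_\rho$. In the symmetric case $p=\tfrac12$ both marginals are uniform, $B_1$ and $B_2$ become independent, subadditivity is tight, and the chain collapses to the exact conservation law $I(\mathcal{W}^-)_\rho+I(\mathcal{W}^+)_\rho=2I(\mathcal{W})_\rho$. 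The asymmetric regime is therefore precisely where the argument must be done with care, and I would expect the bulk of the work—and the point at which the hypotheses on $p_U$ and on the probe state $\rho$ enter—to lie in controlling this mismatch between the $\mathrm{Ber}(p)$ and $\mathrm{Ber}\!\left(p^2+(1-p)^2\right)$ input laws on the minus branch.
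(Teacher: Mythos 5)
Your setup coincides with the paper's: the same chain-rule decomposition $I(\mathcal{W}^-)_\rho+I(\mathcal{W}^+)_\rho=I(U_1;B_1B_2)+I(U_2;B_1B_2\mid U_1)=I(X_1X_2;B_1B_2)$, and the identical monotonicity argument for the second inequality, which you do complete correctly. In one respect your write-up is actually sounder than the paper's: the paper's step $(iii)$ asserts equality $I(X_1X_2;B_1B_2)=I(X_1;B_1)+I(X_2;B_2)$ ``by independence of $X_1$ and $X_2$,'' but $X_1=U_1\oplus U_2$ and $X_2=U_2$ are correlated whenever $p\neq\tfrac12$; your subadditivity step $H(B_1B_2)\le H(B_1)+H(B_2)$ gives the inequality $I(X_1X_2;B_1B_2)\le I(X_1;B_1)+I(X_2;B_2)$ without that false premise, the discarded slack being exactly $I(B_1;B_2)$.

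The genuine gap is that the first inequality is never closed: you reduce it to comparing $I(X_1;B_1)$ (at prior $\mathrm{Ber}(p^2+(1-p)^2)$) with $I(\mathcal{W})_\rho$ (at prior $\mathrm{Ber}(p)$), correctly flag this as ``where the bulk of the work lies,'' and stop. That step is the entire content of the paper's proof: it lower-bounds $I(X_2;B_2)$ via \cite[Prop.~1]{Holevo2000}, upper-bounds $I(X_1;B_1)$ by the entropy of an explicit $2\times 2$ matrix built from the fidelity $F(\mathcal{W}^0(\rho),\mathcal{W}^1(\rho))$ via \cite[Thm.~3]{Roga2010}, and then argues that the resulting scalar function $f(p)$ is nonnegative. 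Worse, the route you sketch cannot be completed as stated: the target estimate $I(X_1;B_1)\le I(\mathcal{W})_\rho$ is false in general, because $\bigl|p^2+(1-p)^2-\tfrac12\bigr|=2\bigl(p-\tfrac12\bigr)^2\le\bigl|p-\tfrac12\bigr|$, i.e.\ the prior of $X_1$ is \emph{closer to uniform} than the design prior. Already for a commuting (classical) memory cell with orthogonal outputs one gets $I(X_1;B_1)=h\bigl(p^2+(1-p)^2\bigr)\ge h(p)=I(\mathcal{W})_\rho$, with equality in the proposition restored only by the correlation term $I(B_1;B_2)$ that your subadditivity step threw away. So any completion must either retain that term or, as the paper attempts, deploy channel-dependent quantitative bounds; a purely structural argument of the kind you outline cannot finish the job.
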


\begin{proof}
For the first statement, it is easy to see that
\begin{subequations}
\begin{eqnarray}
I(\mathcal{W}^-)_\rho + I(\mathcal{W}^+)_\rho &=& I(U_1;B_1 B_2) + I(U_2;B_1B_2U_1)\\
&\overset{(i)}{=}& I(U_1;B_1 B_2) + I(U_2;B_1B_2|U_1)\\
&\overset{(ii)}{=}& I(X_1 X_2;B_1 B_2)\\
&\overset{(iii)}{=}& I(X_1;B_1) + I(X_2;B_2) = I(X_1;B_1) + I(\mathcal{W}).
\end{eqnarray}
\end{subequations}The equality in $(i)$ follows from $U_1$ and $U_2$ being independent. $(ii)$ is derived from the chain rule for mutual information and the existence of a one-to-one function from
$U_1,U_2$ to $X_1,X_2$. Lastly, the independence between $X_1$ and $X_2$ is applied in $(iii)$. Now, notice that we are dealing with a (possibly non-uniform) Bernoulli random
variable $X_1$, which implies $I(X_1;B_1)\neq I(\mathcal{W})$ in general.
From $X_1 = U_1+U_2$ and $X_2=U_2$, we have that $X_1$ is $\text{Ber}(p^2 + (1-p)^2)$ and $X_2$ is $\text{Ber}(p)$. Then, it is possible to bound $I(X_2;B_2) - I(X_1;B_1)$
by the following inequality
\begin{eqnarray}
I(X_2;B_2) - I(X_1;B_1) &\geq& -\log\{p^2+(1-p)^2\} - \log\{p(1-p)\}-\log\text{Tr}\{\sqrt{\mathcal{W}^0(\rho)}\sqrt{\mathcal{W}^1(\rho)}\}\nonumber\\
&& -2\sqrt{(1-F(\mathcal{W}^0(\rho),\mathcal{W}^1(\rho))^2)2p(1-p)(p^2+(1-p)^2)}.
\label{Eq:diff_qmi}
\end{eqnarray}We have used the bounds
\begin{equation}
I(X_2;B_2)\geq -\log\text{Tr}\{(p_{X_2}(0)\sqrt{\mathcal{W}^0(\rho)} + p_{X_2}(1)\sqrt{\mathcal{W}^1(\rho)})^2\},
\end{equation}from \cite[Prop. 1]{Holevo2000}, and
\begin{subequations}
\begin{eqnarray}
I(X_1;B_1) &\leq& H(\sigma) = h\Big{(}\frac{1}{2}(1-\sqrt{1-4(1 - F(\mathcal{W}^0(\rho),\mathcal{W}^1(\rho))^2)2p(1-p)(p^2+(1-p)^2)})\Big{)}\\
&\leq& 2\sqrt{(1 - F(\mathcal{W}^0(\rho),\mathcal{W}^1(\rho))^2)2p(1-p)(p^2+(1-p)^2)},
\end{eqnarray}
\end{subequations}where
\begin{equation}
\sigma =
\begin{pmatrix}
p_{X_2}(0) & \sqrt{p_{X_2}(0)p_{X_2}(1)}F(\mathcal{W}^0(\rho),\mathcal{W}^1(\rho))\\
\sqrt{p_{X_2}(0)p_{X_2}(1)}F(\mathcal{W}^0(\rho),\mathcal{W}^1(\rho)) & p_{X_2}(0)
\end{pmatrix},
\end{equation}from \cite[Thm. 3]{Roga2010}.
Calling the RHS of Eq.~\ref{Eq:diff_qmi} by $f(p)$ and analyzing its first and second derivative, we can conclude that
$f(p)\geq 0$ and, thus, $I(\mathcal{W}^-)_\rho + I(\mathcal{W}^+)_\rho \leq 2I(\mathcal{W})_\rho$.

The second statement is derived from
\begin{equation}
I(\mathcal{W}^+)_\rho = I(U_2;B_1B_2U_1)\geq I(U_2;B_2)= I(X_2;B_2) = I(\mathcal{W}).
\end{equation}
\end{proof}

\begin{remark}
Proposition~\ref{Prop:PolarizationInf} shows that the polarization phenomenon can also be derived in quantum reading problematic. However, it is worth mention
the importance of the probe state used during the process. As explicitly shown in Section~\ref{sec:inputDependenceAnalysis}, there are probe states that can polarize
``faster'' than other, in the sense that $I(\mathcal{W}^+)_\rho - I(\mathcal{W}^-)_\rho > I(\mathcal{W}^+)_\sigma - I(\mathcal{W}^-)_\sigma$ for some probe states
$\rho$ and $\sigma$. Therefore, even though any probe state generates the polarization phenomenon and, thus, can be used to achieve a nonzero communication rate with arbitrarily
low error probability, it may not be optimal. There may exist another probe state that produces the same results but demanding a polar code with lower length.
\end{remark}

The following proposition shown in Ref.~\cite{Renes2015} relates the reliability of the synthesized channels to the reliability of the original one.

\begin{proposition}\cite[Prop. 4]{Renes2015}
Consider the transformation $(\mathcal{W},\mathcal{W})\to(\mathcal{W}^-,\mathcal{W}^+)$ for some channels satisfying Eq.~\ref{Eq:synthesizedChannels}. Then the following rule
holds for the reliability:
\begin{eqnarray}
Z(\mathcal{W}^+)_\rho &=& Z(\mathcal{W})_\rho^2\\
Z(\mathcal{W}^-)_\rho &\leq& 2Z(\mathcal{W})_\rho - Z(\mathcal{W})_\rho^2.
\end{eqnarray}
\end{proposition}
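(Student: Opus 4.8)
The plan is to reduce both identities to statements about the fidelity $F(\cdot,\cdot)=||\sqrt{\cdot}\sqrt{\cdot}||_1$ of the output states of the synthesized channels. Since both synthesized channels of Eq.~\ref{Eq:synthesizedChannels} inherit the input law $\text{Ber}(p)$ of $U$, Definition~\ref{def:reliability} gives $Z(\mathcal{W}^{\pm})_\rho=2\sqrt{p(1-p)}\,F(\mathcal{W}^{\pm,0}(\rho),\mathcal{W}^{\pm,1}(\rho))$. Writing $\rho_0:=\mathcal{W}^0(\rho)$, $\rho_1:=\mathcal{W}^1(\rho)$ and $F:=F(\rho_0,\rho_1)$, the whole content is therefore the pair of fidelity relations $F(\mathcal{W}^{+,0},\mathcal{W}^{+,1})=F^2$ and $F(\mathcal{W}^{-,0},\mathcal{W}^{-,1})\le 2F-F^2$; the prefactor $2\sqrt{p(1-p)}$ is then carried along (in the symmetric normalization of the cited result it equals unity, and the two displayed recursions follow verbatim).

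For the plus channel I would argue by the block structure in the classical register $U_1$. Both $\mathcal{W}^{+,0}$ and $\mathcal{W}^{+,1}$ are block-diagonal with respect to the orthonormal basis $\{\ketbra{u_1}{u_1}\}$, so their square roots and the product $\sqrt{\mathcal{W}^{+,0}}\sqrt{\mathcal{W}^{+,1}}$ are block-diagonal as well, and the Schatten $1$-norm splits as a sum over the two blocks. The $u_1=0$ block contributes $p\,F(\rho_0\otimes\rho_0,\rho_1\otimes\rho_1)$ and the $u_1=1$ block contributes $(1-p)\,F(\rho_1\otimes\rho_0,\rho_0\otimes\rho_1)$. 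Using homogeneity $F(a\sigma,a\tau)=aF(\sigma,\tau)$ and multiplicativity of the fidelity under tensor products, each block equals its weight times $F^2$, and the weights sum to one, giving $F(\mathcal{W}^{+,0},\mathcal{W}^{+,1})=F^2$ exactly. This is the routine half.

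The minus channel is the crux, because $\mathcal{W}^{-,0}$ and $\mathcal{W}^{-,1}$ are genuine mixtures with no common eigenbasis, so no block decomposition is available and one only expects an inequality. My plan is to reduce the bound to Arikan's classical reliability inequality through the Fuchs--Caves variational formula $F(\sigma,\tau)=\min_{\{M_k\}}\sum_k\sqrt{\text{Tr}(M_k\sigma)\,\text{Tr}(M_k\tau)}$, the minimum running over POVMs. Fix the \emph{optimal} POVM $\{M_k\}$ for the pair $(\rho_0,\rho_1)$, so that the induced classical channel $\widetilde{W}(k|x):=\text{Tr}(M_k\rho_x)$ has Bhattacharyya parameter exactly $F$. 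Measuring the two output systems $B_1,B_2$ of $\mathcal{W}^{-}$ with the product POVM $\{M_k\otimes M_{k'}\}$ turns $\mathcal{W}^{-,0},\mathcal{W}^{-,1}$ into precisely the output distributions of the classical minus-transform $\widetilde{W}^{-}$ of $\widetilde{W}$. Since the variational formula makes any single POVM an upper bound on the fidelity, this yields $F(\mathcal{W}^{-,0},\mathcal{W}^{-,1})\le Z(\widetilde{W}^{-})$, and Arikan's classical bound $Z(\widetilde{W}^{-})\le 2Z(\widetilde{W})-Z(\widetilde{W})^2=2F-F^2$~\cite{arikan09} closes the argument; here it is essential to use the fidelity-achieving POVM, because $z\mapsto 2z-z^2$ is increasing on $[0,1]$ and a suboptimal measurement would only give a weaker estimate.

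The main obstacle I anticipate is this last step: one has to verify that the product measurement reproduces the classical minus channel on the nose, including the correct placement of the $u_1\oplus u_2$ twist, and then invoke the classical inequality $Z(\widetilde{W}^{-})\le 2Z(\widetilde{W})-Z(\widetilde{W})^2$, which is itself a non-trivial algebraic estimate rather than a one-line manipulation. A secondary point to keep honest is the prefactor bookkeeping for $p\neq\tfrac12$: the displayed recursions are written in the normalization where $2\sqrt{p(1-p)}=1$, and outside it one should track the factor explicitly when combining the two fidelity relations with Definition~\ref{def:reliability}.
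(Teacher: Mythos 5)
The paper gives no proof of this proposition at all—it is imported verbatim from \cite{Renes2015}—and your argument is essentially the standard proof of that cited result: block-diagonality in the classical $U_1$ register plus multiplicativity of fidelity under tensor products handles $\mathcal{W}^+$, and the Fuchs--Caves fidelity-achieving POVM, applied in product form to reduce $\mathcal{W}^-$ to Arikan's classical bound $Z(\widetilde{W}^-)\leq 2Z(\widetilde{W})-Z(\widetilde{W})^2$, is exactly the known route for the minus channel (including your correct observation that the \emph{optimal} measurement is needed because $z\mapsto 2z-z^2$ is increasing). Your prefactor caveat is also accurate and worth retaining: with Definition~\ref{def:reliability} and $U_1,U_2\sim\text{Ber}(p)$ one actually gets $Z(\mathcal{W}^+)_\rho = 2\sqrt{p(1-p)}\,F^2 = Z(\mathcal{W})_\rho\, F \geq Z(\mathcal{W})_\rho^2$, so the displayed equality genuinely holds only in the uniform normalization $2\sqrt{p(1-p)}=1$, whereas your fidelity bound $F(\mathcal{W}^{-,0}(\rho),\mathcal{W}^{-,1}(\rho))\leq 2F-F^2$ still yields $Z(\mathcal{W}^-)_\rho\leq 2Z(\mathcal{W})_\rho-2\sqrt{p(1-p)}F^2\leq 2Z(\mathcal{W})_\rho-Z(\mathcal{W})_\rho^2$ for every $p$, since $2\sqrt{p(1-p)}F^2\geq 4p(1-p)F^2$.
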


Now, we can extend the splitting analysis to a $N$-fold combining of $\mathcal{W}$. Wilde and Guha have shown how to extend the previous
characterization of synthesized channels~\cite{Wilde2013}. We are going to follow a similar path but using a proper description for quantum
memory cells. Let $N = 2^n$, where $n\in\mathbb{N}_0$. First of all, in this general case is preferable to label the synthesized channels by natural numbers instead of $\{+,-\}^n$.
Let $i\in\{1,2,\ldots, N\}$. The $i$-th synthesized channel is given by the map
\begin{equation}
\mathcal{W}_N^{(i)}\colon u_i\mapsto \mathcal{W}^{(i),u_i}(\rho^{\otimes N}),
\label{Eq:synthesizedChannelsGeneral}
\end{equation}where
\begin{eqnarray}
\label{Eq:DefOverlineW}
\mathcal{W}^{(i),u_i}(\rho^{\otimes N}) &=& \sum_{u_1^{i-1}\in\mathbb{Z}_2^{i-1}}p_{U^{i-1}}(u_1^{i-1})\ketbra{u_1^{i-1}}{u_1^{i-1}}\otimes\overline{\mathcal{W}}^{u_1^{i}}(\rho^{\otimes N}),\\
\overline{\mathcal{W}}^{u_1^{i}}(\rho^{\otimes N}) &=& \sum_{u_{i+1}^N\in\mathbb{Z}_2^{N-i}}p_{U_{i+1}^N}(u_{i+1}^N)\mathcal{W}^{u^N G_N}(\rho^{\otimes N}).
\label{Eq:DefOverlineW2}
\end{eqnarray}From Eq.~\ref{Eq:DefOverlineW} we see that the description of $\mathcal{W}_N^{(i)}$ supposes the knowledge of the previous input $u_1^{i-1}$.
For a finite-length analysis, this hypothesis may hold with the use of a ``genie-aided'' successive cancellation decoder similar to Refs.~\cite{arikan09,Wilde2013}.
In the asymptotic analysis, this is not needed.

Having described channel combining and splitting, and given important definitions like the meaning of synthesized channels, our next step is to characterize the behavior of rate,
reliability, and error probability in the asymptotic scenario. See the next subsection.

	\subsubsection{Rate, Reliability, and Error Probability}

The first result present in this section describes the connection between rate and reliability. It shows, as expected, that
the rate $I(\mathcal{W})_\rho\to 0$ (or $I(\mathcal{W})_\rho\to h(p)$) when the reliability $Z(\mathcal{W})_\rho\to 2\sqrt{p(1-p)}$ (or $Z(\mathcal{W})_\rho\to 0$).

\begin{proposition}
Let $X\sim \text{Ber}(p)$ and $\mathcal{W}^x(\rho)$ be a quantum memory cell, where $x\in\mathbb{Z}_2$.
Then the following holds
\begin{eqnarray}
I(\mathcal{W})_\rho &\geq& h(p) - \log(1+Z(\mathcal{W})_\rho),\\
I(\mathcal{W})_\rho &\leq& \sqrt{4p(1-p)-Z(\mathcal{W})_\rho^2}.
\end{eqnarray}
\label{Prop:MutualInformationAndReliability}
\end{proposition}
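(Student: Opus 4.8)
The plan is to handle the two inequalities separately, writing the Holevo quantity as $I(\mathcal{W})_\rho = I(X;B) = h(p) - H(X|B)$ (since for the classical--quantum state $\rho^{XB}$ one has $H(XB) = h(p) + pH(\mathcal{W}^0(\rho)) + (1-p)H(\mathcal{W}^1(\rho))$), and then controlling the conditional term by the fidelity $F := F(\mathcal{W}^0(\rho),\mathcal{W}^1(\rho))$. The only algebraic bridge I need is the identity $Z(\mathcal{W})_\rho = 2\sqrt{p(1-p)}\,F$, so that $F^2 = Z(\mathcal{W})_\rho^2/(4p(1-p))$ and $1-F^2 = \bigl(4p(1-p)-Z(\mathcal{W})_\rho^2\bigr)/\bigl(4p(1-p)\bigr)$.

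For the lower bound I would reduce to the classical case through an optimal measurement. By the Fuchs--Caves characterization of fidelity, there exists a POVM $\{M_y\}$ on $B$ whose induced distributions $P(y|x) = \mathrm{Tr}\{M_y\,\mathcal{W}^x(\rho)\}$ attain $\sum_y\sqrt{P(y|0)P(y|1)} = F$. Applying this measurement produces a classical asymmetric binary-input channel $X\to Y$ whose Bhattacharyya overlap is exactly $2\sqrt{p(1-p)}\sum_y\sqrt{P(y|0)P(y|1)} = Z(\mathcal{W})_\rho$. The Holevo bound (monotonicity of the quantum mutual information under the measurement channel applied to $B$) gives $I(\mathcal{W})_\rho = I(X;B) \geq I(X;Y)$, so it suffices to prove the purely classical estimate $I(X;Y) \geq h(p) - \log(1+Z(\mathcal{W})_\rho)$. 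Writing $I(X;Y) = h(p) - \sum_y P(y)\,h(q_y)$ with posteriors $q_y$ and output marginal $P(y)$, and checking that $\sum_y P(y)\,2\sqrt{q_y(1-q_y)} = Z(\mathcal{W})_\rho$, this follows from the elementary per-symbol inequality $h(q) \leq \log\bigl(1 + 2\sqrt{q(1-q)}\bigr)$ together with Jensen's inequality for the concave function $t\mapsto\log(1+t)$.

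For the upper bound I would reuse the tool already invoked in the proof of Proposition~\ref{Prop:PolarizationInf}, namely the estimate of \cite[Thm.~3]{Roga2010}, which dominates the Holevo quantity of a binary ensemble by the von Neumann entropy of the $2\times2$ matrix $\sigma = \bigl(\begin{smallmatrix} p & \sqrt{p(1-p)}\,F \\ \sqrt{p(1-p)}\,F & 1-p \end{smallmatrix}\bigr)$. Since $\mathrm{Tr}\,\sigma = 1$ and $\det\sigma = p(1-p)(1-F^2)$, its eigenvalues $\lambda_\pm$ satisfy $\lambda_-(1-\lambda_-) = \det\sigma$, whence $H(\sigma) = h(\lambda_-) \leq 2\sqrt{\lambda_-(1-\lambda_-)} = 2\sqrt{p(1-p)(1-F^2)} = \sqrt{4p(1-p) - Z(\mathcal{W})_\rho^2}$, using the standard bound $h(\lambda)\leq 2\sqrt{\lambda(1-\lambda)}$. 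Chaining $I(\mathcal{W})_\rho \leq H(\sigma) \leq \sqrt{4p(1-p)-Z(\mathcal{W})_\rho^2}$ then closes this half.

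The step I expect to be the main obstacle is the lower bound, and in particular making the quantum-to-classical reduction airtight: one must select precisely the fidelity-achieving (minimal Bhattacharyya) measurement so that the root fidelity $F$ coincides with the classical overlap of the measured channel, after which data processing and the classical per-symbol entropy estimate do the work. The upper bound is essentially routine once the inequality of \cite{Roga2010} is in hand, the only nontrivial ingredient being the concavity/calculus fact $h(\lambda)\leq 2\sqrt{\lambda(1-\lambda)}$ already used in Proposition~\ref{Prop:PolarizationInf}.
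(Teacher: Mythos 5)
Your proposal is correct, and it splits naturally into one half that matches the paper and one half that does not. For the upper bound your argument is essentially identical to the paper's: the paper also invokes \cite[Thm.~3]{Roga2010} with the same $2\times 2$ matrix $\sigma$ (off-diagonal entry $Z(\mathcal{W})_\rho/2=\sqrt{p(1-p)}\,F$), evaluates $H(\sigma)=h\big(\tfrac{1}{2}(1-\sqrt{1-4p(1-p)+Z(\mathcal{W})_\rho^2})\big)$, and then applies $h(\lambda)\leq 2\sqrt{\lambda(1-\lambda)}$; your determinant computation $\lambda_-\lambda_+=\det\sigma=p(1-p)(1-F^2)$ is just a cleaner way of organizing the same two steps. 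For the lower bound, however, the paper gives no argument at all --- it cites \cite[Thm.~C.1]{Hirche:Thesis}, which supplies the conditional-entropy bound $H(X|B)\leq\log(1+Z(X|B))$ directly at the level of cq states --- whereas you reconstruct a self-contained proof: the Fuchs--Caves characterization provides a POVM whose induced channel has Bhattacharyya overlap exactly $F$, hence classical reliability exactly $Z(\mathcal{W})_\rho$ (your bookkeeping $\sum_y P(y)\,2\sqrt{q_y(1-q_y)}=2\sqrt{p(1-p)}\sum_y\sqrt{P(y|0)P(y|1)}$ checks out); data processing gives $I(X;B)\geq I(X;Y)$; and the per-symbol estimate is genuinely elementary and true, since $1+2\sqrt{q(1-q)}=(\sqrt{q}+\sqrt{1-q})^2$ and weighted AM--GM gives $\sqrt{q}+\sqrt{1-q}\geq q^{-q/2}(1-q)^{-(1-q)/2}=2^{h(q)/2}$, after which Jensen for the concave map $t\mapsto\log(1+t)$ closes the argument. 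You are also right about where the argument could break: with a suboptimal measurement one only gets $Z_{\mathrm{cl}}\geq Z(\mathcal{W})_\rho$, and since $\log(1+t)$ is increasing this would weaken the bound, so the minimization in Fuchs--Caves is precisely what makes the reduction tight. What your route buys is an elementary, measurement-based proof whose only quantum ingredients are Fuchs--Caves and monotonicity, replacing an external citation; what the paper's route buys is brevity and a statement that never passes through a measurement of the $B$ system.
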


\begin{proof}
The first inequality follows from \cite[Thm. C.1]{Hirche:Thesis}.
For the second one, we need to use the following inequality derived from Theorem 3 in Ref.~\cite{Roga2010}:
\begin{equation}
I(\mathcal{W})_\rho \leq H(\sigma),
\end{equation}where
\begin{equation}
\sigma =
\begin{pmatrix}
p & \frac{Z(\mathcal{W})_\rho}{2}\\
\frac{Z(\mathcal{W})_\rho}{2} & 1-p
\end{pmatrix}.
\end{equation}Thus
\begin{eqnarray}
I(\mathcal{W})_\rho &\leq& h\Big{(}\frac{1}{2}(1 - \sqrt{1 - 4p(1-p)+Z(\mathcal{W})_\rho^2})\Big{)}\\
			   &\leq& \sqrt{4p(1-p)-Z(\mathcal{W})_\rho^2}.
\end{eqnarray}
\end{proof}

Before presenting the next result, we need to introduce a new classical-quantum symmetric channel. Let $\tilde{U}_i\sim\text{Ber}(\frac{1}{2})$,
$U_j\sim\text{Ber}(p)$ for all $i,j=1, \ldots,N$. Assume that the output system of the classical-quantum channel $\tilde{\mathcal{W}}$ is given by
$\mathcal{D}(\mathcal{H}_{\tilde{Z}}\otimes\mathcal{H}_{B})$, where $\sigma\in\mathcal{D}(\mathcal{H}_{\tilde{Z}}\otimes\mathcal{H}_{B})$ can be written as
\begin{equation}
\sigma = \sum_{\tilde{z}\in\mathbb{Z}_2}\ketbra{\tilde{z}}{\tilde{z}}^{\tilde{Z}}\otimes \rho^B.
\end{equation}Notice that $\tilde{Z}$ represents the classical input system in the same way as the system $Z$ associated with the quantum channel $\mathcal{W}$. 
Therefore, the associated quantum systems are the same. 
We keep different notation here to emphasize that one is associated with a symmetric channel and the other with an asymmetric channel. Now, we can properly describe the 
classical-quantum channel $\tilde{\mathcal{W}}$. Fix $\rho\in\mathcal{D}(\mathcal{H}_A)$ and let $u\in\mathbb{Z}_2$. Then $\tilde{\mathcal{W}}$ is the map given by
\begin{align}
  \tilde{\mathcal{W}} \colon \tilde{U} &\to \mathcal{D}(\mathcal{H}_{\tilde{Z}}\otimes\mathcal{H}_{B})\nonumber\\
  \tilde{u} &\mapsto \tilde{\mathcal{W}}^{\tilde{u}}(\rho) := \sum_{u\in \mathcal{U}}p_U(u)\ketbra{\tilde{u}\oplus u}{\tilde{u}\oplus u}^{\tilde{Z}}\otimes \mathcal{W}^u (\rho)^B.
  \label{Eq:Symmetric_Channel}
\end{align}It is easy to see that $\tilde{\mathcal{W}}$ is a symmetric classical-quantum channel. Indeed, this follows from the uniform distribution of $\tilde{U}$ and the construction of the
$\mathcal{H}_{\tilde{Z}}$ part. The next step is to describe the synthesized channels generated in polar coding on $\tilde{\mathcal{W}}$.
Following the characterization given in Eq.~\ref{Eq:synthesizedChannelsGeneral}, we have that
\begin{equation}
\tilde{\mathcal{W}}_N^{(i),\tilde{u}_i}(\rho^N) = \sum_{\tilde{u}_1^{i-1}}\frac{1}{2^{i-1}}\ketbra{\tilde{u}_1^{i-1}}{\tilde{u}_1^{i-1}}\otimes\Big{(}\sum_{u^N}p_{U^N}(u^N)\mathcal{W}^{u^N G_N}(\rho^N)\otimes\Big{(}\sum_{\tilde{u}_{i+1}^N}\frac{1}{2^{N-i}}\ketbra{(\tilde{u}^N\oplus u^N )G_N}{(\tilde{u}^N\oplus u^N )G_N}\Big{)}\Big{)}.
\end{equation}With these tools, we show in the next proposition that any asymmetric classical-quantum channel can be described via a symmetric one. Moreover, in the following,
a relation between the reliabilities of these channels is provided.

\begin{proposition}
Let $\tilde{\mathcal{W}}_N^{(i)}\colon \tilde{U}_i\rightarrow \mathcal{D}(\mathcal{H}_{\tilde{U}_1^{i-1}}\otimes\mathcal{H}_{B^N}\otimes\mathcal{H}_{\tilde{Z}^N})$, where $\tilde{U}_i\sim\text{Ber}(\frac{1}{2})$ for all $i=1, \ldots,N$. Then,

\begin{equation}
\mathcal{W}_N^{(i),u_i}(\rho^N) = \text{Tr}_{\tilde{Z}^N}\{\tilde{\mathcal{W}}_N^{(i),u_i}(\rho^N)\ketbra{0}{0}^{Z^N}\}.
\end{equation}
\label{Eq:Asymmetric_Symmetric_Correspondency}
\end{proposition}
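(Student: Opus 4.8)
The plan is to substitute the explicit form of the symmetric synthesized channel $\tilde{\mathcal{W}}_N^{(i),u_i}(\rho^N)$ displayed just before the statement, apply the projector on the $\tilde{Z}^N=Z^N$ register, and show that this projection acts as a post-selection that pins the hidden averaging variable $u^N$ to the encoded input string. First I would form
\[
\text{Tr}_{\tilde{Z}^N}\{\tilde{\mathcal{W}}_N^{(i),u_i}(\rho^N)\,\ketbra{0}{0}^{Z^N}\}
\]
and note that the $\tilde{Z}^N$ tensor factor of $\tilde{\mathcal{W}}_N^{(i),u_i}$ is diagonal in the computational basis, being a sum of projectors onto $\ket{(\tilde{u}^N\oplus u^N)G_N}$. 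Since $\text{Tr}\{\ketbra{a}{a}\ketbra{0}{0}\}=\delta_{a,0^N}$, multiplying by $\ketbra{0}{0}^{Z^N}$ and tracing out $\tilde{Z}^N$ produces the Kronecker delta $\delta_{(\tilde{u}^N\oplus u^N)G_N,\,0^N}$.

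The crucial structural input is the invertibility of $G_N$. Because $G_N=R_N F^{\otimes n}$ with $R_N$ a permutation and $F$ invertible over $\mathbb{Z}_2$, the equation $(\tilde{u}^N\oplus u^N)G_N=0^N$ is equivalent to $\tilde{u}^N=u^N$ componentwise. Hence the delta collapses the sum over the internal variable $u^N$ to the single term $u^N=\tilde{u}^N=(\tilde{u}_1^{i-1},u_i,\tilde{u}_{i+1}^N)$: the $B^N$ system is then left carrying $\mathcal{W}^{\tilde{u}^N G_N}(\rho^N)$, the retained register keeps $\ketbra{\tilde{u}_1^{i-1}}{\tilde{u}_1^{i-1}}$, and the future indices $\tilde{u}_{i+1}^N$ remain summed.

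It then remains to match the probability weights against the definition of $\mathcal{W}_N^{(i),u_i}(\rho^N)$ in Eqs.~(\ref{Eq:DefOverlineW})--(\ref{Eq:DefOverlineW2}). Here I would use that $U_1,\dots,U_N$ are i.i.d.\ $\text{Ber}(p)$, so $p_{U^N}$ factorizes as $p_{U^{i-1}}(u_1^{i-1})\,p_U(u_i)\,p_{U_{i+1}^N}(u_{i+1}^N)$. After relabelling $\tilde{u}_1^{i-1}\to u_1^{i-1}$ and $\tilde{u}_{i+1}^N\to u_{i+1}^N$, the outer sum over past bits should reproduce the weight $p_{U^{i-1}}(u_1^{i-1})$ together with the classical register $\ketbra{u_1^{i-1}}{u_1^{i-1}}$, and the inner sum over future bits should reproduce $\overline{\mathcal{W}}^{u_1^i}(\rho^N)$, yielding exactly $\mathcal{W}_N^{(i),u_i}(\rho^N)$.

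I expect this final bookkeeping to be the main obstacle. The symmetric channel $\tilde{\mathcal{W}}$ carries the uniform prior $\tilde{U}\sim\text{Ber}(\tfrac12)$ (hence the factors $\tfrac{1}{2^{i-1}}$ and $\tfrac{1}{2^{N-i}}$), while the target is built from the $\text{Ber}(p)$ prior, so the delicate point is to confirm that the post-selection onto $\ket{0}^{Z^N}$ trades the former for the latter without leaving a spurious multiplicative constant depending on $p$ or $u_i$. I would therefore verify the identity first on the base case $N=2$, $i\in\{1,2\}$, comparing directly with the explicit forms of $\mathcal{W}^{-}$ and $\mathcal{W}^{+}$ in Eq.~(\ref{Eq:synthesizedChannels}); this smallest case isolates precisely how the register projection converts the Bernoulli$(\tfrac12)$ weighting into the Bernoulli$(p)$ weighting and pins down the normalization convention in force.
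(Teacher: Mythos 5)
Your route is mechanically the same as the paper's: expand the explicit form of $\tilde{\mathcal{W}}_N^{(i),u_i}(\rho^N)$, note that the $\tilde{Z}^N$ factor is diagonal with projectors onto $\ket{(\tilde{u}^N\oplus u^N)G_N}$, and use invertibility of $G_N$ over $\mathbb{Z}_2$ to collapse the averaging variable to $u^N=\tilde{u}^N$. (In fact $G_N^{-1}=G_N$, which the paper uses implicitly when it writes $u^N\oplus\tilde{z}^N G_N$ after its change of variables.) The only structural difference is that the paper performs the collapse at the level of the joint input--output state $\tilde{\rho}^{(i)}=\frac{1}{2}\sum_{\tilde{u}_i}\ketbra{\tilde{u}_i}{\tilde{u}_i}\otimes\tilde{\mathcal{W}}_N^{(i),\tilde{u}_i}(\rho^N)$, via the substitution $\tilde{z}^N=(\tilde{u}^N\oplus u^N)G_N$ leading to Eq.~\ref{Eq:jointProbeSymmetricChannel}, and then identifies the post-selected state with the joint input--output state of $\mathcal{W}_N^{(i)}$, whereas you work at fixed $u_i$.

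The normalization worry you flag at the end is not a formality: it is real, and your plan as written cannot ``confirm'' the absence of a spurious constant, because one is present. Carrying out your collapse, each surviving term has weight $\frac{1}{2^{i-1}}\cdot\frac{1}{2^{N-i}}\cdot p_{U^N}(\tilde{u}^N)$, and factorizing $p_{U^N}(\tilde{u}^N)=p_{U^{i-1}}(\tilde{u}_1^{i-1})\,p_U(u_i)\,p_{U_{i+1}^N}(\tilde{u}_{i+1}^N)$ gives
\[
\text{Tr}_{\tilde{Z}^N}\{\tilde{\mathcal{W}}_N^{(i),u_i}(\rho^N)\ketbra{0}{0}^{Z^N}\}=\frac{p_U(u_i)}{2^{N-1}}\,\mathcal{W}_N^{(i),u_i}(\rho^N),
\]
so the right-hand side has trace $p_U(u_i)/2^{N-1}$ while the left-hand side of the proposition has unit trace; the stated equality only holds after renormalizing by the post-selection probability, i.e.\ reading the right side as the conditional state given outcome $0^N$ on $\tilde{Z}^N$. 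This does not put you at a disadvantage relative to the paper: the paper's own proof silently drops a uniform factor $1/2^N$ between Eq.~\ref{Eq:jointProbeSymmetricChannel} and its final line, so it too implicitly renormalizes. Working at the joint level is slightly cleaner, since the marginal on $\tilde{Z}^N$ is exactly uniform, making the renormalization constant $u_i$-independent, with the weight $p_U(u_i)$ correctly reappearing in the classical $u_1^i$ register. To finish your argument, either renormalize at fixed $u_i$ (harmless, since the target $\mathcal{W}_N^{(i),u_i}(\rho^N)$ has unit trace) or redo the collapse on $\tilde{\rho}^{(i)}$ as the paper does; your proposed $N=2$ sanity check would have surfaced exactly the factor $p_U(u_i)/2$.
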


\begin{proof}
Let $\tilde{\rho}^{(i)}$ be the joint input-output density operator of $\tilde{\mathcal{W}}_N^{(i),\tilde{u}_i}$. Thus,
\begin{subequations}
\begin{eqnarray}
\tilde{\rho}^{(i)} &=& \frac{1}{2} \sum_{\tilde{u}_i\in\mathbb{Z}_2}\ketbra{\tilde{u}_i}{\tilde{u}_i}\otimes\tilde{\mathcal{W}}_N^{(i),\tilde{u}_i}(\rho^N)\\
&=&\sum_{\tilde{u}^N}\frac{1}{2^N}\ketbra{\tilde{u}_1^{i}}{\tilde{u}_1^{i}}\otimes\Big{(}\sum_{u^N}p_{U^N}(u^N)\mathcal{W}^{u^N G_N}(\rho^N)\otimes\ketbra{(\tilde{u}^N\oplus u^N )G_N}{(\tilde{u}^N\oplus u^N )G_N}\Big{)}\\
&=& \sum_{u^N}p_{U^N}(u^N)\mathcal{W}^{u^N G_N}(\rho^N)\otimes\Big{(}\sum_{\tilde{u}^N}\frac{1}{2^N}\ketbra{\tilde{u}_1^{i}}{\tilde{u}_1^{i}}\otimes\ketbra{(\tilde{u}^N\oplus u^N )G_N}{(\tilde{u}^N\oplus u^N )G_N}\Big{)}.
\end{eqnarray}
\end{subequations}Defining $\tilde{z}^N = (\tilde{u}^N\oplus u^N )G_N$, it is possible to rearrange the sums as
\begin{eqnarray}
\tilde{\rho}^{(i)} &=& \sum_{u^N}p_{U^N}(u^N)\mathcal{W}^{u^N G_N}(\rho^N)\otimes\Big{(}\sum_{\substack{\tilde{z}^N\\ [u^N\oplus\tilde{z}^N G_N]_1^i=\tilde{u}_1^i}}\frac{1}{2^N}\ketbra{[u^N\oplus\tilde{z}^N G_N]_1^i}{[u^N\oplus\tilde{z}^N G_N]_1^i}\otimes\ketbra{\tilde{z}^N}{\tilde{z}^N}\Big{)}\nonumber\\
&=& \sum_{\tilde{z}^N}\frac{1}{2^N}\ketbra{\tilde{z}^N}{\tilde{z}^N}\otimes\Big{(}\sum_{\tilde{u}_1^i}p_{U_1^i}(\tilde{u}_1^i\oplus[\tilde{z}^N G_N]_1^i)\ketbra{\tilde{u}_1^i}{\tilde{u}_1^i}\otimes\mathcal{W}^{ [(\tilde{u}_1^i,0_{i+1}^N)G_N]_{1}^i \oplus\tilde{z}_1^i }(\rho^i)\nonumber\\
&\otimes&\Big{(}\sum_{u_{i+1}^N}p_{U_{i+1}^N}(u_{i+1}^N)\mathcal{W}^{[(0_1^i,u_{i+1}^N)G_N]_{i+1}^N}(\rho^{N-i})\Big{)}\Big{)}.
\label{Eq:jointProbeSymmetricChannel}
\end{eqnarray}Now, we can see that
\begin{equation}
\text{Tr}_{\tilde{Z}^N}\{\tilde{\rho}^{(i)}\ketbra{0}{0}^{Z^N}\} = \sum_{u_1^i}p_{U_1^i}(u_1^i)\ketbra{u_1^i}{u_1^i}\otimes\Big{(}\sum_{u_{i+1}^N}p_{U_{i+1}^N}(u_{i+1}^N)\mathcal{W}^{u^N G_N}(\rho^N)\Big{)},
\end{equation}which is the joint input-output state of $\mathcal{W}^{(i),u_i}_N(\rho^N)$.
\end{proof}

\begin{proposition}
Let $\mathcal{W}_N^{(i)}$ and $\tilde{\mathcal{W}}_N^{(i)}$ be the synthesized quantum channels described in Eq.~\ref{Eq:synthesizedChannelsGeneral} and Proposition~\ref{Eq:Asymmetric_Symmetric_Correspondency}, respectively. Then
\begin{equation}
Z(\tilde{\mathcal{W}}_N^{(i)})_\rho = Z(\mathcal{W}_N^{(i)})_\rho.
\end{equation}
\end{proposition}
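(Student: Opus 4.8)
The plan is to collapse the claimed identity to a single statement about fidelities and then exploit the classical (block‑diagonal) structure of the dithering register $\tilde{Z}^N$. Since $\tilde{U}_i\sim\text{Ber}(\tfrac12)$ whereas $U_i\sim\text{Ber}(p)$, Definition~\ref{def:reliability} applied to the synthesized channels gives $Z(\tilde{\mathcal{W}}_N^{(i)})_\rho = F(\tilde{\mathcal{W}}_N^{(i),0},\tilde{\mathcal{W}}_N^{(i),1})$ and $Z(\mathcal{W}_N^{(i)})_\rho = 2\sqrt{p(1-p)}\,F(\mathcal{W}_N^{(i),0},\mathcal{W}_N^{(i),1})$, so the whole claim reduces to
\[
F\big(\tilde{\mathcal{W}}_N^{(i),0},\tilde{\mathcal{W}}_N^{(i),1}\big) = 2\sqrt{p(1-p)}\,F\big(\mathcal{W}_N^{(i),0},\mathcal{W}_N^{(i),1}\big).
\]
This is the exact analogue of the elementary $N=1$ fact: there $\tilde{\mathcal{W}}^0,\tilde{\mathcal{W}}^1$ are block diagonal in $\tilde{Z}$, and the two offset blocks each contribute $\sqrt{p(1-p)}\,F(\mathcal{W}^0,\mathcal{W}^1)$. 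The task is to lift this computation to the synthesized level.

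First I would record a covariance property. Because $G_N$ is linear over $\mathbb{Z}_2$ and the register $\tilde{Z}^N$ stores $(\tilde{u}^N\oplus u^N)G_N$, flipping the free input $\tilde{u}_i\mapsto\tilde{u}_i\oplus 1$ leaves the $B^N$ factor and the genie register $\tilde{U}_1^{i-1}$ untouched and merely shifts the content of $\tilde{Z}^N$ by the $i$-th row $g_i:=e_iG_N$ of $G_N$ (with $e_i$ the $i$-th standard basis vector). Hence $\tilde{\mathcal{W}}_N^{(i),1} = X^{g_i}\,\tilde{\mathcal{W}}_N^{(i),0}\,X^{g_i}$, where $X^{g_i}$ is the Pauli-$X$ string on the qubits of $\tilde{Z}^N$ indexed by $\mathrm{supp}(g_i)$, implementing the translation $\ket{\tilde{z}^N}\mapsto\ket{\tilde{z}^N\oplus g_i}$.

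Next I would use block diagonality. Writing $\tilde{\mathcal{W}}_N^{(i),0} = \sum_{z\in\mathbb{Z}_2^N}\ketbra{z}{z}^{\tilde{Z}^N}\otimes\Omega_z$ with (subnormalized) blocks $\Omega_z$ on $\mathcal{H}_{\tilde{U}_1^{i-1}}\otimes\mathcal{H}_{B^N}$, the covariance yields $\tilde{\mathcal{W}}_N^{(i),1} = \sum_z\ketbra{z}{z}\otimes\Omega_{z\oplus g_i}$; since $F=\|\sqrt{\cdot}\sqrt{\cdot}\|_1$ is additive over orthogonal classical blocks, $F(\tilde{\mathcal{W}}_N^{(i),0},\tilde{\mathcal{W}}_N^{(i),1}) = \sum_{z}F(\Omega_z,\Omega_{z\oplus g_i})$. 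Passing to $w^N:=zG_N^{-1}$ (legitimate as $G_N$ is invertible over $\mathbb{Z}_2$) and using the constraint $u_i=w_i$ together with independence of the $U_j$, each block factors, after collecting the genie label $a=u_1^{i-1}$, into a scalar multiple of a conditional asymmetric output state $\overline{\mathcal{W}}^{(a,w_i)}$; the paired blocks $\Omega_z$ and $\Omega_{z\oplus g_i}$ then share the same genie label $a$ and differ only through $w_i\mapsto w_i\oplus1$. Applying $F(b\sigma,c\tau)=\sqrt{bc}\,F(\sigma,\tau)$ and $p_{U_i}(w_i)p_{U_i}(w_i\oplus1)=p(1-p)$ extracts a factor $\sqrt{p(1-p)}$ from every block alongside the conditional fidelity $F(\overline{\mathcal{W}}^{(a,0)},\overline{\mathcal{W}}^{(a,1)})$.

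Finally I would carry out the summation. After the change of variables the summand depends on $z$ only through $w_1^{i-1}$ (hence through $a$) and is independent of $w_i,\dots,w_N$; summing the free coordinates out and re-indexing $a=u_1^{i-1}$ collapses the counting factors $2^{N-i+1}$, $2^{i-1}$ and $2^{-(N-1)}$ to exactly $2$, giving
\[
\sum_z F(\Omega_z,\Omega_{z\oplus g_i}) = 2\sqrt{p(1-p)}\sum_{u_1^{i-1}}p_{U^{i-1}}(u_1^{i-1})\,F\big(\overline{\mathcal{W}}^{(u_1^{i-1},0)},\overline{\mathcal{W}}^{(u_1^{i-1},1)}\big).
\]
Recognizing the right-hand sum as the block decomposition of $F(\mathcal{W}_N^{(i),0},\mathcal{W}_N^{(i),1})$ (the asymmetric channel being block diagonal in the genie register) finishes the proof. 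The main obstacle is purely the bookkeeping: keeping the normalizations of the $2^N$ blocks $\Omega_z$ straight through the substitution $w^N=zG_N^{-1}$ so that these combinatorial prefactors cancel to leave precisely $2\sqrt{p(1-p)}$. The two properties of fidelity used, namely additivity over orthogonal classical blocks and the scalar scaling $F(b\sigma,c\tau)=\sqrt{bc}\,F(\sigma,\tau)$, are elementary consequences of the $F=\|\sqrt{\sigma}\sqrt{\tau}\|_1$ convention, as is the covariance step, which rests only on the $\mathbb{Z}_2$-linearity of $G_N$.
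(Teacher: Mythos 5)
Your proposal is correct and follows essentially the same route as the paper's proof: both reduce the claim to the additivity of $F(\sigma,\tau)=\|\sqrt{\sigma}\sqrt{\tau}\|_1$ over the classical registers $\tilde{Z}^N$ and $\tilde{U}_1^{i-1}$, apply the bijective change of variables $(\tilde{z}^N,\tilde{u}_1^{i-1})\mapsto(w^N,u_1^{i-1})$ afforded by the invertibility of $G_N$ over $\mathbb{Z}_2$, pull the constant $\sqrt{p_U(0)p_U(1)}$ out of each paired block, and reassemble the weighted block fidelities (with the counting factors collapsing to exactly $2$) into $2\sqrt{p(1-p)}\,F(\mathcal{W}_N^{(i),0},\mathcal{W}_N^{(i),1})=Z(\mathcal{W}_N^{(i)})_\rho$. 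The only cosmetic difference is that you obtain the pairing of blocks from the explicit Pauli covariance $\tilde{\mathcal{W}}_N^{(i),1}=X^{g_i}\tilde{\mathcal{W}}_N^{(i),0}X^{g_i}$, whereas the paper gets the same pairing by directly expanding both outputs from its Eq.~\ref{Eq:jointProbeSymmetricChannel}.
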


\begin{proof}
From the definition of reliability for symmetric classical-quantum channels in Ref.~\cite{Wilde2013} and Eq~\ref{Eq:jointProbeSymmetricChannel}, we have that
\begin{eqnarray}
Z(\tilde{\mathcal{W}}_N^{(i)})_\rho &=& F(\tilde{\mathcal{W}}_N^{(i),0}(\rho^N), \tilde{\mathcal{W}}_N^{(i),1}(\rho^N))\\
&=& \sum_{\tilde{z}^N}\sum_{\tilde{u}_1^{i-1}}\frac{p_{U_1^{i-1}}(\tilde{u}_1^{i-1}\oplus[\tilde{z}^N G_N]_1^{i-1})\sqrt{p_U(0\oplus[\tilde{z}^N G_N]_i)p_U(1\oplus[\tilde{z}^N G_N]_i)}}{2^{N-1}}
F(w^{(i)}_0,w^{(i)}_1),
\end{eqnarray}where
\begin{eqnarray}
w^{(i)}_0 &=& \mathcal{W}^{ [(\tilde{u}_1^{i-1},0_{i}^N)G_N]_{1}^i \oplus\tilde{z}_1^i }(\rho^i)\sum_{u_{i+1}^N}p_{U_{i+1}^N}(u_{i+1}^N)\mathcal{W}^{[(0_1^i,u_{i+1}^N)G_N]_{i+1}^N}(\rho^{N-i}),\\
w^{(i)}_1 &=& \mathcal{W}^{ [(\tilde{u}_1^{i-1},1,0_{i+1}^N)G_N]_{1}^i \oplus\tilde{z}_1^i }(\rho^i)\sum_{u_{i+1}^N}p_{U_{i+1}^N}(u_{i+1}^N)\mathcal{W}^{[(0_1^i,u_{i+1}^N)G_N]_{i+1}^N}(\rho^{N-i}).
\end{eqnarray}Let $z^N = \tilde{z}^N G_N$ and $u_1^{i-1} = \tilde{u}_1^{i-1}\oplus[\tilde{z}^N G_N]_1^{i-1} = \tilde{u}_1^{i-1} \oplus z_1^{i-1}$. Since the map
$(\tilde{z}_1^N,\tilde{u}_1^{i-1})\mapsto(z_1^N,u_1^{i-1})$ is a bijection, we see that
\begin{eqnarray}
Z(\tilde{\mathcal{W}}_N^{(i)})_\rho &=& \sum_{z^N}\sum_{u_1^{i-1}}\frac{p_{U_1^{i-1}}(u_1^{i-1})\sqrt{p_U(0)p_U(1)}}{2^{N-1}}F(\mathcal{W}^{(u_1^{i-1},0)}(\rho^i)\sum_{u_{i+1}^N}p_{U_{i+1}^N}(u_{i+1}^N)\mathcal{W}^{[(0_1^i,u_{i+1}^N)G_N]_{i+1}^N}(\rho^{N-i}),\nonumber\\
& &\mathcal{W}^{(u_1^{i-1},1)}(\rho^i)\sum_{u_{i+1}^N}p_{U_{i+1}^N}(u_{i+1}^N)\mathcal{W}^{[(0_1^i,u_{i+1}^N)G_N]_{i+1}^N}(\rho^{N-i}))\nonumber\\
&=& 2\sqrt{p_U(0)p_U(1)}\sum_{u_1^{i-1}}p_{U_1^{i-1}}(u_1^{i-1})F(\overline{\mathcal{W}}^{(u_1^{i-1},0)}(\rho^{\otimes N}),\overline{\mathcal{W}}^{(u_1^{i-1},1)}(\rho^{\otimes N}))\nonumber\\
&=& 2\sqrt{p_U(0)p_U(1)}F(\mathcal{W}^{(i),0}(\rho^{\otimes N}),\mathcal{W}^{(i),1}(\rho^{\otimes N}))\nonumber\\
&=& Z(\mathcal{W}_N^{(i)})_\rho,
\end{eqnarray}where $\overline{\mathcal{W}}^{u_1^i}(\rho^{\otimes N})$ is defined in Eq.~\ref{Eq:DefOverlineW}.
\end{proof}

The next theorem shows that the asymptotically fraction of good channels is equals to the mutual information between the classical system $X$ and the quantum system $B$. 

\begin{theorem}
Let $\mathcal{W}_N^{(i)}$ be the synthesized quantum channels described in Eq.~\ref{Eq:synthesizedChannelsGeneral} and $Z(X|Y)$ be the reliability of two Bernoulli random variables $X$ and $Y$; i.e., 
$Z(X|Y) = 2\sum_y\sqrt{p_{X,Y}(0,y)p_{X,Y}(1,y)}$. For every $\beta<1/2$,
we have
\begin{eqnarray}
\lim_{n\rightarrow\infty}\frac{1}{2^n}|\{i\in\mathbb{Z}_2^n\colon Z(\mathcal{W}_N^{(i)})_\rho\leq 2^{-2^{n\beta}} \text{ and }Z(U_i|U_1^{i-1})\geq 1 - 2^{-2^{n\beta}}\}| &=& I(X;B)_\rho,\\
\lim_{n\rightarrow\infty}\frac{1}{2^n}|\{i\in\mathbb{Z}_2^n\colon Z(\mathcal{W}_N^{(i)})_\rho\geq 1 - 2^{-2^{n\beta}} \text{ and }Z(U_i|U_1^{i-1})\leq 2^{-2^{n\beta}}\}| &=& 1 - I(X;B)_\rho,
\end{eqnarray}
\label{Theorem:Rate}
\end{theorem}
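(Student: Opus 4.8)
The plan is to decouple the two extremal conditions into \emph{two separate} polarization processes and then align them. The condition on $Z(\mathcal{W}_N^{(i)})_\rho$ concerns the channel conditioned on the quantum output $B^N$, whereas the condition on $Z(U_i|U_1^{i-1})$ is a purely classical statement about the source $U^N$ (independent and identically distributed $\text{Ber}(p)$) transformed by $G_N$. I would index both families by the $\pm$-path on the binary tree leading to $i$, prove fast polarization for each family separately, and only at the end use a comparison inequality to combine the two index sets.

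For the channel family the key leverage is the correspondence of Proposition~\ref{Eq:Asymmetric_Symmetric_Correspondency} together with the subsequent proposition, which give $Z(\mathcal{W}_N^{(i)})_\rho=Z(\tilde{\mathcal{W}}_N^{(i)})_\rho$ for the \emph{symmetric} cq channel $\tilde{\mathcal{W}}$ of Eq.~\ref{Eq:Symmetric_Channel}. Because $\tilde{\mathcal{W}}$ is symmetric, Proposition~\ref{Proposition:Preliminaries} applies directly and the reliabilities $\{Z(\tilde{\mathcal{W}}_N^{(i)})_\rho\}$ polarize; the Arikan--Telatar refinement, fed by the recursion $Z(\mathcal{W}^+)_\rho=Z(\mathcal{W})_\rho^2$ and $Z(\mathcal{W}^-)_\rho\le 2Z(\mathcal{W})_\rho-Z(\mathcal{W})_\rho^2$ from the proposition of Ref.~\cite{Renes2015}, upgrades this to the fast rate $2^{-2^{n\beta}}$ for every $\beta<1/2$. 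Transporting this back through the equality of reliabilities, the fraction of indices with $Z(\mathcal{W}_N^{(i)})_\rho\le 2^{-2^{n\beta}}$ converges to $I(\tilde{\mathcal{W}})$. A direct computation of the mutual information of $\tilde{\mathcal{W}}$ (uniform $\tilde U$, with the $\tilde Z$-register carrying $\tilde U\oplus U$) yields $I(\tilde{\mathcal{W}})=1-H(X)+I(X;B)_\rho=1-H(X|B)_\rho$, so the good-channel fraction equals $1-H(X|B)_\rho$.

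For the source family I would invoke classical source polarization for the $\text{Ber}(p)$ source under $G_N$: the quantities $Z(U_i|U_1^{i-1})$ obey the same recursion with the classical Bhattacharyya replacing the fidelity, and since $\tfrac{1}{2^n}\sum_i H(U_i|U_1^{i-1})=H(X)=h(p)$ is conserved along the tree, they polarize with the same fast rate. Hence the fraction with $Z(U_i|U_1^{i-1})\ge 1-2^{-2^{n\beta}}$ tends to $H(X)$ and the fraction with $Z(U_i|U_1^{i-1})\le 2^{-2^{n\beta}}$ tends to $1-H(X)$. In both families the almost-sure convergence follows from the martingale convergence theorem and the exponential rate from the Arikan--Telatar large-deviation estimate, once the recursions are in place.

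The heart of the argument, and the step I expect to be the main obstacle, is the \emph{alignment} of the two polarized families at the exponential scale. For this I would use the comparison $Z(\mathcal{W}_N^{(i)})_\rho\le Z(U_i|U_1^{i-1})$, reflecting that the extra conditioning on $B^N$ can only decrease the Bhattacharyya parameter; consequently the ``low-entropy'' source indices are contained in the ``good-channel'' indices. The first set in the statement is then the good-channel set intersected with the high-entropy source set, and by inclusion--exclusion its density is $(1-H(X|B)_\rho)-(1-H(X))=H(X)-H(X|B)_\rho=I(X;B)_\rho$; the second identity follows from the complementary count on the same two families. The delicate points are to make the comparison inequality valid \emph{uniformly} at the scale $2^{-2^{n\beta}}$, so the nesting of index sets is exact in the limit rather than only for fixed $\delta$, and to verify the entropy bookkeeping $I(\tilde{\mathcal{W}})=1-H(X|B)_\rho$ that ties the symmetric-channel capacity to the target fraction $I(X;B)_\rho$.
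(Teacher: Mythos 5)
Your proposal is correct and, at the level of architecture, coincides with the paper's proof: both reduce the channel family to the symmetric channel $\tilde{\mathcal{W}}$ of Eq.~\ref{Eq:Symmetric_Channel} via the reliability equality $Z(\tilde{\mathcal{W}}_N^{(i)})_\rho = Z(\mathcal{W}_N^{(i)})_\rho$, feed Proposition~\ref{Proposition:Preliminaries} with the bookkeeping $I(\tilde{\mathcal{W}}) = 1 - H(X|B)_\rho$ (the paper verifies exactly this identity at the start of its proof), polarize the source family separately, and finish with a counting argument on the four extremal index sets. The differences are local but worth noting. For the source family, the paper does not invoke Arikan's source polarization as an external theorem; it obtains the same densities $1-H(X)$ and $H(X)$ by specializing its own channel statement to the degenerate case where the output probe state is independent of $X$, so that $Z(\mathcal{W}_N^{(i)})_\rho = Z(U_i|U_1^{i-1})$ and $H(X|B)=H(X)$ --- same content as your step, different packaging. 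For the alignment, the paper proves only the disjointness of the bad-channel set and the low-source-entropy set (your sets $B$ and $C$), deduced from Proposition~\ref{Prop:MutualInformationAndReliability} together with $H(U_i|U_1^{i-1},B^N)\leq H(U_i|U_1^{i-1})$, and then counts $|B\cup C|$ and $|A\cap D|$. Your pointwise inequality $Z(\mathcal{W}_N^{(i)})_\rho \leq Z(U_i|U_1^{i-1})$ is cleaner and strictly stronger: it holds index-by-index at every $n$ and every threshold, since each fidelity factor in the conditional reliability is at most one (monotonicity of fidelity under discarding $B^N$), so the ``uniformity at the scale $2^{-2^{n\beta}}$'' difficulty you flag as the main obstacle does not actually arise --- the containment of the low-entropy source set in the good-channel set is exact, and the paper's disjointness claim is an immediate corollary of your inequality. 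One shared caveat: your ``complementary count,'' exactly like the paper's own proof, yields the density $1-I(X;B)_\rho$ for the \emph{union} of the two extremal sets, whereas the second display of the theorem as literally written asks for the \emph{intersection}, which your comparison inequality (and the paper's own disjointness argument) shows is eventually empty; this is evidently a typo in the statement (the ``and'' should be an ``or''), and neither your argument nor the paper's proves the literal intersection version.
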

\begin{proof}
Firstly, notice that
\begin{eqnarray}
I(\tilde{\mathcal{W}}) &=& I(\tilde{X}; \tilde{X}\oplus X,B)\nonumber\\
&=& H(\tilde{X}\oplus X,B) - H(\tilde{X}\oplus X,B|\tilde{X})\nonumber\\
&=& 1 + H(B) - H(X,B) = 1-H(X|B).
\label{Eq:mutualInformationOfSymmetricChannel}
\end{eqnarray}Applying Eq.~\ref{Eq:mutualInformationOfSymmetricChannel} to Proposition~\ref{Proposition:Preliminaries}, we can deduce that
\begin{subequations}
\label{Eq1:ProofTheorem}
\begin{eqnarray}
\lim_{n\rightarrow\infty}\frac{1}{2^n}|\{i\in\mathbb{Z}_2^n\colon Z(\mathcal{W}_N^{(i)})_\rho\leq 2^{-2^{n\beta}}\}| &=& 1 - H(X|B)_\rho,\\
\lim_{n\rightarrow\infty}\frac{1}{2^n}|\{i\in\mathbb{Z}_2^n\colon Z(\mathcal{W}_N^{(i)})_\rho\geq 1 - 2^{-2^{n\beta}}\}| &=& H(X|B)_\rho.
\end{eqnarray}
\end{subequations}

Additionally, it is possible to derive $Z(\mathcal{W}_N^{(i)})_\rho = Z(U_i|U_1^{i-1})$ and $H(X|B) = H(X)$ if the output probe state of
$\mathcal{W}$ is independent of $X$. Thus, a similar result to Eq.~\ref{Eq1:ProofTheorem} is derived
\begin{subequations}
\begin{eqnarray}
\lim_{n\rightarrow\infty}\frac{1}{2^n}|\{i\in\mathbb{Z}_2^n\colon Z(U_i|U_1^{i-1})\leq 2^{-2^{n\beta}} \}| &=& 1 - H(X),\\
\lim_{n\rightarrow\infty}\frac{1}{2^n}|\{i\in\mathbb{Z}_2^n\colon Z(U_i|U_1^{i-1})\geq 1 - 2^{-2^{n\beta}}\}| &=& H(X).
\end{eqnarray}
\end{subequations}Now, define $A, B, C,$ and $D$ as the sets
\begin{subequations}
\begin{eqnarray}
A &=& \{i\colon Z(\mathcal{W}_N^{(i)})_\rho\leq 2^{-2^{n\beta}}\},\\
B &=& \{i\colon Z(\mathcal{W}_N^{(i)})_\rho\geq 1 - 2^{-2^{n\beta}}\},\\
C &=& \{i\colon Z(U_i|U_1^{i-1})\leq 2^{-2^{n\beta}}\},\\
D &=& \{i\colon Z(U_i|U_1^{i-1})\geq 1 - 2^{-2^{n\beta}}\}.
\end{eqnarray}
\end{subequations}From Proposition~\ref{Prop:MutualInformationAndReliability} and $H(U_i|U_1^{i-1},B^N)\leq H(U_i|U_1^{i-1})$, it is
possible to see that $B\cap C$ is empty for sufficiently large $n$. Furthermore,
\begin{equation}
\lim_{n\rightarrow\infty}\frac{|A|+|B|}{2^n} = \lim_{n\rightarrow\infty}\frac{|C|+|D|}{2^n} = 1.
\end{equation}Therefore, the claim is derived from
\begin{equation}
\lim_{n\rightarrow\infty}\frac{|B\cup C|}{2^n} = \lim_{n\rightarrow\infty}\frac{|B|+|C|}{2^n} = 1 - I(X;B)_\rho.
\end{equation}and
\begin{equation}
\lim_{n\rightarrow\infty}\frac{|A\cap D|}{2^n} = 1 - \lim_{n\rightarrow\infty}\frac{|B\cup C|}{2^n} = I(X;B)_\rho.
\end{equation}
\end{proof}

As can be seen in the theorem statement and elaborated in the proof, we had to impose an additional constraint on the reliability of 
$U_i$ given the previous $U_1^{i-1}$ in order to derive our result. 
This is because we are dealing with asymmetric quantum reading. For symmetric quantum reading, the constrain is not needed.

\begin{remark}
Subsections~\ref{subsec:channel_combining} and \ref{subsec:channel_splitting}, and Proposition~\ref{Prop:MutualInformationAndReliability} are easily applied to
symmetric quantum channels discrimination, having it as a particular case when $p_X(0)=p_X(1) = \frac{1}{2}$. For the result in Theorem~\ref{Theorem:Rate}, there is
no need to introduce the new type of channel and, furthermore, the asymptotic analysis of the rate of good and bad channels does not impose anything on $Z(U_i|U_1^{i-1})$.
The symmetric quantum reading treatment of Theorem~\ref{Theorem:Rate} goes similarly to Section IV of Ref.~\cite{Wilde2013}.
\end{remark}

\subsection{Polar Coding and Decoding Scheme}

The encoding protocol of polar codes consists of setting up the labels for information bits and frozen bits. The former label is denoted by
$\mathcal{A}$ and the later by $\mathcal{A}^c$. Thus, we use bits $u_\mathcal{A} = \{u_i\}_{i\in\mathcal{A}}$ to transmit information. On the frozen bits,
they can be fixed for the whole transmission or can depend on the previous $u_1^{i-1}$ bits. For asymmetric channels, as our case, the latter strategy is more
suitable. See Ref.~\cite{honda13} for more explanations.

The construction of a codeword is done as follows. The source generates a uniform sequence $u_1^{|\mathcal{A}|}$. Next, the encoder determines the value
$u_i$, $i\in\mathcal{A}^c$, of the frozen bits in the ascending order by $u_i = \lambda_i(u_1^{i-1})$, where $\lambda_i$ is a function from $\{0,1\}^{i-1}$ to
$\{0,1\}$. Putting the sequence $u_1^{|\mathcal{A}|}$ in the information bits $u_\mathcal{A}$ and the frozen bits in the remaining coordinates, we have the vector $u^N$.
Now, the codeword is given by $x^N = u^N G_N$. It is clear that the code rate is $R = |\mathcal{A}|/N$. It remains to describe how the set $\mathcal{A}$ is
determined and which functions $\lambda_i$ do we use.

Determining the set of information bits is crucial for the performance of a polar code. Firstly, for a sufficiently large code length $N$, we choose the cardinality of
$\mathcal{A}$ so that the code rate $R < I(\mathcal{W})_\rho$. Next, we have to choose coordinates to constitute the set $\mathcal{A}$. A common approach is to select the coordinates
with smallest reliabilities. Formally, this selection goes as follows. Let $\mathcal{A}$ be the set that for any $j\in\mathcal{A}^c$ we have $Z(\mathcal{W}_N^{(j)})\geq Z(\mathcal{W}_N^{(i)})$
for all $i\in\mathcal{A}$. Additionally, because the construction under consideration is for asymmetric channels, we also impose $Z(U_i|U_1^{i-1})$, for all $i\in\mathcal{A}$,
to be large when compared with the elements in $\mathcal{A}^c$. Theorem~\ref{Theorem:Rate} makes use of this constrain to characterize the asymptotic rates in polar coding.
If any of these two constrains is not satisfied, we decrease the cardinality of $\mathcal{A}$ down to when they are satisfied.

The functions $\lambda_i$ used in this paper is such that optimize the probability of frozen bit output. Let $\Lambda_{\mathcal{A}^c} = \{\Lambda_i\}_{i\in\mathcal{A}^c}$ be
random variables which are independent of each other and input and output systems, and satisfy
\begin{equation}
p_{\Lambda_i}[\Lambda_i(u_1^{i-1})] = p_{U_i|U_1^{i-1}}(1|u_1^{i-1}),
\end{equation}for all $u_1^{i-1}\in\{0,1\}^{i-1}$. Then, $\lambda_i$ is a realization of the random variable $\Lambda_i$. There are practical methods to generate in practice the
functions $\lambda_i$ using pseudorandom number generators~\cite{honda13}, but there is no need to address it in this paper.

Considering symmetric channel discrimination, a similar encoding scheme can be proposed. First of all, the algorithm for choosing the frozen bits is exactly the one presented here
without the dependence of the previous bits. Thus, the encoding map is a realization of the random variable $\tilde{\Lambda}_i$, where
\begin{equation}
p_{\Lambda_i}[\Lambda_i = i] = \frac{1}{2},
\end{equation}for $i=1,2$. Secondly, the proposed strategy for defining the set $\mathcal{A}$ is applicable for symmetric channels discrimination. We only need to drop the constrain
on $Z(U_i|U_1^{i-1})$. Approximation techniques in symmetric polar codes \cite{Tal:2013} can compose the decoding scheme in order to derive a faster encoding scheme.

Now we describe the decoding process. Suppose the sequential decoder has obtained, up to this moment, the vector $\hat{u}_1^{i-1}$ and plans to obtain $\hat{u}_i$.
Then the decoding process divides into two cases. For the coordinates in $\mathcal{A}^c$, we apply the inverse encoding function depending on the previous
coordinates. Namely, we employ
\begin{equation}
\hat{u}_i = \lambda_i^{-1}(\hat{u}_1^{i-1}).
\end{equation}Notice that no measurement is implemented in this step and, for sufficiently large $N$, the error probability is arbitrarily low.
See Theorem~\ref{Theorem:Rate} for the proof. Next, we deal with the information bits. Quantum successive cancellation decoder is used in here. First of all, this makes our
decoding strategy and analysis totally different from any decoding method used for classical polar codes. Secondly, because of the constructive approach adopted in here, with
measurements created for our specific task, the polar decoding strategy differs from previous works on quantum polar codes.

For characterizing the error probability decay of the polar codes constructed, we firstly show the existence of ``pretty good measurements'' design to decode quantum memory cell and having
desirable error probabilities in Proposition~\ref{Proposition:PrettyGoodMeas}.

\begin{proposition}
Let $\mathcal{W}\colon x\in\mathcal{X}\rightarrow \rho_x^{UB}\in\mathcal{D}(\mathcal{H}_{UB})$ be a cq channel such that
\begin{equation}
\rho_x^{UB} = \sum_{u\in\mathcal{U}}p_{U}(u) \ketbra{u}{u}^U\otimes\rho_{x,u}^B,
\label{Proposition13:Defrhox}
\end{equation}where $X$ is a discrete random variable, and $\{\ket{u}\}_{u\in\mathcal{U}}$ is an orthonormal basis of a finite-dimensional Hilbert space. Then, there exists a POVM
$\{\Lambda_{x}^{UB}\}_{x\in\mathcal{X}}$ satisfying
\begin{equation}
1 - \sum_{x\in\mathcal{X}}p_X(x)\text{Tr}\{\Lambda_x\rho_x\} < \frac{1}{2}Z(\mathcal{W}).
\end{equation}
\label{Proposition:PrettyGoodMeas}
\end{proposition}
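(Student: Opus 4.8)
The plan is to exhibit the decoder as a square-root (``pretty good'') measurement adapted to the classical register $U$, and then to bound its error by the reliability through the relation between trace distance and fidelity. Since $Z(\mathcal{W})$ is defined for a Bernoulli input (Definition~\ref{def:reliability}), I take $\mathcal{X}=\mathbb{Z}_2$, which is the relevant case of the synthesized bit-channels $u_i\in\mathbb{Z}_2$. Writing $p_0=p_X(0)$, $p_1=p_X(1)$, $\tilde{\rho}_x=p_X(x)\rho_x^{UB}$ and $S=\sum_x\tilde{\rho}_x$, I set
\begin{equation}
\Lambda_x = S^{-1/2}\,\tilde{\rho}_x\,S^{-1/2},
\end{equation}
understood on the support of $S$. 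The first fact I would exploit is that, because each $\rho_x^{UB}=\sum_u p_U(u)\ketbra{u}{u}^U\otimes\rho_{x,u}^B$ is block diagonal in the orthonormal basis $\{\ket{u}\}$, so are $S$, $S^{-1/2}$ and hence every $\Lambda_x$. Thus the measurement never mixes different values of $u$: it reads off $U$ (which is free, as $U$ is classical and shared by both states) and then discriminates $\rho_{0,u}^B,\rho_{1,u}^B$ on $B$. This is exactly the structure demanded by the genie-aided successive-cancellation decoder, where $U$ carries the previously decoded bits $u_1^{i-1}$.

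Next I would record how the relevant quantities factorize over the classical register. From $\sqrt{\rho_x^{UB}}=\sum_u\sqrt{p_U(u)}\,\ketbra{u}{u}\otimes\sqrt{\rho_{x,u}^B}$ and orthogonality of the $\ket{u}$,
\begin{equation}
F(\rho_0^{UB},\rho_1^{UB}) = \sum_u p_U(u)\,F(\rho_{0,u}^B,\rho_{1,u}^B),
\end{equation}
so that $\tfrac12 Z(\mathcal{W})=\sqrt{p_0p_1}\sum_u p_U(u)F(\rho_{0,u}^B,\rho_{1,u}^B)$. On the other side, the factor $p_U(u)$ cancels in each block of $\Lambda_x$, leaving the per-block square-root POVM for the ensemble $\{p_0\rho_{0,u}^B,\,p_1\rho_{1,u}^B\}$ with the \emph{common} priors $p_0,p_1$; consequently the total error splits as $\sum_u p_U(u)\,e_u$, with $e_u$ the error of that binary discrimination.

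The analytic core is then the per-block estimate $e_u\le\sqrt{p_0p_1}\,F(\rho_{0,u}^B,\rho_{1,u}^B)$. The cleanest route is to compare with the Holevo--Helstrom optimum $e_u^{\mathrm{opt}}=\tfrac12\big(1-\|p_0\rho_{0,u}^B-p_1\rho_{1,u}^B\|_1\big)$ and to invoke the homogeneous lower Fuchs--van de Graaf inequality $\|A-B\|_1\ge\text{Tr}(A)+\text{Tr}(B)-2\|\sqrt{A}\sqrt{B}\|_1$ for positive operators; with $A=p_0\rho_{0,u}^B$, $B=p_1\rho_{1,u}^B$ this gives $e_u^{\mathrm{opt}}\le\sqrt{p_0p_1}F(\rho_{0,u}^B,\rho_{1,u}^B)$. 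Summing over $u$ with the factorization above yields
\begin{equation}
1-\sum_x p_X(x)\,\text{Tr}\{\Lambda_x\rho_x\}\ \le\ \sqrt{p_0p_1}\,F(\rho_0^{UB},\rho_1^{UB})\ =\ \tfrac12 Z(\mathcal{W}),
\end{equation}
and the strictness is inherited from the Fuchs--van de Graaf bound, which is strict whenever the conditional states are neither orthogonal nor identical, i.e. in the non-degenerate regime where the synthesized channels are actually used.

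The step I expect to be the main obstacle is securing the sharp constant $\tfrac12$ rather than $1$ for the measurement I wish to exhibit. Naive square-root union bounds of Barnum--Knill type only deliver $e_u\le 2\sqrt{p_0p_1}\,\text{Tr}\{\sqrt{\rho_{0,u}^B}\sqrt{\rho_{1,u}^B}\}$, losing a factor of two; so I must either route the per-block estimate through the Helstrom optimum as above --- legitimate here, since for the mere \emph{existence} claim the block-diagonal $\{\Lambda_x\}$ may be replaced on each block by the optimal binary POVM without leaving the allowed class of measurements --- or carry out the direct computation of $2\,\text{Tr}\{S_u^{-1/2}\tilde{\rho}_{0,u}S_u^{-1/2}\tilde{\rho}_{1,u}\}$ and show it is dominated by $\|\sqrt{\tilde{\rho}_{0,u}}\sqrt{\tilde{\rho}_{1,u}}\|_1$. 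A secondary, purely bookkeeping point is to confirm that the per-block priors are the common $p_0,p_1$ (the cancellation of $p_U(u)$) and that $\ker S$ contributes no error.
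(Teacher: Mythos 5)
Your proof is correct, and it shares the paper's overall architecture: a POVM that is block-diagonal in the classical register $U$, the factorization $F(\rho_0^{UB},\rho_1^{UB})=\sum_u p_U(u)\,F(\rho_{0,u}^B,\rho_{1,u}^B)$ (which is exactly the paper's step $(iv)$), and a per-block binary error bound summed with weights $p_U(u)$. Where you genuinely diverge is in the key lemma supplying the per-block estimate. The paper invokes the Barnum--Knill pretty-good-measurement bound inside each block, reading it as $P_{\mathrm{err}}<\sqrt{p_0p_1}\,F(\rho_{0,u},\rho_{1,u})$, and then assembles $\Lambda_x=\sum_u\Lambda_{x,u}\otimes\ketbra{u}{u}$; you instead place the Helstrom-optimal binary POVM in each block and bound its error through the generalized Fuchs--van de Graaf inequality $\|A-B\|_1\ge\mathrm{Tr}(A)+\mathrm{Tr}(B)-2\|\sqrt{A}\sqrt{B}\|_1$ applied to $A=p_0\rho_{0,u}$, $B=p_1\rho_{1,u}$. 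Your worry about the constant is well founded: the square-root-measurement guarantee is in general only within a factor of two of optimal, and routing through the Helstrom optimum secures the constant $\tfrac12$ unconditionally in the binary case --- arguably on firmer footing than the paper's reading of Barnum--Knill. What the paper's route buys in exchange is generality: the Barnum--Knill bound applies verbatim to any finite $\mathcal{X}$ (with $\tfrac12 Z(\mathcal{W})$ interpreted as the pairwise sum $\sum_{x\neq x'}\sqrt{p_X(x)p_X(x')}\,F(\rho_x,\rho_{x'})$, matching the statement's general discrete $X$), whereas the Helstrom/trace-distance argument has no closed-form analogue beyond two hypotheses, so your restriction to $\mathcal{X}=\mathbb{Z}_2$ is essential to your method, even if it is the case relevant to the synthesized bit-channels and to Definition~\ref{def:reliability}. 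One shared caveat: the strict inequality ``$<$'' is not really delivered by either argument, since both underlying bounds are non-strict and can be saturated (e.g.\ when all conditional pairs are orthogonal, both sides vanish); you at least flag the degenerate cases explicitly, which the paper does not. Your bookkeeping points --- cancellation of $p_U(u)$ in each block so the per-block priors are the common $p_0,p_1$, and assigning $\ker S$ arbitrarily to complete the POVM --- are both handled correctly.
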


\begin{proof}
To derive our claim, we need to invoke a result from Barnum and Knill \cite{Barnum2002}. Let $p_U(u)$ be the probability of a system to be found in the state $\rho_u$, for $u\in\mathcal{U}$, then
there exists a POVM $\{\Lambda_u^* \}_{u\in\mathcal{U}}$ such that the average success probability is lower bounded as
\begin{equation}
P_{succ} = \sum_{u\in\mathcal{U}} p_U(u)\text{Tr}\{\rho_u^{\otimes M}\Lambda_u^*\} \geq 1 - \sum_{u\neq v}\sqrt{p_U(u) p_U(v)}F(\rho_u^M,\rho_v^M).
\end{equation}In particular, there exists a POVM $\{\Lambda_u\}_{u\in\mathcal{U}}$ such that the probability of error follows the following inequality
\begin{equation}
P_{err} = 1 - \sum_{u\in\mathcal{U}} p_U(u)\text{Tr}\{\rho_u\Lambda_u\} \leq \sum_{u\neq v}\sqrt{p_U(u) p_U(v)}F(\rho_u,\rho_v).
\label{Proposition13:Eq73}
\end{equation}Now, we return to our proof. Assume that $\mathcal{W}^{u}\colon x\rightarrow\rho_{x,u}$ is a cq channel, where $x\in\mathcal{X}$
is a realization of a discrete random variable $X$. The result in Eq.~\ref{Proposition13:Eq73} says that there exist POVMs $\{\Lambda_{x,u}\}_{x\in\mathcal{X}}$ satisfying
\begin{equation}
1 - \sum_{x\in\mathcal{X}} p_X(x)\text{Tr}\{\rho_{x,u}\Lambda_{x,u}\} < \frac{1}{2}Z(\mathcal{W}^{u}).
\label{Proposition13:InequalityZ}
\end{equation}Defining, for every $u\in\mathcal{U}$, the POVM
\begin{equation}
\Lambda_x := \sum_{u\in\mathcal{U}}\Lambda_{x,u}\otimes\ketbra{u}{u},
\label{Proposition13:DefPix}
\end{equation}we have that
\begin{eqnarray}
&1& - \sum_{x\in\mathcal{X}}p_X(x)\text{Tr}\{\Lambda_x\rho_x\} \overset{(i)}{=} 1 - \sum_{x\in\mathcal{X}}p_X(x)\sum_{u\in\mathcal{U}}p_U(u)\text{Tr}\{\Lambda_{x,u}\rho_{x,u}\}\\
&=&\sum_{u\in\mathcal{U}}p_{U}(u)\sum_{x\in\mathcal{X}}p_X(x)\Big{(}1 - \text{Tr}\{\Lambda_{x,u}\rho_{x,u}\}\Big{)}\\
&\overset{(ii)}{<}& \sum_{u\in\mathcal{U}}p_U(u)\frac{1}{2}Z(\mathcal{W}^{u})\\
&\overset{(iii)}{=}& \sum_{\substack{x,x'\in\mathcal{X}\\x\neq x'}}\sqrt{p_X(x)p_X(x')}F\Big{(}\sum_{u\in\mathcal{U}}p_U(u)\ketbra{u}{u}\otimes\rho_{x,u},\sum_{u\in\mathcal{U}}p_U(u)\ketbra{u}{u}\otimes\rho_{x',u}\Big{)}\\
&\overset{(iv)}{=}& \sum_{\substack{x,x'\in\mathcal{X}\\x\neq x'}}\sqrt{p_X(x)p_X(x')}F\Big{(}\rho_x,\rho_{x'}\Big{)}\\
&=& \frac{1}{2}Z(\mathcal{W}).
\end{eqnarray}The equality in $(i)$ follows from the definitions of $\rho_x$ and $\Lambda_x$ in Eq.~\ref{Proposition13:Defrhox} and \ref{Proposition13:DefPix}, respectively.
The inequality in Eq.~\ref{Proposition13:InequalityZ} is used to obtain $(ii)$. The item $(iii)$ is just the computation of the reliability for $\mathcal{W}^u$. Lastly, we use in $(iv)$ the
fact that $F(\sum_x p_X(x)\ketbra{x}{x}\otimes \sigma_x,\sum_x q_X(x)\ketbra{x}{x}\otimes \tau_x) = \sum_x\sqrt{p_X(x)q_X(x)}F(\sigma_x,\tau_x)$.
\end{proof}

Analyzing a quantum successive cancellation decoder employing the measurements in Proposition~\ref{Proposition:PrettyGoodMeas} needs the use of a quantum union bound.
Lemma~\ref{Lemma:QuantumUnionBound} introduces the one used in this paper. As it is shown in the proof of Theorem~\ref{Theorem:ErrorProbDecay}, the inequality
in Lemma~\ref{Lemma:QuantumUnionBound} allows us to derive a bound for the error probability in accordance to what is expected using polar codes.

\begin{lemma}\cite[Lemma 4.1]{Oskouei2019}
Let $\rho$ be a positive semi-definite operator acting on a separable Hilbert space $\mathcal{H}_B$, let $\{\Lambda_i\}_{i=1}^L$
denote a set of positive semi-definite operators such that $0\leq \Lambda_i\leq I$ for all $i\in\{1,\ldots, L\}$, and let $c>0$. Then
the following quantum union bound holds
\begin{eqnarray}
\text{Tr}\{\rho\} - \text{Tr}\{\Pi_{\Lambda_L}\cdots\Pi_{\Lambda_1}(\rho\otimes\ketbra{\overline{0}}{\overline{0}}_{P^L})\Pi_{\Lambda_1}\cdots\Pi_{\Lambda_L}\}\leq (1+c)\text{Tr}\{(I-\Lambda_L)\rho\}\nonumber\\
+ (2+c+c^{-1})\sum_{i=2}^{L-1}\text{Tr}\{(I-\Lambda_i)\rho\} + (2+c^{-1})\text{Tr}\{(I-\Lambda_1)\rho\},
\label{Eq:QuantumUnionBound}
\end{eqnarray}where $\ket{\overline{0}_{P_L}}\equiv \ket{0_{P_1}}\otimes\cdots\otimes\ket{0_{P_L}}$ is an auxiliary state of $L$ qubit probe systems and $\Pi_{\Lambda_i}$ is a projector
defined as $\Pi_{\Lambda_i} := U_i^\dagger P_i U_i$ for some unitary $U_i$ and projector $P_i$ such that
$\text{Tr}\{\Pi_{\Lambda_i}(\rho\otimes\ketbra{\overline{0}}{\overline{0}}_{P^L})\} = \text{Tr}\{\Lambda_i\rho\}$. In particular, for any set of positive semi-definite operators, there exists
a set of projectors satisfying Eq.~\ref{Eq:QuantumUnionBound}.
\label{Lemma:QuantumUnionBound}
\end{lemma}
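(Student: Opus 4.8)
The plan is to establish the bound through two reductions followed by a telescoping estimate. First I would dispose of the passage from POVM elements to projectors asserted in the last sentence of the statement: given $\{\Lambda_i\}$ with $0\le\Lambda_i\le I$, Naimark's construction on the probe registers $P^L$ produces unitaries $U_i$ and projectors $P_i$ with $\Pi_{\Lambda_i}=U_i^\dagger P_i U_i$ and $\text{Tr}\{\Pi_{\Lambda_i}(\sigma\otimes\ketbra{\overline{0}}{\overline{0}})\}=\text{Tr}\{\Lambda_i\sigma\}$ for every $\sigma$; in particular $\text{Tr}\{(I-\Pi_{\Lambda_i})(\rho\otimes\ketbra{\overline{0}}{\overline{0}})\}=\text{Tr}\{(I-\Lambda_i)\rho\}$, so both sides of Eq.~\ref{Eq:QuantumUnionBound} coincide with the projector version on the dilated space. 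It therefore suffices to prove the inequality for genuine projectors $\Pi_i:=\Pi_{\Lambda_i}$ acting on the enlarged Hilbert space, applied to the state $\rho\otimes\ketbra{\overline{0}}{\overline{0}}$. Since every term in the claimed inequality is linear in this state, I would then diagonalize it and reduce to a single unnormalized pure state $\ketbra{v}{v}$, so that the goal becomes the vector inequality $\|v\|^2-\|\Pi_L\cdots\Pi_1 v\|^2\le\cdots$, with each $\text{Tr}\{(I-\Pi_i)\rho\}$ replaced by $\|Q_i v\|^2$ and $Q_i:=I-\Pi_i$.

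The core is an exact telescoping of the squared norm. Writing $v_k:=\Pi_k\cdots\Pi_1 v$ with $v_0=v$, the projector identity $\|v_{k-1}\|^2=\|\Pi_k v_{k-1}\|^2+\|Q_k v_{k-1}\|^2$ gives
\[
\|v\|^2-\|v_L\|^2=\sum_{k=1}^L\|Q_k v_{k-1}\|^2 ,
\]
so it remains to control each $\|Q_k v_{k-1}\|$ by the boundary data $\|Q_i v\|$. For this I would use the first-order telescoping identity $v-v_{k-1}=\sum_{j=1}^{k-1}\Pi_{k-1}\cdots\Pi_{j+1}Q_j v$, which expresses the accumulated disturbance after the first $k-1$ measurements directly in terms of the $Q_j$ applied to $v$. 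Substituting $v_{k-1}=v-\sum_{j<k}\Pi_{k-1}\cdots\Pi_{j+1}Q_j v$ then yields $Q_k v_{k-1}=Q_k v-\sum_{j<k}Q_k\Pi_{k-1}\cdots\Pi_{j+1}Q_j v$, an expansion of the disturbance of the $k$-th measurement into its ``direct'' part $Q_k v$ and corrections inherited from the earlier projectors.

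The decisive step is to expand $\|Q_k v_{k-1}\|^2$ from this expression and bound the diagonal and cross terms without letting the coefficient of any single $\|Q_i v\|^2$ grow with $L$. Each cross term I would handle by the weighted inequality $2\,\mathrm{Re}\langle a,b\rangle\le c\|a\|^2+c^{-1}\|b\|^2$, with the free parameter $c>0$ tuned so that, after summing over $k$, every $\|Q_i v\|^2$ is charged only by its neighbours in the telescoping rather than by all later indices. Because the extreme indices $i=1$ and $i=L$ participate in strictly fewer such pairings than an interior index, they acquire the smaller coefficients $2+c^{-1}$ and $1+c$, while each interior index collects the full $2+c+c^{-1}$; gathering these contributions reproduces the right-hand side of Eq.~\ref{Eq:QuantumUnionBound}. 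I expect the main obstacle to be precisely this bookkeeping: a naive triangle-inequality bound on $Q_k v_{k-1}$ forces the coefficient of $\|Q_i v\|^2$ to scale like $L$, so the argument must instead retain the cross terms and pair them judiciously, which is exactly where the parameter $c$ and the asymmetry between the boundary and interior constants are forced upon us.
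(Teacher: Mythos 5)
A preliminary remark: the paper itself offers no proof of this lemma---it is imported verbatim from the cited reference---so your proposal can only be measured against the source's argument (a refinement, due to Oskouei, Mancini and Wilde, of Gao's proof of the quantum union bound). Your preparatory reductions are all correct and match the standard treatment: Naimark dilation with one fresh probe qubit per measurement gives projectors with $\text{Tr}\{(I-\Pi_{\Lambda_i})(\rho\otimes\ketbra{\overline{0}}{\overline{0}})\}=\text{Tr}\{(I-\Lambda_i)\rho\}$; every term in the claimed bound is linear in the input operator, so the pure-state projector version suffices; and both the telescoping identity $\|v\|^2-\|v_L\|^2=\sum_{k=1}^{L}\|Q_k v_{k-1}\|^2$ and the expansion $v-v_{k-1}=\sum_{j<k}\Pi_{k-1}\cdots\Pi_{j+1}Q_j v$ are exact.

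The decisive step, however, is asserted rather than proved, and the specific plan you sketch does not work. If you expand $\|Q_k v_{k-1}\|^2$ in terms of $Q_k v$ and the corrections $Q_k\Pi_{k-1}\cdots\Pi_{j+1}Q_j v$, and bound each cross term by $c\|a\|^2+c^{-1}\|b\|^2$, then $\|Q_j v\|^2$ is charged once for every later index $k>j$; the operator products $Q_k\Pi_{k-1}\cdots\Pi_{j+1}Q_j$ carry no sign structure, so no cancellation is available and the coefficients necessarily grow with $L$. The obstruction is already visible at $L=3$: from $Q_3v_2=Q_3v-Q_3Q_2v-Q_3\Pi_2Q_1v$ one gets $\|Q_3v_2\|^2\le(1+c)\|Q_3v\|^2+(1+c^{-1})\|Q_3Q_2v+Q_3\Pi_2Q_1v\|^2$, and splitting the last norm with any fresh weight $d>0$ leaves $\|Q_2v\|^2$ with coefficient $(1+c)+(1+c^{-1})(1+d)>2+c+c^{-1}$, so no choice of weights in your scheme reaches the stated constants. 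The idea that actually closes the argument is not neighbour pairing but self-absorption. One bounds $D_L:=\|v\|^2-\|v_L\|^2=2\,\mathrm{Re}\langle v,v-v_L\rangle-\|v-v_L\|^2\le 2\sum_{k=1}^{L}\mathrm{Re}\langle Q_kv,Q_kv_{k-1}\rangle$ (using $v-v_L=\sum_k Q_kv_{k-1}$ and $Q_k^2=Q_k=Q_k^\dagger$), keeps the $k=1$ term exactly (it equals $\|Q_1v\|^2$), and charges each remaining cross term against the on-path disturbance that already appears inside $D_L$, namely $2\,\mathrm{Re}\langle Q_kv,Q_kv_{k-1}\rangle\le x\|Q_kv_{k-1}\|^2+x^{-1}\|Q_kv\|^2$. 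Since $\sum_k\|Q_kv_{k-1}\|^2$ is exactly $D_L$, the term $xD_L$ can be moved to the left-hand side and the resulting self-referential inequality solved for $D_L$; with $x=c/(1+c)$ one has $1/(x(1-x))=2+c+c^{-1}$, which is precisely where the interior constant comes from, and an asymmetric treatment of the first and last measurements yields the improved boundary constants $2+c^{-1}$ and $1+c$. In this mechanism each $\|Q_iv\|^2$ is paired only with its own $\|Q_iv_{i-1}\|^2$ and never with other indices---this absorption trick, not the cross-term bookkeeping you propose, is the missing ingredient.
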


An important attribute of polar codes is the error probability decay when the length of the code grows. We show in Theorem~\ref{Theorem:ErrorProbDecay} that
error probability decays exponentially fast in the code length. This result motivates the use of polar codes in practical protocols devoted to discriminate quantum memory cell.

\begin{theorem}
Let $\{\mathcal{W}^x\}_{x\in\mathcal{X}}$ be a quantum memory cell and $\gamma\in\mathbb{R}$ a positive constant. Then there exists a polar code with parameters $(N,K,\mathcal{A})$ such that the error
probability is bounded above by
\begin{equation}
P_e(N,K,\mathcal{A}) \leq \gamma\sum_{i\in\mathcal{A}} Z_\rho(\mathcal{W}^{(i)}).
\end{equation}In particular, for any fixed $R = K/N< I(W)$ and $\beta<1/2$, block error probability for polar coding under sequential
	decoding satisfies
	\begin{equation}
	P_e(N,K)= o(2^{-N^\beta}),
	\end{equation}where $o(\cdot)$ is the little-O notation from complexity theory.
\label{Theorem:ErrorProbDecay}
\end{theorem}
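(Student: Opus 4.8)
The plan is to analyze a genie-aided quantum successive cancellation decoder and to control the accumulation of measurement error through the quantum union bound of Lemma~\ref{Lemma:QuantumUnionBound}. First I would fix the code: given $R=K/N<I(\mathcal{W})_\rho$, invoke Theorem~\ref{Theorem:Rate} to select the information set $\mathcal{A}$ of size $K$ consisting of indices $i$ with $Z(\mathcal{W}_N^{(i)})_\rho$ small (and with $Z(U_i|U_1^{i-1})$ large, as the asymmetric case requires), the frozen positions $\mathcal{A}^c$ being decoded deterministically via the functions $\lambda_i$. For each information index $i\in\mathcal{A}$ I would attach the POVM supplied by Proposition~\ref{Proposition:PrettyGoodMeas} applied to the $i$-th synthesized channel $\mathcal{W}_N^{(i)}$, whose per-bit error probability, conditioned on correct decoding of $u_1^{i-1}$ (which the genie furnishes), is strictly below $\tfrac{1}{2}Z(\mathcal{W}_N^{(i)})_\rho$.

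Next I would cast these measurements in the projector form demanded by Lemma~\ref{Lemma:QuantumUnionBound}, which guarantees, for any collection of effect operators $\{\Lambda_i\}$, a set of projectors $\Pi_{\Lambda_i}=U_i^\dagger P_i U_i$ on a dilated space with $\text{Tr}\{\Pi_{\Lambda_i}(\rho\otimes\ketbra{\overline{0}}{\overline{0}})\}=\text{Tr}\{\Lambda_i\rho\}$. The sequential decoder's success probability is then $\text{Tr}\{\Pi_{\Lambda_L}\cdots\Pi_{\Lambda_1}(\rho\otimes\ketbra{\overline{0}}{\overline{0}})\Pi_{\Lambda_1}\cdots\Pi_{\Lambda_L}\}$, the product running over the decoding steps, with frozen positions carrying no measurement and hence contributing trivial factors. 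Applying the union bound to the complementary error probability yields
\begin{equation}
P_e(N,K,\mathcal{A}) \leq (1+c)\,\text{Tr}\{(I-\Lambda_L)\rho\} + (2+c+c^{-1})\sum_{i=2}^{L-1}\text{Tr}\{(I-\Lambda_i)\rho\} + (2+c^{-1})\,\text{Tr}\{(I-\Lambda_1)\rho\},
\end{equation}
where each $\text{Tr}\{(I-\Lambda_i)\rho\}$ is exactly the $i$-th per-bit error, bounded by $\tfrac{1}{2}Z(\mathcal{W}_N^{(i)})_\rho$ via Proposition~\ref{Proposition:PrettyGoodMeas}, and only the indices in $\mathcal{A}$ survive. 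Collecting the $c$-dependent coefficients into a single constant (optimizing at $c=1$ gives the largest coefficient $4$, whence a factor $2$ after the $\tfrac{1}{2}$) produces the claimed bound $P_e(N,K,\mathcal{A})\leq \gamma\sum_{i\in\mathcal{A}} Z(\mathcal{W}_N^{(i)})_\rho$.

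For the asymptotic statement, fix $\beta<1/2$ and pick $\beta''$ with $\beta<\beta''<1/2$. Theorem~\ref{Theorem:Rate} then lets us take $\mathcal{A}$ with $|\mathcal{A}|=K$ and $Z(\mathcal{W}_N^{(i)})_\rho\leq 2^{-2^{n\beta''}}=2^{-N^{\beta''}}$ for every $i\in\mathcal{A}$, so that $P_e(N,K)\leq \gamma N\,2^{-N^{\beta''}}$. Since $\beta''>\beta$, the exponent gap $N^{\beta''}-N^{\beta}$ outgrows $\log N$, giving $\gamma N\,2^{-N^{\beta''}}=o(2^{-N^{\beta}})$, as required.

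I expect the principal obstacle to be the second step rather than the bookkeeping: in the quantum setting the per-bit measurements do not commute and each disturbs the state seen by later bits, so a naive classical union bound is invalid. The quantum union bound of Lemma~\ref{Lemma:QuantumUnionBound} is precisely what licenses summing the per-bit errors at the cost of the benign multiplicative constants, and the delicate point is verifying that the genie-aided conditioning makes each surviving term coincide with the single-channel error controlled by Proposition~\ref{Proposition:PrettyGoodMeas}, so that the accumulated disturbance is fully absorbed into the union-bound constants.
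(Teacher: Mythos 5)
Your proposal is correct and follows essentially the same route as the paper's proof: pretty good measurements from Proposition~\ref{Proposition:PrettyGoodMeas} applied to the synthesized channels, the quantum union bound of Lemma~\ref{Lemma:QuantumUnionBound} in its dilated projector form to sum the per-bit errors of the successive cancellation decoder, the observation that frozen positions carry the identity measurement so only $i\in\mathcal{A}$ survive, constants absorbed into $\gamma$, and the polarization rates of Theorem~\ref{Theorem:Rate} (via Proposition~\ref{Proposition:Preliminaries}) for the $o(2^{-N^\beta})$ asymptotics. The ``delicate point'' you flag---that each conditioned per-bit error coincides with the synthesized-channel error---is exactly what the paper verifies by explicitly rearranging the sums over $u^N$ into the form of Eq.~\ref{Eq:DefOverlineW}, and your asymptotic bookkeeping with an intermediate $\beta''$ is, if anything, slightly more careful than the paper's.
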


\begin{proof}
First of all, suppose we have access to an auxiliary system $\ket{\overline{0}_{P_N}} = \ket{0_{P_1}}\otimes\cdots\otimes\ket{0_{P_N}}$. Lemma~\ref{Lemma:QuantumUnionBound}
guarantees the existence of projective measurements obtained by extending the pretty good measurements of Proposition~\ref{Proposition:PrettyGoodMeas}. We denote the projective
and pretty good measurements by the sets $\{\Pi_{\Lambda_i}\}_{i=1}^N$ and $\{\Lambda_i\}_{i=1}^N$, respectively. Then, using these
projective measurements on a quantum successive cancellation strategy for the information bits, a bound for the error probability is obtained as follows
\begin{eqnarray}
P_e(N,K,\mathcal{A}) &=& 1 - \sum_{u^N}p_{U^N}(u^N) \text{Tr}\{\Pi_{\Lambda_{u^N}}\mathcal{W}^{u^N}(\rho^N)\otimes\ketbra{\overline{0}_{P_N}}{\overline{0}_{P_N}}\}\\
&=& 1 - \sum_{u^N}p_{U^N}(u^N) \text{Tr}\{\Pi_{\Lambda_{u_1^{N-1}u_N}}^{B^N}\cdots\Pi_{\Lambda_{u_1}}^{B^N}\mathcal{W}^{u^N}(\rho^N)\otimes\ketbra{\overline{0}_{P_N}}{\overline{0}_{P_N}}\Pi_{\Lambda_{u_1}}^{B^N}\cdots\Pi_{\Lambda_{u_1^{N-1}u_N}}^{B^N}\}\\
&\leq& \sum_{u^N}p_{U^N}(u^N)\Big{(} (1+c)\text{Tr}\Big{\{}(I-\Lambda_{u_1^{N-1}u_N}^{B^N})\mathcal{W}^{u^N}(\rho^N)\Big{\}}\nonumber\\
&+& (2+c+c^{-1})\sum_{i=2}^{N-1}\text{Tr}\Big{\{}(I-\Lambda_{u_1^{i-1}u_i}^{B^N})\mathcal{W}^{u^N}(\rho^N)\Big{\}}\nonumber\\
&+& (2+c^{-1})\text{Tr}\Big{\{}(I-\Lambda_{u_1}^{B^N})\mathcal{W}^{u^N}(\rho^N)\Big{\}}\Big{)},
\label{Eq:79}
\end{eqnarray}where the second inequality is obtained from the definition of a quantum successive cancellation decoder and the inequality from Lemma~\ref{Lemma:QuantumUnionBound}.
Distributing and rearranging the sums, the following equality holds
\begin{eqnarray}
\text{RHS of Eq.~\ref{Eq:79}}&=& (1+c)\sum_{u^N}p_{U^N}(u^N) \text{Tr}\Big{\{}(I-\Lambda_{u_1^{N-1}u_N}^{B^N})\mathcal{W}^{u^N}(\rho^N)\Big{\}}\nonumber\\
&+& (2+c+c^{-1})\sum_{i=2}^{N-1}p_U(u_i)\sum_{u_{1}^{i-1}}p_{U_{1}^{i-1}}(u_{1}^{i-1})\text{Tr}\Big{\{}(I-\Lambda_{u_1^{i-1}u_i}^{B^N})\sum_{u_{1}^{i-1}}p_{U_{i+1}^{N}}(u_{i+1}^{N})\mathcal{W}^{u^N}(\rho^N)\Big{\}}\nonumber\\
&+& (2+c^{-1})\sum_{u_{1}}p_{U}(u_1)\text{Tr}\Big{\{}(I-\Lambda_{u_1}^{B^N})\sum_{u_{2}^{i-1}}p_{U_{2}^{N}}(u_{2}^{N})\mathcal{W}^{u^N}(\rho^N)\Big{\}}.
\label{Eq:80}
\end{eqnarray}Continuing,
\begin{eqnarray}
\text{RHS of Eq.~\ref{Eq:80}}&=& (1+c)\sum_{u^N}p_{U^N}(u^N) \text{Tr}\Big{\{}(I-\Lambda_{u_1^{N-1}u_N}^{B^N})\overline{\mathcal{W}}^{u_1^N}(\rho^N)\Big{\}}\nonumber\\
&+& (2+c+c^{-1})\sum_{i=2}^{N-1}p_U(u_i)\sum_{u_{1}^{i-1}}p_{U_{1}^{i-1}}(u_{1}^{i-1})\text{Tr}\Big{\{}(I-\Lambda_{u_1^{i-1}u_i}^{B^N})\overline{\mathcal{W}}^{u_1^i}(\rho^N)\Big{\}}\nonumber\\
&+& (2+c^{-1})\sum_{u_{1}}p_{U}(u_1)\text{Tr}\Big{\{}(I-\Lambda_{u_1}^{B^N})\overline{\mathcal{W}}^{u_1}(\rho^N)\Big{\}}\\
&=& (1+c)\sum_{u^N}p_{U^N}(u^N) \text{Tr}\Big{\{}\Big{(}\sum_{u_1^{N-1}}\ketbra{u_1^{N-1}}{u_1^{N-1}}\otimes (I-\Lambda_{u_1^{N-1}u_N}^{B^N})\Big{)}\nonumber\\
&\Big{(}&\sum_{u_1^{N-1}}p_{U_1^{N-1}}(u_1^{N-1})\ketbra{u_1^{N-1}}{u_1^{N-1}}\otimes\overline{\mathcal{W}}^{u_1^N}(\rho^N)\Big{)}\Big{\}}\nonumber\\
&+& (2+c+c^{-1})\sum_{i=2}^{N-1}p_U(u_i)\text{Tr}\Big{\{}\Big{(}\sum_{u_1^{i-1}}\ketbra{u_1^{i-1}}{u_1^{i-1}}\otimes (I-\Lambda_{u_1^{i-1}u_i}^{B^N})\Big{)}\nonumber\\
&\Big{(}&\sum_{u_1^{i-1}}p_{U_1^{i-1}}(u_1^{i-1})\ketbra{u_1^{i-1}}{u_1^{i-1}}\otimes \overline{\mathcal{W}}^{u_1^i}(\rho^N)\Big{)}\Big{\}}\nonumber\\
&+& (2+c^{-1})\sum_{u_{1{}}}p_{U}(u_1)\text{Tr}\Big{\{}(I-\Lambda_{u_1}^{B^N})\overline{\mathcal{W}}^{u_1}(\rho^N)\Big{\}},
\label{Eq:82}
\end{eqnarray}where the first equality follows from Eq.~\ref{Eq:DefOverlineW2}. For the second equality, we have used the fact that \cite{Wilde2013}
\begin{equation}
\sum_x p_X(x)\text{Tr}\{A_x\rho_x\} = \text{Tr}\Big{\{}\Big{(}\sum_x \ketbra{x}{x}\otimes A_x\Big{)}\Big{(}\sum_{x'}p_X(x)\ketbra{x'}{x'}\otimes\rho_{x'}\Big{)}\Big{\}}.
\end{equation}Now, from the definition of synthesized channels present in Eq.~\ref{Eq:DefOverlineW} the next equality is produced
\begin{eqnarray}
\text{RHS of Eq.~\ref{Eq:82}}&=& (1+c)\sum_{u^N}p_{U^N}(u^N) \text{Tr}\Big{\{}\Big{(}\sum_{u_1^{N-1}}\ketbra{u_1^{N-1}}{u_1^{N-1}}\otimes (I-\Lambda_{u_1^{N-1}u_N}^{B^N})\Big{)}\mathcal{W}^{(N),u_N}(\rho^N)\Big{\}}\nonumber\\
&+& (2+c+c^{-1})\sum_{i=2}^{N-1}p_U(u_i)\text{Tr}\Big{\{}\Big{(}\sum_{u_1^{i-1}}\ketbra{u_1^{i-1}}{u_1^{i-1}}\otimes (I-\Lambda_{u_1^{i-1}u_i}^{B^N})\Big{)}\mathcal{W}^{(i),u_i}(\rho^N)\Big{\}}\nonumber\\
&+& (2+c^{-1})\sum_{u_{1}}p_{U}(u_1)\text{Tr}\Big{\{}(I-\Lambda_{u_1}^{B^N})\mathcal{W}^{(1),u_1}(\rho^N)\Big{\}}.
\label{Eq:84}
\end{eqnarray}Now, the number of terms in the previous sum can be reduced by means of a simple observation. 
Since the frozen bits are obtained by using a classical function, there is no need to make a measurement on the frozen bits position; i.e., 
$\Lambda_{u_1^{i-1}u_i}^{B^N} = I$ for any $i\in\mathcal{A}^c$. This leads to 
\begin{eqnarray}
\text{RHS of Eq.~\ref{Eq:84}}&<& \frac{(1+c+c^{-1})}{2}\sum_{i\in\mathcal{A}} Z_\rho(\mathcal{W}^{(i)}) = o(2^{-N^\beta}),
\end{eqnarray}where the definition of reliability for synthesized channels have been applied. The fact that this sum equals to $o(2^{-N^\beta})$
is a consequence of Proposition~\ref{Proposition:Preliminaries}.
\end{proof}

When one considers this decoding scheme for symmetric quantum reading, the same result as in Theorem~\ref{Theorem:ErrorProbDecay} is obtained. Thus, the strategy used is the same,
and there is only the need to properly design the pretty good measurement for the symmetric channels under consideration.

\section{Probe State Analysis}
\label{sec:inputDependenceAnalysis}
We are going to analyze how reliability and rate depend on the probe state used. The comparison will be between pure and entangled probe states. 
In both cases, a parameter of the probe states is maximized to give the optimal value of the quantity under consideration.

First of all, we adopt
\begin{equation}
\ket{\psi} = \sqrt{1-q}\ket{0} + e^{-i\phi}\sqrt{q}\ket{1}
\label{Eq:pure_probe}
\end{equation}as the pure probe state, where $q\in[0,1]$ and $\phi\in[0,2\pi)$. Because of the symmetric action of the AD channel 
with respect to the $z$-axis in the Bloch sphere representation, we can assume, without loss of generality, that $\phi = 0$. 
Secondly, the entangled state used as probe is given by
\begin{equation}
\ket{\Psi^{xz}} = (Z^zX^x\otimes I)\ket{\Phi},
\label{Eq:mixed_probe}
\end{equation}where $z,x\in\mathbb{Z}_2$, and $\ket{\Phi} = \sqrt{1-q}\ket{00} + \sqrt{q}\ket{11}$, with $q\in[0,1]$. Notice that for $q = 1/2$, we have 
all four Bell states. Using these pure and entangled states, we can compare their goodness for quantum reading assisted by 
polar codes. This comparison is made by optimizing the quantity under consideration with respect to $q$ and $q$.

Lets start considering pure probe states. Substituting Eq.~\ref{Eq:pure_probe} into Eq.~\ref{Eq:ADchannel}, it is possible to see that 
\begin{eqnarray}
	\mathcal{W}^{u_0}(\ketbra{\psi}{\psi}) &=& 
	\begin{pmatrix}
		1 - (1-u_0)q & \sqrt{(1-u_0)q(q-1)}\\
		\sqrt{(1-u_0)q(q-1)} & (1-u_0)q
	\end{pmatrix},\\
	\mathcal{W}^{u_1}(\ketbra{\psi}{\psi}) &=& 
	\begin{pmatrix}
		1 - (1-u_1)q & \sqrt{(1-u_1)q(q-1)}\\
		\sqrt{(1-u_1)q(q-1)} & (1-u_1)q
	\end{pmatrix}.
\end{eqnarray}Now, the minimal value of reliability with respect to $q$ can be computed. We are going to study a normalized version of reliability 
given by 
\begin{equation}
\overline{Z}(\mathcal{W})_{\ketbra{\psi}{\psi}} = \max_{q} \frac{Z(\mathcal{W})_{\ketbra{\psi}{\psi}}}{2\sqrt{p(p-1)}} = \max_{q} F(\mathcal{W}^{u_0}(\ketbra{\psi}{\psi}),\mathcal{W}^{u_1}(\ketbra{\psi}{\psi})).
\end{equation}For the current probe state, it is shown in Fig.~\ref{Fig:FidelityPureState} the minimal values of $\overline{Z}(\mathcal{W})_{\ketbra{\psi}{\psi}}$ and the corresponding $q$.
An important fact can be seen in Fig.~\ref{Fig:FidelityPureState}, the optimal pure probe state is given by the excited state $\ket{1}$. 

\begin{figure}[h!]
	\hspace{-.5cm}
	\begin{subfigure}{.55\textwidth}
		\centering
		\includegraphics[width=\linewidth]{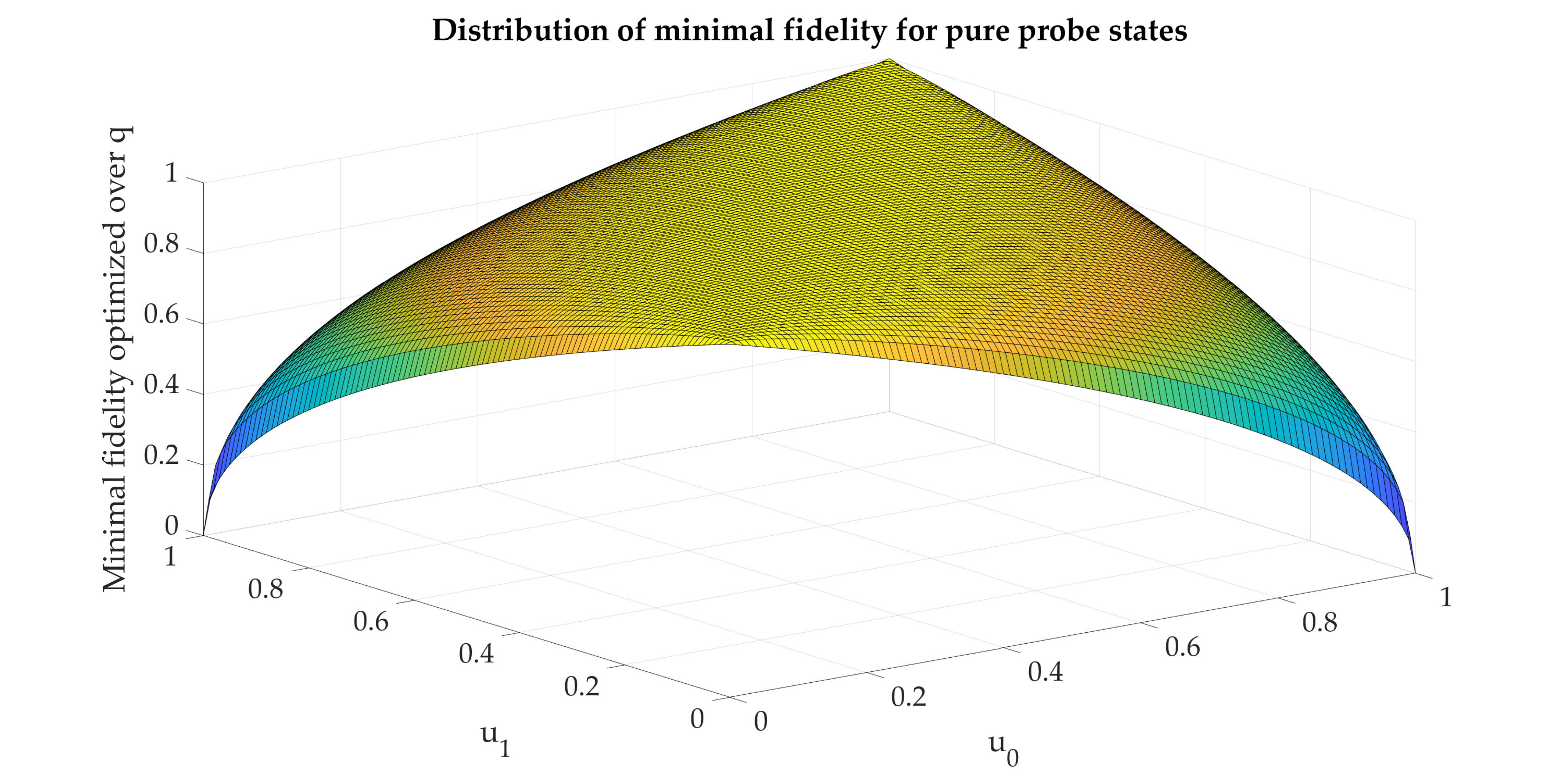}
		\caption{Minimal value of fidelity w.r.t. $q$.}
		\label{fig:sub1}
	\end{subfigure}%
	\begin{subfigure}{.55\textwidth}
		\centering
		\includegraphics[width=\linewidth]{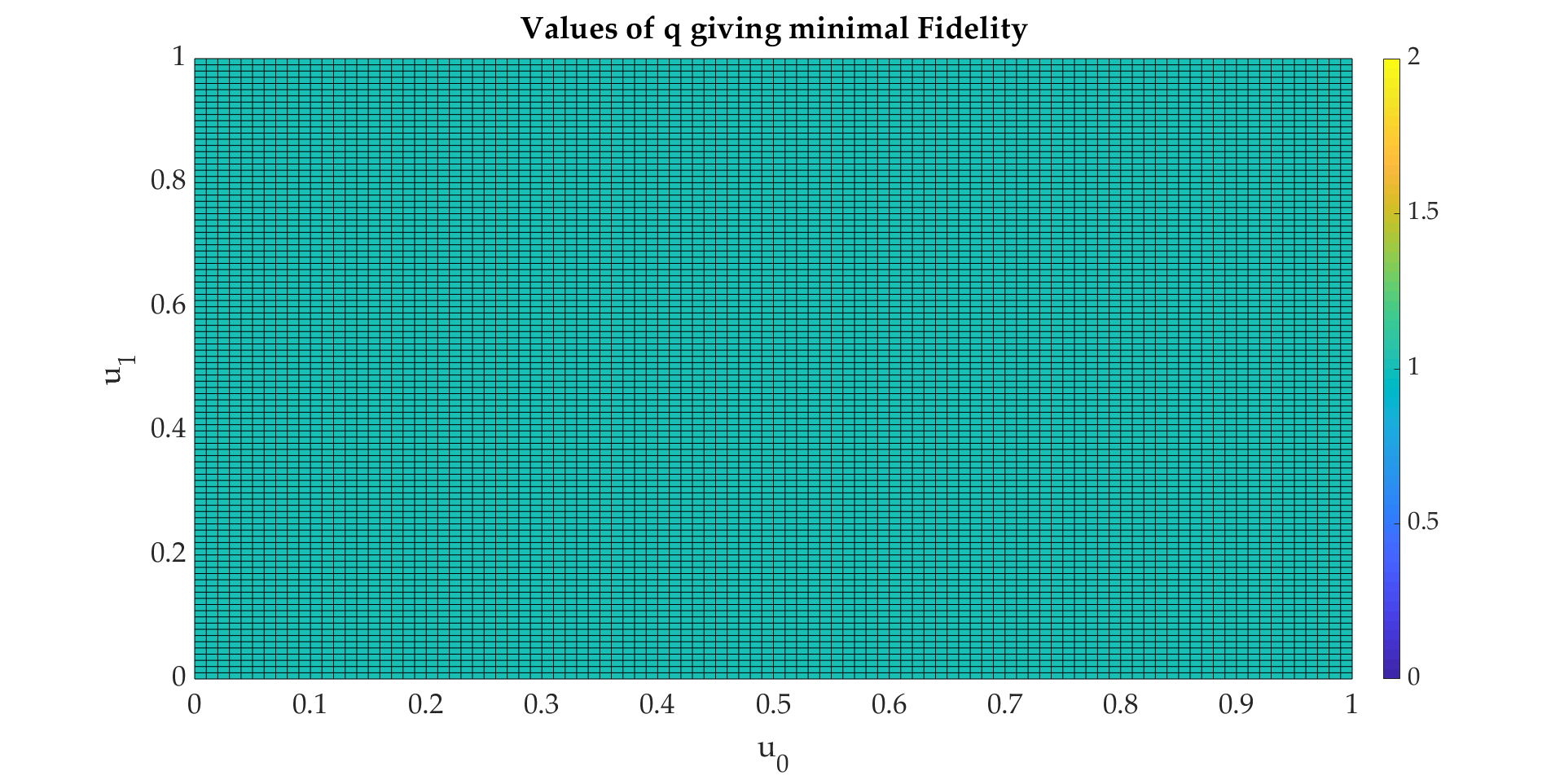}
		\caption{Distribution of $q$ giving the minimal value of fidelity.}
		\label{fig:sub2}
	\end{subfigure}
	\caption{Analysis of fidelity for the pure probe state.}
	\label{Fig:FidelityPureState}
\end{figure}

Considering mixed probe states, the same analysis can be done. There is a difference now on how to probe the channel. Since we have a bipartite entangled state, it can be used to probe the 
channel once or twice. The channel output in both of these situations is given by
\begin{eqnarray}
\textstyle
	\mathcal{W}^{u_i}\otimes\text{id}_{A_2}(\ketbra{\Psi^{xz}}{\Psi^{xz}}) &=& \Big{[} \Big{(} (1-q)\delta_{0x}\ketbra{0}{0}_{A_2} + q\delta_{1x}\ketbra{1}{1}_{A_2} \Big{)} \nonumber\\
	                                         &+& u_i\Big{(} (1-q)\delta_{1x}\ketbra{0}{0}_{A_2} + q\delta_{0x}\ketbra{1}{1}_{A_2}\Big{)}\Big{]}\ketbra{0}{0}_{A_1} \nonumber\\
	                                         &+& (\delta_{0z}-\delta_{1z})\sqrt{(1-u_i)q(1-q)}\Big{[}\delta_{0x}\ketbra{0}{1}_{A_2} + \delta_{1x}\ketbra{1}{0}_{A_2}\Big{]}\ketbra{0}{1}_{A_1}\nonumber\\
	                                         &+& (\delta_{0z}-\delta_{1z})\sqrt{(1-u_i)q(1-q)}\Big{[}\delta_{1x}\ketbra{0}{1}_{A_2} + \delta_{0x}\ketbra{1}{0}_{A_2}\Big{]}\ketbra{1}{0}_{A_1}\nonumber\\
	                                         &+& (1-u_i)\Big{[} (1-q)\delta_{1x}\ketbra{0}{0}_{A_2} + q\delta_{0x}\ketbra{1}{1}_{A_2} \Big{]}\ketbra{1}{1}_{A_1},
\label{Eq:HalfProbed}
\end{eqnarray}where $i=0,1$ and $\delta_{ab}$ is the Kronecker delta, and
\begin{eqnarray}
\textstyle
	\mathcal{W}^{u_0}\otimes\mathcal{W}^{u_1}(\ketbra{\Psi^{xz}}{\Psi^{xz}}) &=& [(1-q)\delta_{0x}+u_0(1-q)\delta_{1x}]\ketbra{0}{0}_{A_1}\otimes\ketbra{0}{0}_{A_2}\nonumber\\
											 &+& q(u_0\delta_{0x}+\delta_{1x})\ketbra{0}{0}_{A_1}\otimes[u_1\ketbra{0}{0}_{A_2} + (1-u_1)\ketbra{1}{1}_{A_2}]\nonumber\\
	                                         &+& (-1)^z\sqrt{(1-u_0)(1-u_1)q(1-q)}[\delta_{0x}\ketbra{0}{1}_{A_1}+\delta_{1x}\ketbra{1}{0}_{A_1}]\otimes\ketbra{0}{1}_{A_2}\nonumber\\
	                                         &+& (-1)^z\sqrt{(1-u_0)(1-u_1)q(1-q)}[\delta_{0x}\ketbra{1}{0}_{A_1}+\delta_{1x}\ketbra{0}{1}_{A_1}]\otimes\ketbra{1}{0}_{A_2}\nonumber\\
	                                         &+& (1-u_0)q\delta_{0x}\ketbra{1}{1}_{A_1}\otimes[u_1\ketbra{0}{0}_{A_2}+(1-u_1)\ketbra{1}{1}_{A_2}]\nonumber\\
	                                         &+& (1-u_0)(1-q)\delta_{1x}\ketbra{1}{1}_{A_1}\otimes\ketbra{0}{0}_{A_2}.
\label{Eq:FullProbed}
\end{eqnarray}For each value of $x,z$, we minimize the fidelity with respect to $q$ considering these two cases, which we call half-probed and full-probed cases. 
The first one corresponds to Eq.~\ref{Eq:HalfProbed} and the second to Eq.~\ref{Eq:FullProbed}. Our first description is over the half-probed case. See Figs.~\ref{Fig:HalfFidelityPsi10} and 
\ref{Fig:HalfFidelityPsi00} below. Notice that we have shown just the instance $z = 0$, since it gives the same results as for $z=1$. 

\begin{figure}[h!]
	\hspace{-.5cm}
	\begin{subfigure}{.55\textwidth}
		\centering
		\includegraphics[width=\linewidth]{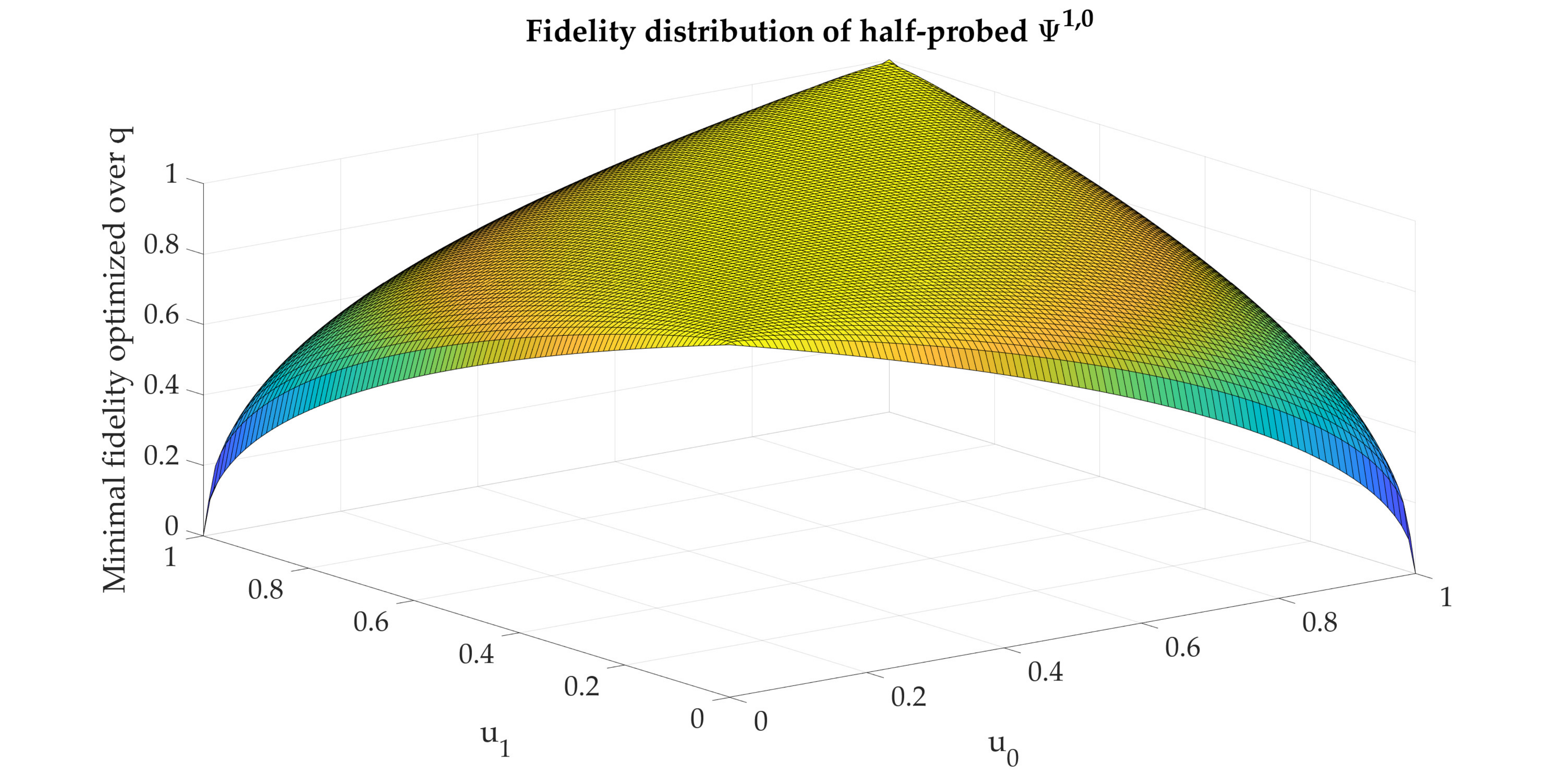}
		\caption{Minimal value of fidelity w.r.t. $q$.}
		\label{fig:sub1}
	\end{subfigure}%
	\begin{subfigure}{.55\textwidth}
		\centering
		\includegraphics[width=\linewidth]{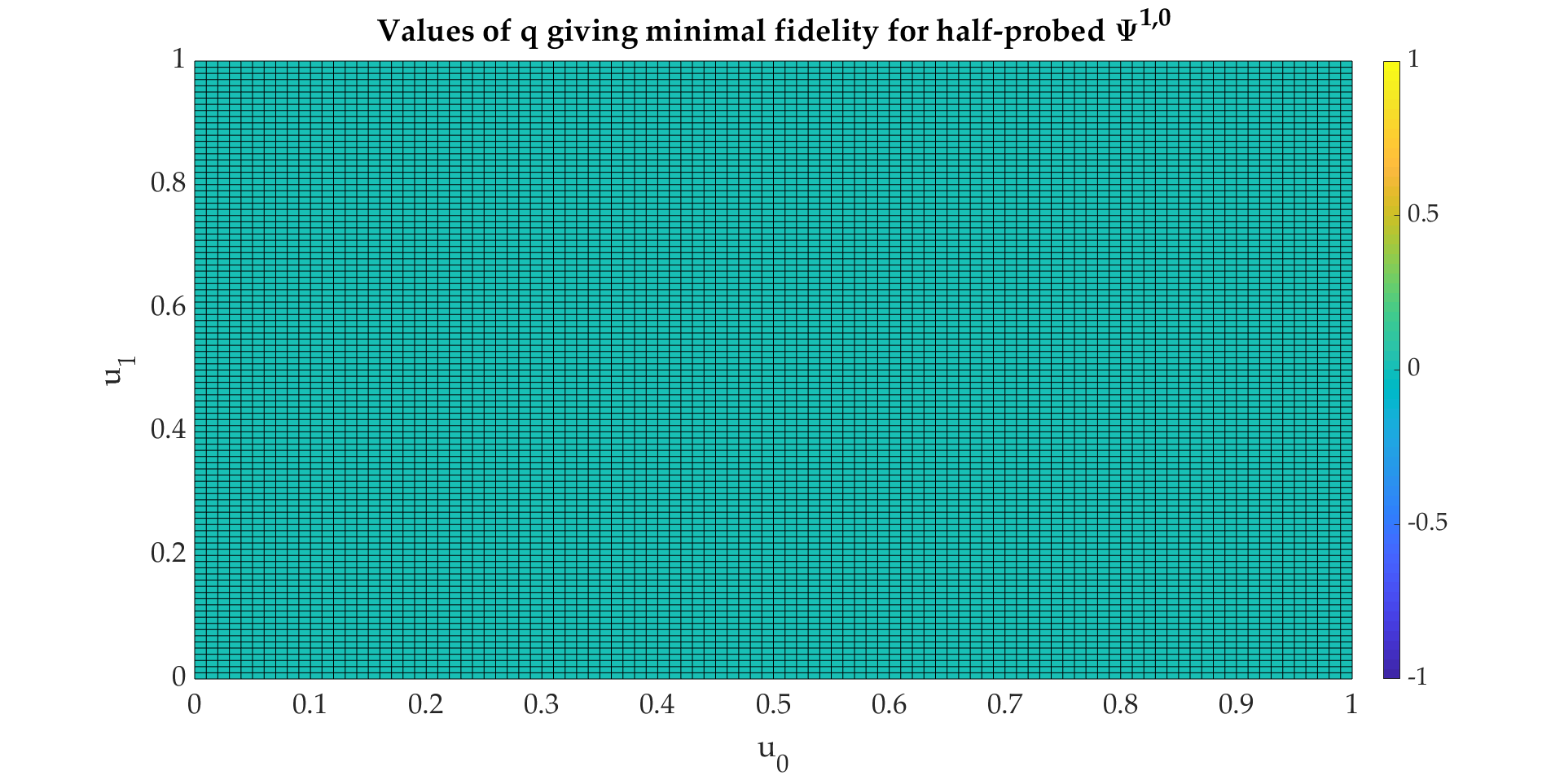}
		\caption{Distribution of $q$ giving the minimal value of fidelity.}
		\label{fig:sub2}
	\end{subfigure}
	\caption{Analysis of fidelity for the half-probed state $\ket{\Psi^{1,0}}$}
	\label{Fig:HalfFidelityPsi10}
\end{figure}

\begin{figure}[h!]
	\hspace{-.5cm}
	\begin{subfigure}{.55\textwidth}
		\centering
		\includegraphics[width=\linewidth]{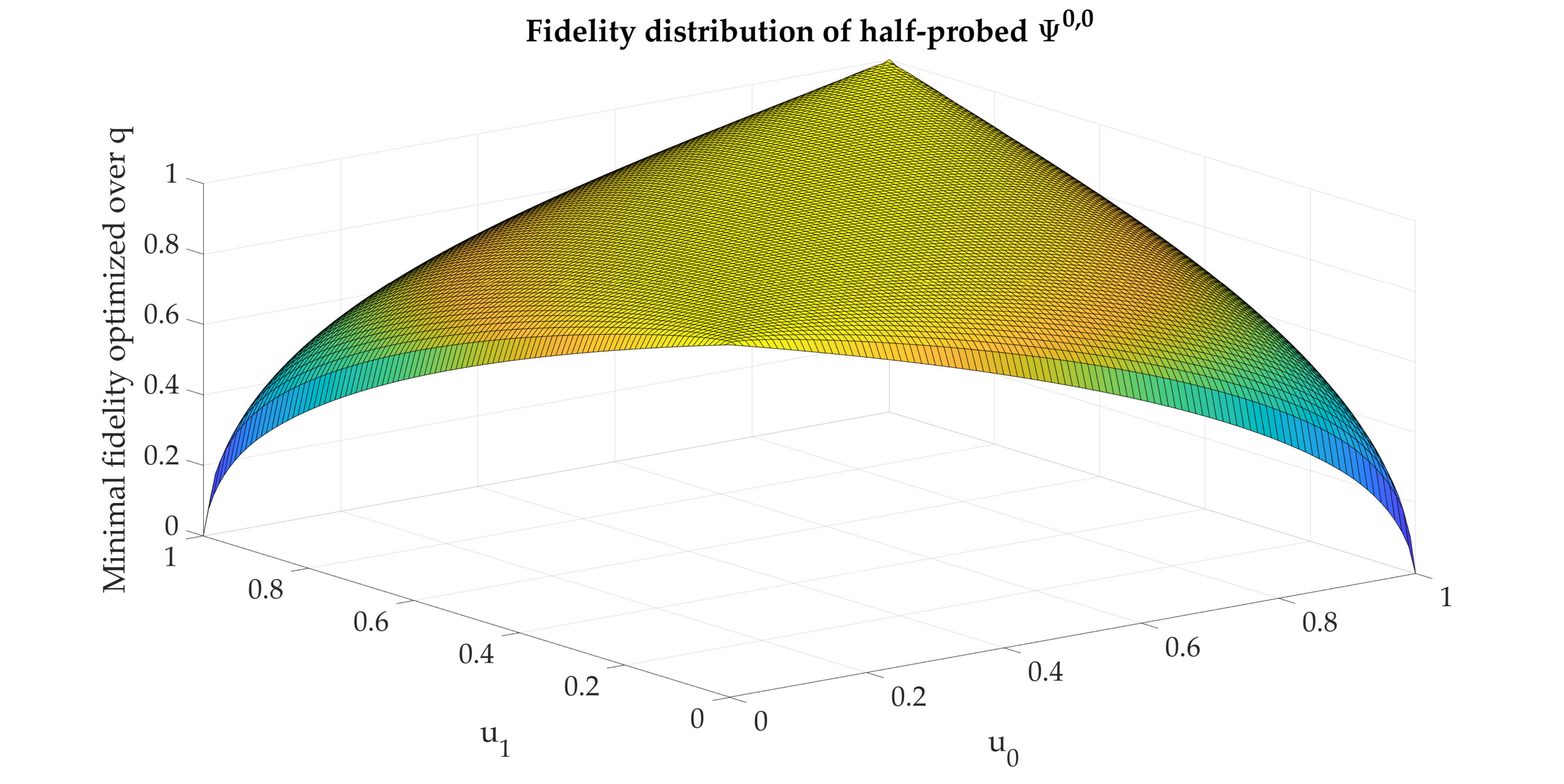}
		\caption{Minimal value of fidelity w.r.t. $q$.}
		\label{fig:sub1}
	\end{subfigure}%
	\begin{subfigure}{.55\textwidth}
		\centering
		\includegraphics[width=\linewidth]{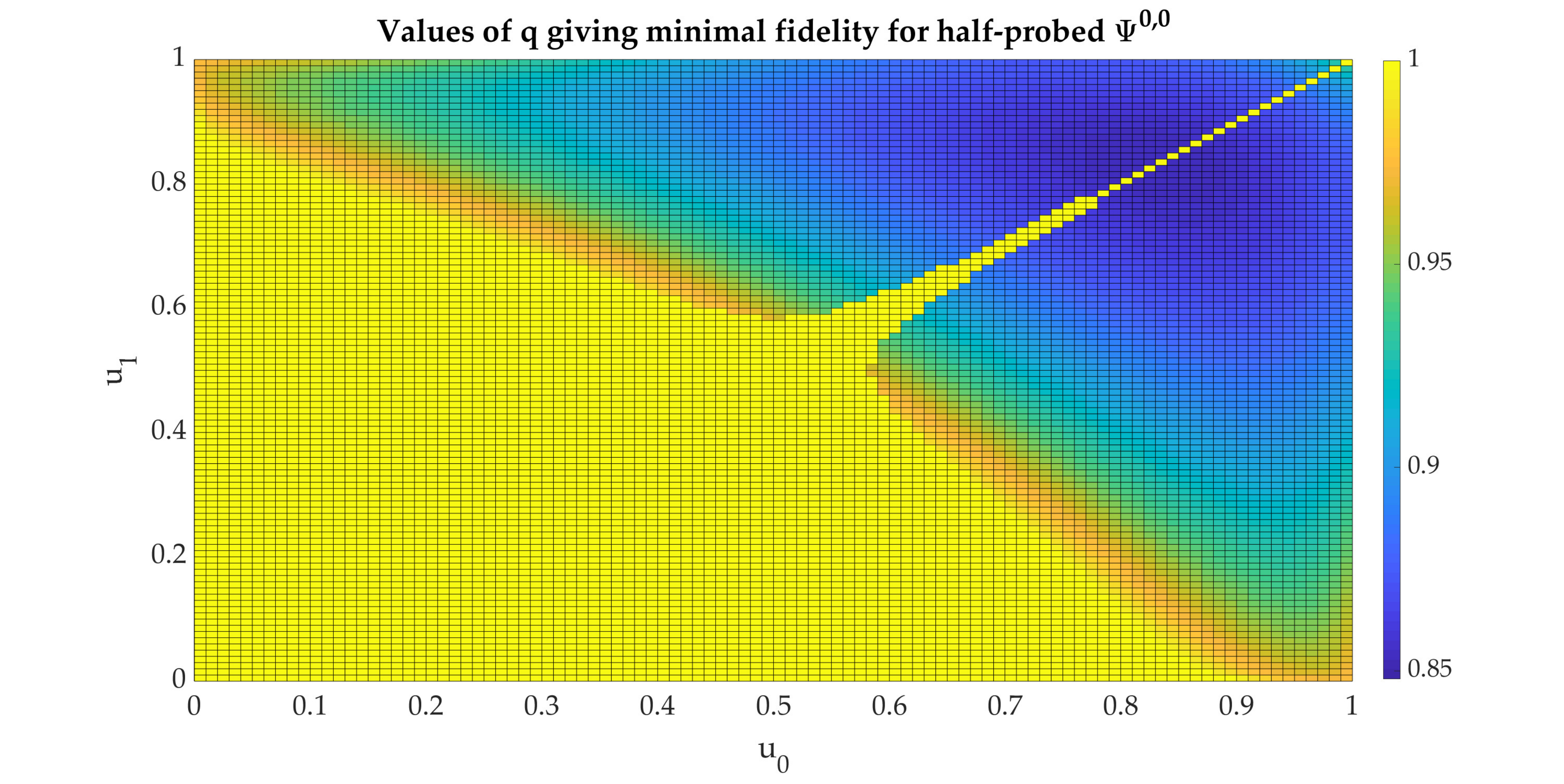}
		\caption{Distribution of $q$ giving the minimal value of fidelity.}
		\label{fig:sub2}
	\end{subfigure}
	\caption{Analysis of fidelity for the half-probed state $\ket{\Psi^{0,0}}$.}
	\label{Fig:HalfFidelityPsi00}
\end{figure}

\begin{figure}
		\centering
		\includegraphics[width=0.8\linewidth]{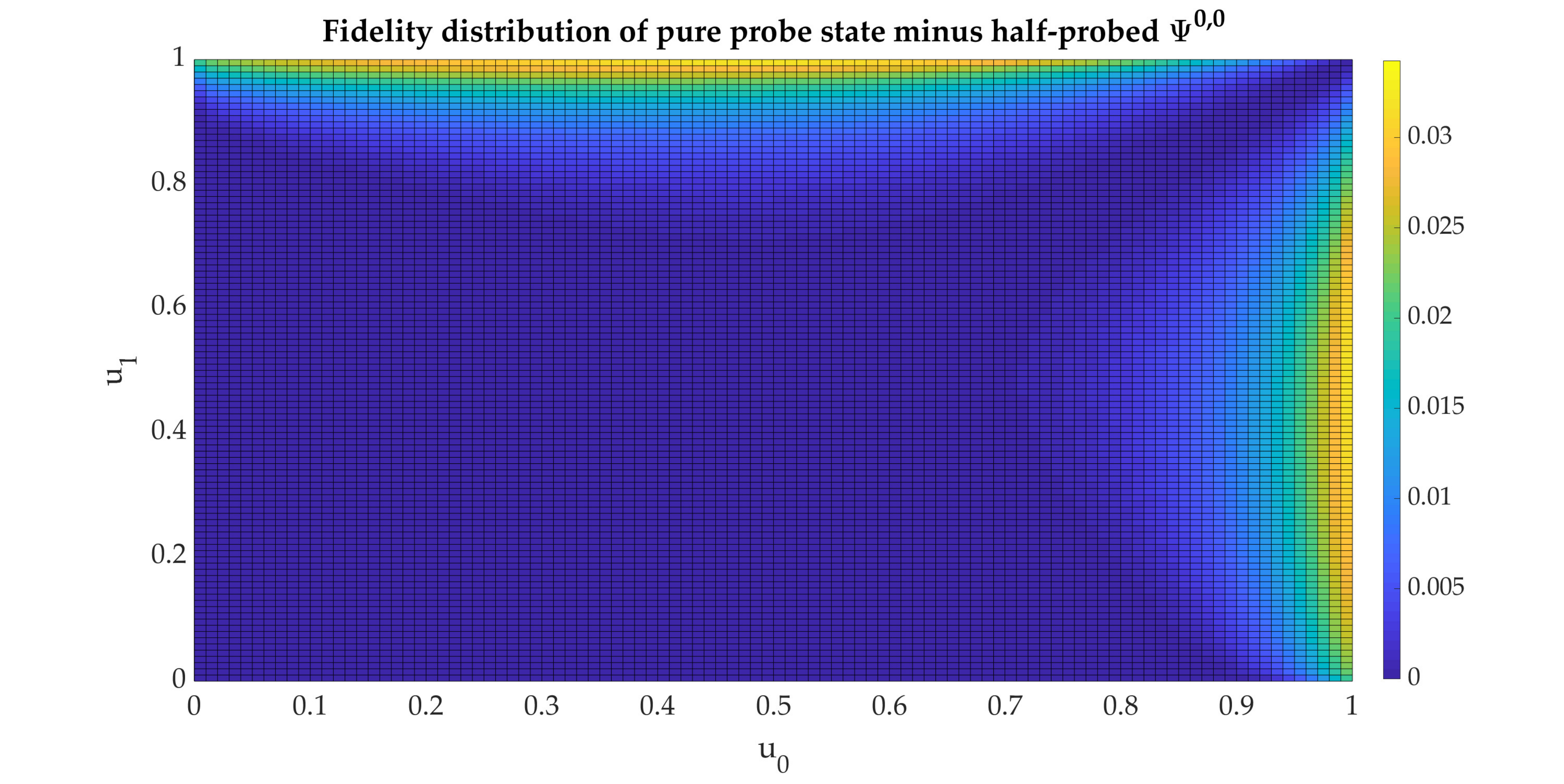}
		\caption{Fidelity of pure state minus fidelity of half-probed state $\ket{\Psi^{0,0}}$.}
		\label{Fig:FidelityPure-HalfFidelityPsi00}
\end{figure}

Two conclusions can be obtained from Figs.~\ref{Fig:HalfFidelityPsi10} and \ref{Fig:HalfFidelityPsi00}. Firstly, the shape of the fidelity of half-probed state $\ket{\Psi^{10}}$ is exactly the 
same as for the pure probe state case. This is clear from the fact that the value of $q$ giving maximal fidelity is $q = 0$, which turns the probe state to be $\ket{1}$. 
As the second point, we see in Figs.~\ref{Fig:HalfFidelityPsi10} and \ref{Fig:HalfFidelityPsi00} that half-probing with the state $\ket{\Psi^{00}}$ can give some improvement in 
the channel discrimination task. This improvement is obtained where $q\neq 1$. However, the difference is not so significant, making the use of $\ket{\Psi^{00}}$ justified only if 
entanglement cost is negligible.

The subsequent analysis is over a full-probed state. In opposition to what has been shown in the half-probed case, there is no improvement in using full-probed $\ket{\Psi^{00}}$. 
Fig.~\ref{Fig:fullFidelityPsi00} shows that $q=1$ gives the minimal value of fidelity in this case. This leads to the fidelity having values equal to the square of the 
pure probe state. Thus, optimizing full-probed strategy using $\ket{\Psi^{00}}$ leads to the same strategy for pure probe state. Lastly, we consider full-probed strategy 
using $\ket{\Psi^{10}}$. We have $q=0$ giving the optimal value of fidelity. A consequence of this is that now we probe the quantum channel twice using the state $\ket{0}$ and 
$\ket{1}$ in each round. Since for pure probe, we are using the state $\ket{1}$, then the round probing with $\ket{0}$ contribute degrading the fidelity. Therefore, the optimal fidelity 
obtained using full-probed $\ket{\Psi^{10}}$ is much higher than using full-probed $\ket{\Psi^{00}}$,
as can be seen in Fig.~\ref{Fig:fullFidelityPsi10-fullFidelityPsi00}.

\begin{figure}[h!]
	\hspace{-.5cm}
	\begin{subfigure}{.55\textwidth}
		\centering
		\includegraphics[width=\linewidth]{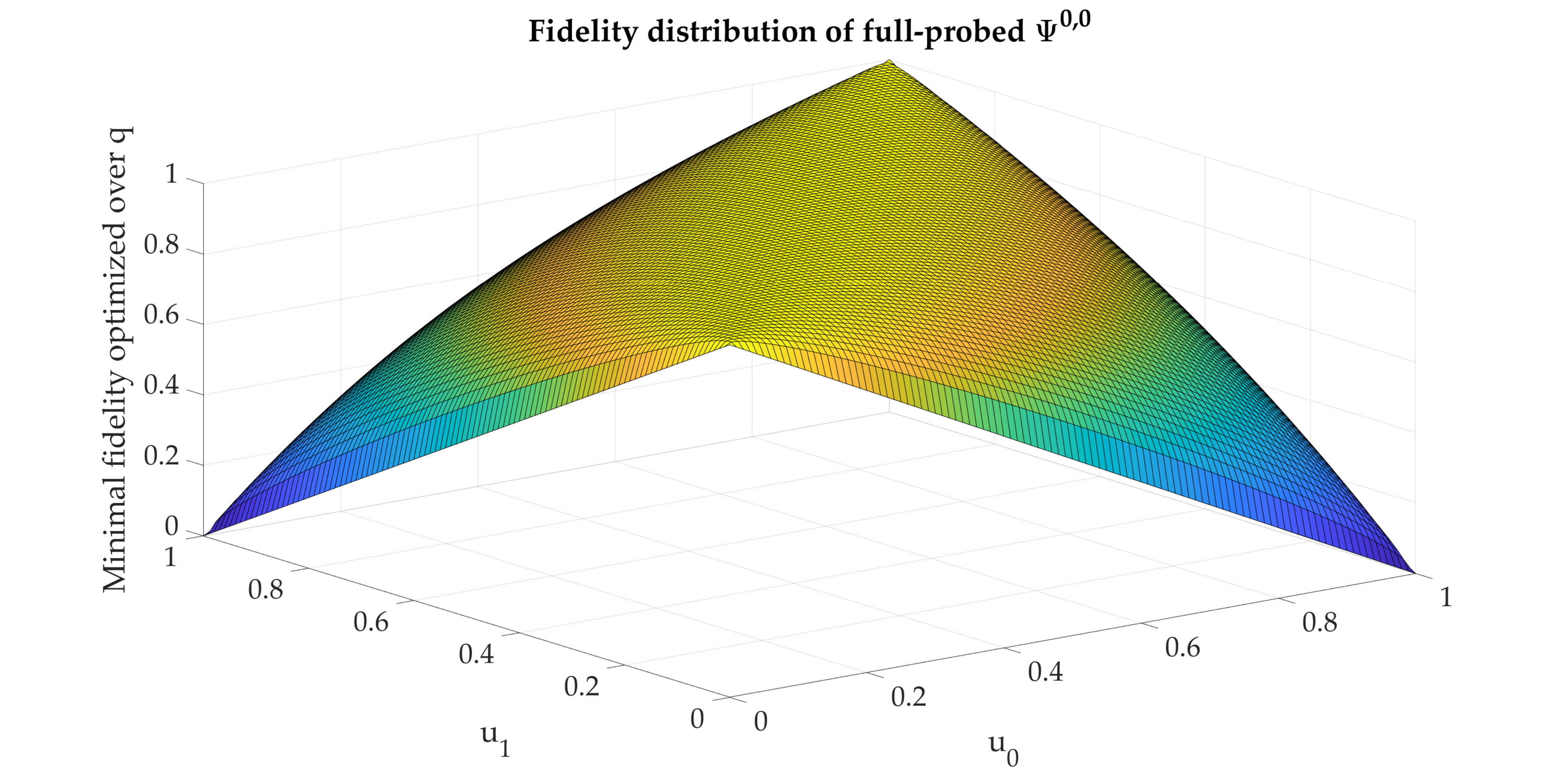}
		\caption{Minimal value of fidelity w.r.t. $q$.}
		\label{fig:sub1}
	\end{subfigure}%
	\begin{subfigure}{.55\textwidth}
		\centering
		\includegraphics[width=\linewidth]{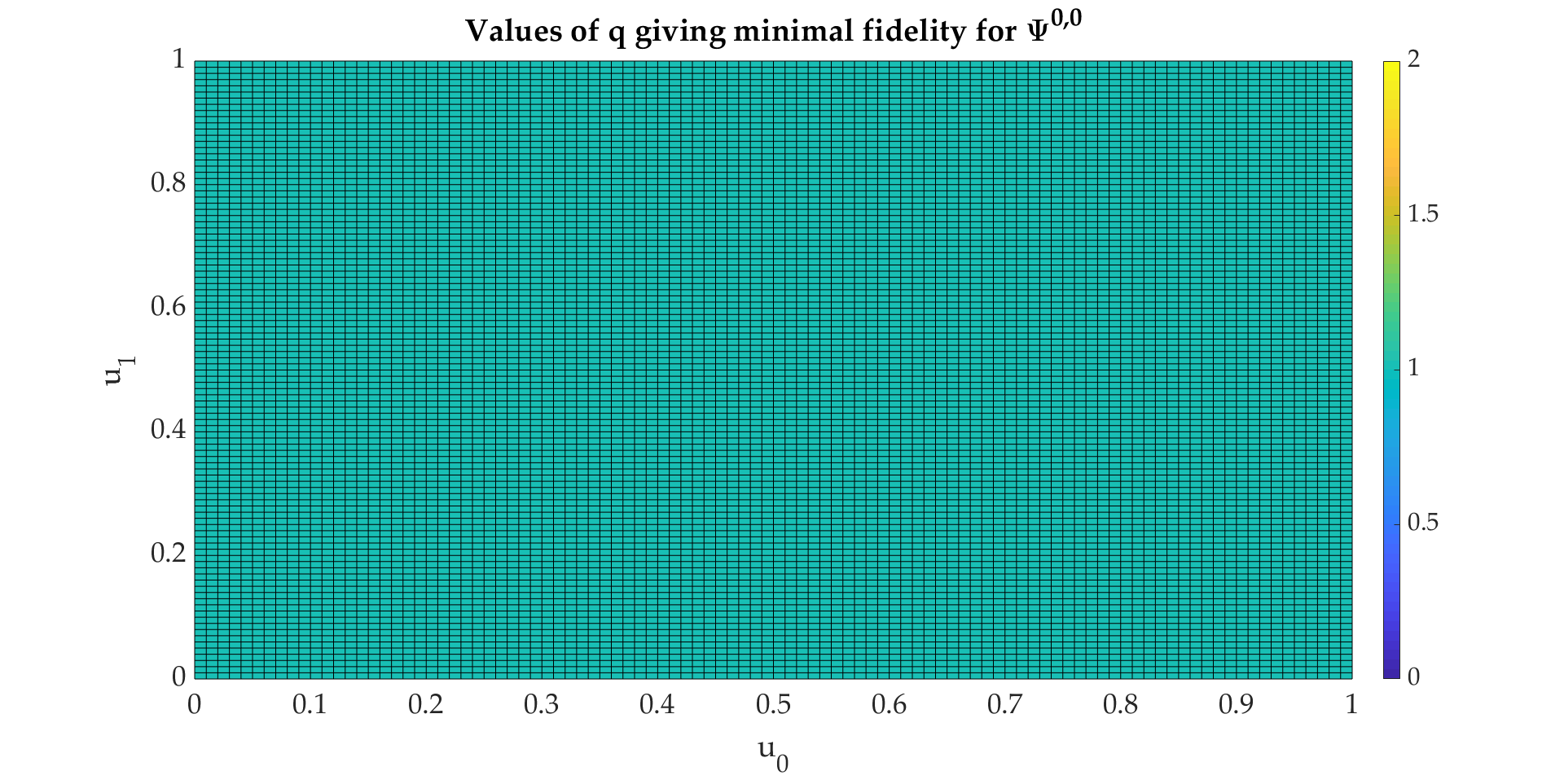}
		\caption{Distribution of $q$ giving the minimal value of fidelity.}
		\label{fig:sub2}
	\end{subfigure}
	\caption{Analysis of fidelity for the full-probed state $\ket{\Psi^{0,0}}$.}
	\label{Fig:fullFidelityPsi00}
\end{figure}

\begin{figure}[h!]
	\hspace{-.5cm}
	\begin{subfigure}{.55\textwidth}
		\centering
		\includegraphics[width=\linewidth]{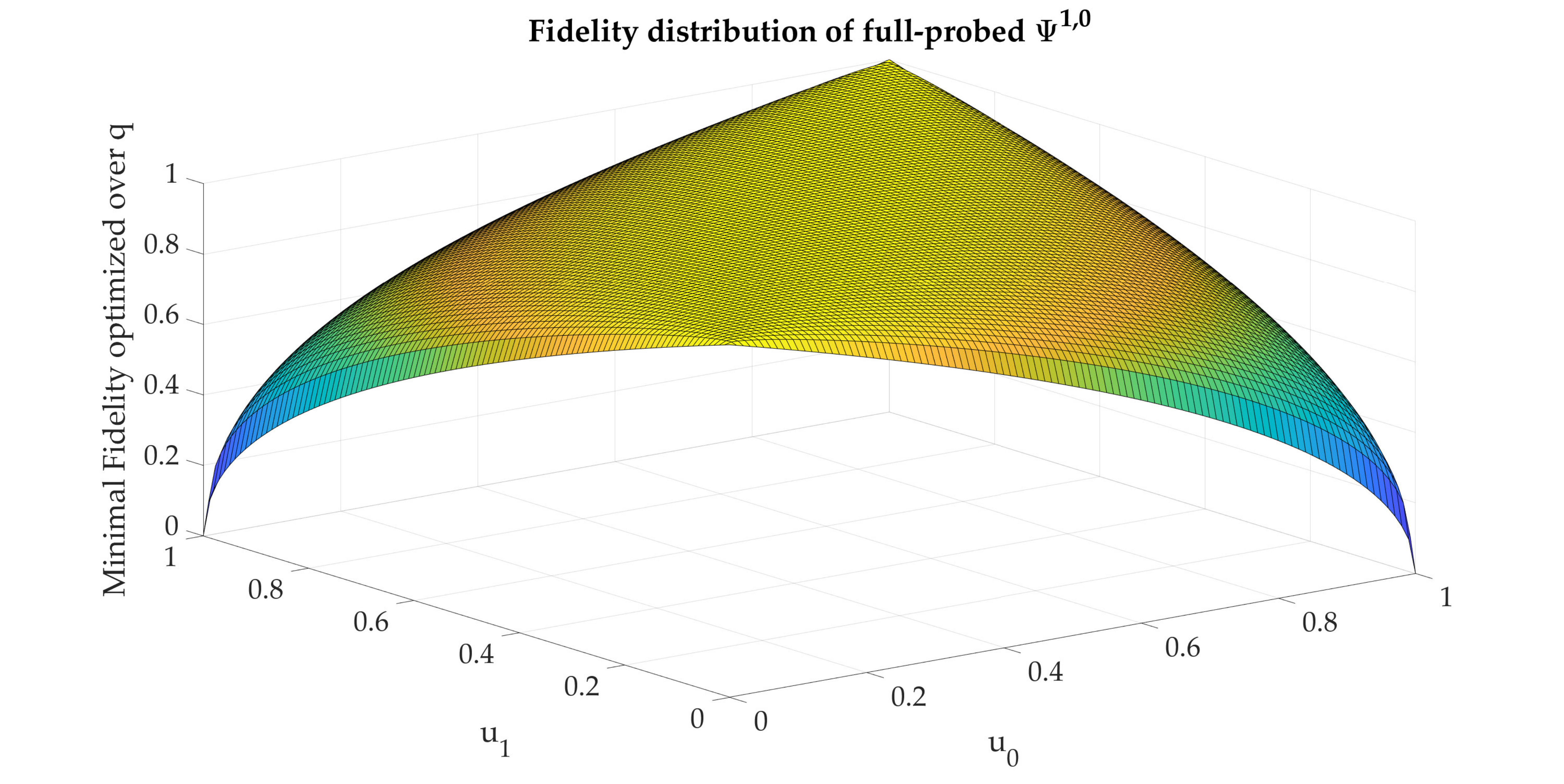}
		\caption{Minimal value of fidelity w.r.t. $q$.}
		\label{fig:sub1}
	\end{subfigure}%
	\begin{subfigure}{.55\textwidth}
		\centering
		\includegraphics[width=\linewidth]{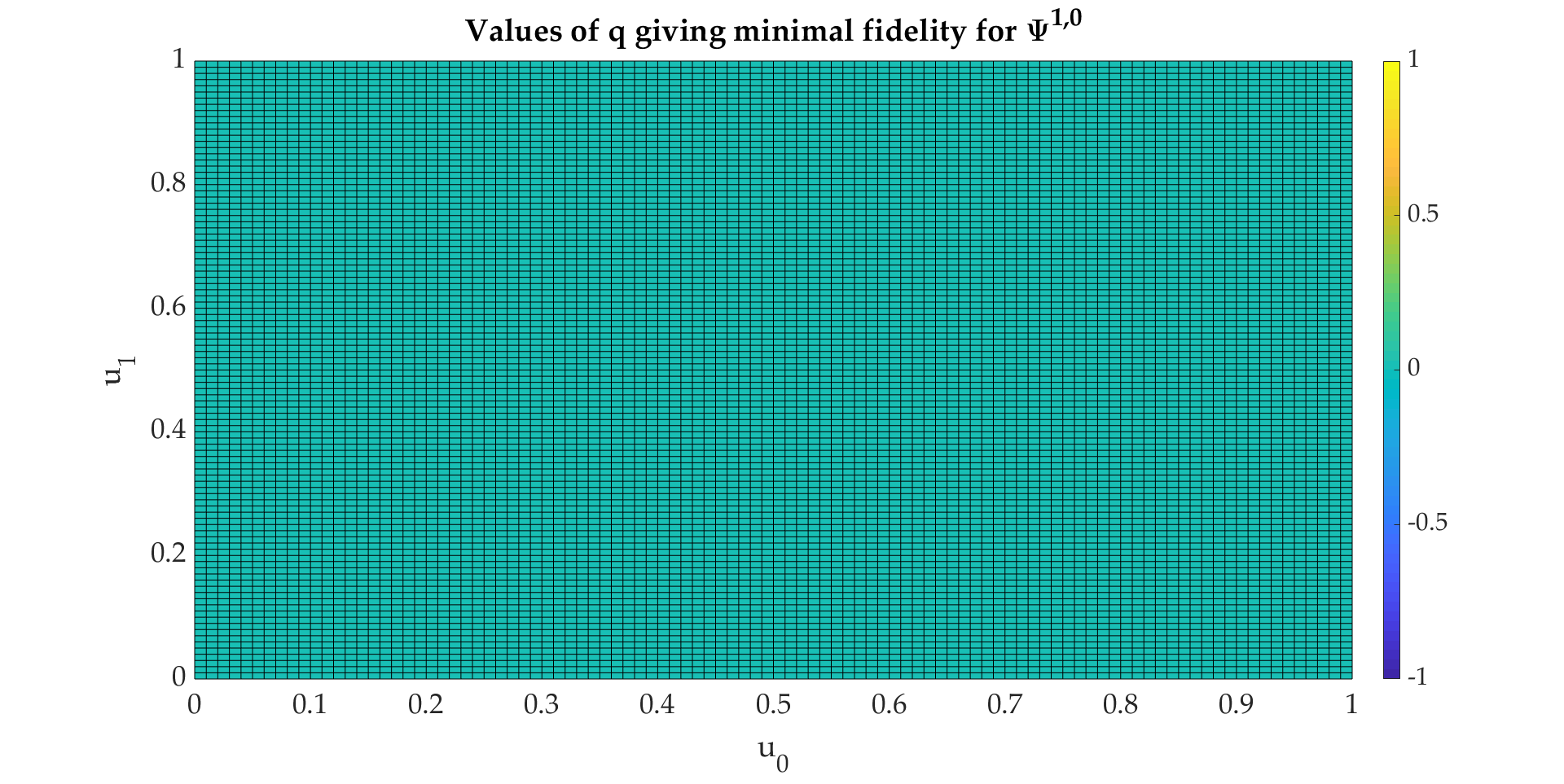}
		\caption{Distribution of $q$ giving the minimal value of fidelity.}
		\label{fig:sub2}
	\end{subfigure}
	\caption{Analysis of fidelity for the full-probed state $\ket{\Psi^{1,0}}$}
	\label{Fig:FidelityPsi10}
\end{figure}

\begin{figure}[h!]
		\centering
		\includegraphics[width=0.8\linewidth]{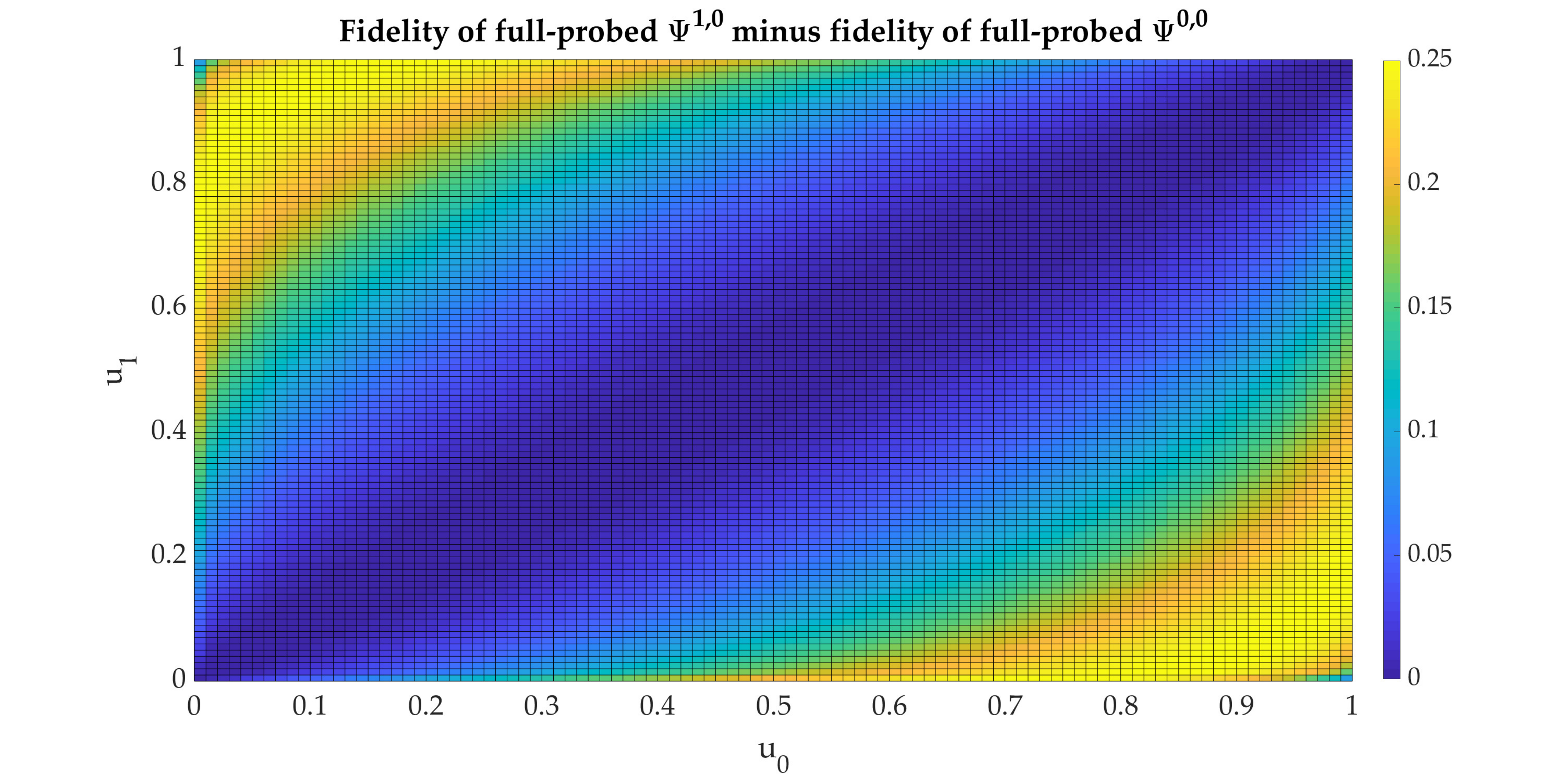}
		\caption{Fidelity of full-probed state $\ket{\Psi^{1,0}}$ minus fidelity of full-probed state $\ket{\Psi^{0,0}}$.}
		\label{Fig:fullFidelityPsi10-fullFidelityPsi00}
\end{figure}

The previous results can be extended to the mutual information between the classical system $X$ and the quantum channel output system $B$. 
In Proposition~\ref{Prop:MutualInformationAndReliability} we have seen that rate and reliability are connected in a way that when the rate 
is close to its maximum, then the reliability is close to its minimum, and vice versa. Thus, we can conclude the following. For channels with 
values of $u_0$ and $u_1$ close, the mutual information will be close to zero. On the other hand, when one of the $u_i$, $i=0,1$, is close to unit 
and the other is close to zero, the mutual information attains its maximum value possible. Furthermore, probing the quantum memory cell using 
the half-probed $\ket{\Psi^{00}}$ strategy is again the optimal procedure for quantum reading.

\section{Final Remarks}
\label{sec:Conclusion}
We have demonstrated a new polar coding scheme for quantum memory cell discrimination. 
To achieve this goal, we had to introduce new definitions on the rate and reliability of a 
quantum channel. The polarization phenomenon produced by channel combining and analyzed in 
the channel splitting part has been shown for these two quantities. This was established due to an 
inequality connecting both of them. In the channel splitting part, we have also introduced the synthesized 
channels created by the polar coding. It had been shown that when the number of channels is arbitrarily large, 
the set of synthesized channels can be divided in two groups, good and bad channels. Additionally, the fraction 
of such channels is related to the mutual information of the original quantum channel in consideration. This result 
has motivated us to construct an optimal encoding scheme for quantum reading. A decoding scheme 
was introduced and analyzed, as well. Using an existence proof of pretty good measurements given in this paper and a 
previous quantum union bound from the literature, we have shown that our decoding scheme has error probability that 
decays exponentially fast with respect to the code length. At the end, optimizations over probe states are investigated, 
leading to the conclusion of half-probed states as the best choice, in general, of probe states to be used. 

This paper has given some future investigation topics. A question that can be stated is how to apply polar 
coding to more general classical and quantum systems. First of all, we can extend the binary discrimination to a $d$-ary discrimination 
task. The mathematical equivalence of this is considering a classical system with a larger alphabet. Some classical polar codes 
have been proposed in the literature and, with the proper adjustments, applied here. Secondly, we can consider the set of 
quantum memory cells to be composed of generalized amplitude damping channel. This class of channel can be seen as a second-order 
approximation of classical digital memory, where the model takes the environment temperature into consideration. In this open question, 
the task would be to find the optimal probe state for channel discrimination. Third, and more important, how can one construct efficient 
polar codes for Gaussian bosonic channel discrimination. This still a research topic even for classical Gaussian channel. 
Gaussian bosonic channel is the ultimate goal as quantum channel model for classical digital memories. Because this class of 
channels is defined over continuous variables, attacking it needs to be two-fold. The polar code scheme has to take into account 
the channel to prove the polarization phenomenon. Additionally, an optimization over the probe states is necessary under 
energy constrain. These two approaches are not independent, which makes the task even more difficult. 
Lastly, we do not know if there exists a provable optimal probe state in the contexts mentioned. Uniqueness also needs to be verified.




\section{Acknowledgments}
The authors acknowledge the financial support of the Future and Emerging Technologies (FET)
programme, within the Horizon-2020 Programme of the European Commission, under the FET-Open 
grant agreement QUARTET, number 862644.

\addcontentsline{toc}{section}{References}
\nocite{*}
\printbibliography

\end{document}